\newcounter{remarks}
\newtheorem{remark}[remarks]{Remark}
\newcommand{\QED}{\hfill $\square$}
\newcommand{\mdef}[1]{{\ensuremath{#1}}\xspace}  
\newcommand{\myset}[1]{\mdef{\mathbb{#1}}}       
\newcommand{\myalg}[1]{\mdef{\mathcal{#1}}}       
\newcommand{\myfunc}[1]{\mdef{\mathsf{#1}}}      
\newcommand{\myvec}[1]{\mathbf{#1}}               
\newcommand{\myconverter}[1]   {{#1}}
\def \ds {\displaystyle}                         
\DeclareMathOperator*{\polylog}{polylog}
\newcommand{\superscript}[1]{\ensuremath{^{\mbox{\tiny{\textit{#1}}}}}\xspace}
\def \th {\superscript{th}}     
\def \N        {\mdef{\mathbb{N}}}                   
\def \Z        {\mdef{\mathbb{Z}}}                   
\def \R        {\mdef{\mathbb{R}}}                   
\def \from     {\mdef{\leftarrow}}                   
\def \l        {\mdef{\lambda}}                         
\def \negl     {\mdef{\myfunc{negl}}}                
\def \polylog  {\mdef{\myfunc{polylog}}}             
\newcommand{\abs}[1]{\mdef{\left|#1\right|}}         
\newcommand{\flr}[1]{\mdef{\left\lfloor#1\right\rfloor}}              
\newcommand{\node}[2]{\mdef{\langle#1, #2\rangle}}   
\def \G        {\mdef{\mathbb{G}}}
\def \indeg    {\mdef{\mathsf{indeg}}}
\def \sinks    {\mdef{\mathsf{sinks}}}
\def \len      {\mdef{\mathsf{length}}}
\def \depth    {\mdef{\mathsf{depth}}}
\def \parents  {\mdef{\mathsf{parents}}}
\def \ancestors{\mdef{\mathsf{ancestors}}}
\renewcommand{\-}{\mbox{-}}     
\def \seq      {\mdef{\mathsf{seq}}}
\def \smc      {\mdef{\mathsf{smc}}}      
\def \naive    {\mdef{\myalg{N}}}
\def \a        {\mdef{\alpha}}                       
\def \b        {\mdef{\beta}}                        
\def \d        {\mdef{\delta}}                       
\def \e        {\mdef{\epsilon}}                     
\def \eps      {\mdef{\epsilon}}                     
\def \s        {\mdef{\sigma}}                       
\def \t        {\mdef{\tau}}                         
\def \sfin     {\mdef{\mathsf{in}}}
\def \sfout    {\mdef{\mathsf{out}}}
\def \lab      {\mdef{\bf{\mathsf{\lowercase{lab}}}}}    
\def \bfx      {\mdef{{\bf x}}}                      
\def \bs         {\mdef{\bar{\s}}}                         
\def \bfs        {\mdef{\bf{s}}}                           
\def \bfq        {\mdef{\bf{q}}}                           
\newcommand\ares[1][\relax]{\mathcal S^{\textsc{#1}}}
\newcommand\promres[1][\relax]{\ares[$w$-prom]_{#1}}
\newcommand\sromres[1][\relax]{\ares[$w$-srom]_{#1}}
\def \prompar    {\mdef{\Pc^{\textsc{prom}}}}              
\newcommand{\getsr}{{\:{\leftarrow{\hspace*{-3pt}\raisebox{.75pt}{$\scriptscriptstyle\$$}}}\:}}
\def \out        {\mdef{\bf{\mathsf{\lowercase{out}}}}}    
\def \inp        {\mdef{\bf{\mathsf{\lowercase{in}}}}}      
\def \coins      {\mdef{\$}}                               
\def \Pc         {\mdef{\mathbb{P}}}                       
\def \lv                       {\mdef{\myvec{l}}}        
\def \rv                       {\mdef{\myvec{r}}}        
\def \A        {\mdef{\myalg{A}}}                    
\def \dist     {\mdef{\myalg{D}}}                    
\def \Hc       {\mdef{\myset{H}}}                    
\newcommand\indist[1][\relax]{\approx_{#1}}
\def \simul                    {\mdef{\myconverter{\sigma}}}  
\newcommand{\namedref}[2]{\hyperref[#2]{#1~\ref*{#2}}}
\newcommand{\thmlab}[1]{\label{thm:#1}}
\newcommand{\thmref}[1]{\namedref{Theorem}{thm:#1}}
\newcommand{\lemmlab}[1]{\label{lemm:#1}}
\newcommand{\lemmref}[1]{\namedref{Lemma}{lemm:#1}}
\newcommand{\claimlab}[1]{\label{claim:#1}}
\newcommand{\claimref}[1]{\namedref{Claim}{claim:#1}}
\newcommand{\seclab}[1]{\label{sec:#1}}
\newcommand{\secref}[1]{\namedref{Section}{sec:#1}}
\newcommand{\applab}[1]{\label{app:#1}}
\newcommand{\appref}[1]{\namedref{Appendix}{app:#1}}
\newcommand{\deflab}[1]{\label{def:#1}}
\newcommand{\defref}[1]{\namedref{Definition}{def:#1}}
\newcommand{\eqnlab}[1]{\label{eq:#1}}
\newcommand{\eqnref}[1]{\namedref{Equation}{eq:#1}}
\newcounter{theorem}
\newtheorem{theorem}{Theorem}[section]
\newtheorem{lemma}[theorem]{Lemma}
\newtheorem{claim}[theorem]{Claim}
\newtheorem{definition}[theorem]{Definition}
\newenvironment{proof}{\noindent {\sc Proof. }}{\hfill$\Box$\vskip0.2cm}
\newenvironment{proofof}[1]{\noindent {\em Proof of #1.  }}{}
\newenvironment{remindertheorem}[1]{\medskip \noindent {\bf Reminder of  #1.  }\em}{}
\newenvironment{reminderlemma}[1]{\medskip \noindent {\bf Reminder of  #1.  }\em}{}
\newcommand{\peb}{\Pi} 
\newcommand{\Peb}{{\cal P}} 
\newcommand{\ppeb}{\peb^{\parallel}} 
\newcommand{\pPeb}{\Peb^{\parallel}}
\newcommand{\Cspace}{s}
\newcommand{\Ctime}{t}
\newcommand{\Ccc}{{cc}}
\newcommand{\Css}{{ss}}
\newcommand{\Cbm}{{bm}}
\newcommand{\Cspacetime}{{st}}
\newcommand{\Cevery}{\{\Cspace,\Ctime,\Cspacetime,\Ccc\}}
\def \pcc {\ppeb_\Ccc} 
\title{Sustained Space Complexity}
\author{Jo\"el Alwen\thanks{IST Austria} \and Jeremiah Blocki\thanks{Purdue University} \and Krzysztof Pietrzak\footnotemark[1]}
\begin{document}

\maketitle

\begin{abstract}
%
Memory-hard functions (MHF) are functions whose evaluation cost is dominated by memory cost.  
MHFs are egalitarian, in the sense that evaluating them on dedicated hardware (like FPGAs or ASICs) is not much cheaper than on 
 off-the-shelf hardware (like x86 CPUs).  MHFs have interesting cryptographic applications, most notably to password hashing and securing blockchains.

Alwen and Serbinenko [STOC'15] define the cumulative memory complexity (cmc) of a function as the sum (over all time-steps) of the amount of memory required to compute the function. They advocate that a good MHF must have high cmc. 
 Unlike previous notions, cmc takes into account that dedicated hardware might exploit amortization and parallelism. 
Still, cmc has been critizised as insufficient, as it  fails to capture possible time-memory trade-offs; as memory cost doesn't scale linearly, functions with the same cmc could still have very different actual hardware cost. 

In this work we address this problem, and introduce the notion of sustained-memory complexity, which requires that any algorithm evaluating the function must use a large amount of memory for many steps. We construct functions (in the parallel random oracle model) whose sustained-memory complexity is  almost optimal: our function can be evaluated using $n$ steps and $O(n/\log(n))$ memory, in each step making one query to the (fixed-input length) random oracle, while any algorithm that can make arbitrary many parallel queries to the random oracle, still needs $\Omega(n/\log(n))$ memory for $\Omega(n)$ steps. 

As has been done for various notions (including cmc) before, we reduce the task of constructing an MHFs with high sustained-memory complexity to proving pebbling lower bounds on DAGs. Our main technical contribution is the construction is a family of DAGs on $n$ nodes with constant indegree with high ``sustained-space complexity", meaning that any parallel black-pebbling strategy requires $\Omega(n/\log(n))$ pebbles for at least $\Omega(n)$ steps. 

Along the way we construct a family of maximally ``depth-robust" DAGs with maximum indegree $O(\log n)$, improving upon the construction of Mahmoody et al. [ITCS'13] which had maximum indegree $O\left(\log^2 n \cdot \polylog(\log n)\right)$.

\end{abstract}

\section{Introduction}\seclab{intro}
%
In cryptographic settings we typically consider tasks which can be done efficiently by honest parties, but are infeasible for potential adversaries. This requires an asymmetry in the capabilities of honest and dishonest parties. 
An example are trapdoor functions, where the honest party -- who knows the secret trapdoor key -- can efficiently invert the function, whereas a potential adversary -- who does not have this key -- cannot. 

\subsection{Moderately-Hard Functions}
Moderately hard functions consider a setting where there's no asymmetry, or even worse, the adversary has more capabilities than the honest party. What we want is that the honest party can evaluate the function with some reasonable amount of resources, whereas the adversary should not be able to evaluate the function at significantly lower cost. 
Moderately hard functions have several interesting cryptographic applications, including securing blockchain protocols and for password hashing. 

An early proposal for password hashing is the ``Password Based Key Derivation Function 2" ({\sf PBKDF2})  \cite{kaliski2000pkcs}. This function just iterates a cryptographic hash function like {\sf SHA1} several times ($1024$ is a typical value). Unfortunately {\sf PBKDF2} doesn't make for a good moderately hard function, as evaluating a cryptographic hash function on dedicated hardware like ASCIs (Application Specific Integrated Circuits) can be by several orders of magnitude cheaper in terms of hardware and energy cost than evaluating it on a standard x86 CPU. There have been several suggestions how to construct better, i.e., more ``egalitarian'', moderately hard functions. We discuss the most prominent suggestions below.
\vspace{-.1cm}
\paragraph{Memory-Bound Functions}
Abadi et al. \cite{NDSS:AbadiBW03} observe that the time required to evaluate a function is dominated by the number of cache-misses, and these slow down the computation by about the same time over different architectures. They propose \emph{memory-bound} functions, which are functions that will incur many expensive cache-misses (assuming the cache is not too big). 
They propose a construction which is not very practical as it requires a fairly large (larger than the cache size) incompressible string as input. Their function is then basically pointer jumping on this string. In subsequent work~\cite{DGN03} it was shown that this string can also be locally generated from a short seed.
\vspace{-.1cm}
\paragraph{Bandwidth-Hard Functions}
Recently Ren and Devadas \cite{cryptoeprint:2017:225} suggest the notion of \emph{bandwidth-hard} functions, which is a refinement of memory-bound functions. A major difference being that in their model computation is not completely free, and this assumption -- which of course is satisfied in practice -- allows for much more practical solutions. They also don't argue about evaluation time as~\cite{NDSS:AbadiBW03}, but rather the more important energy cost; the energy spend for evaluating a 
function consists of energy required for on chip computation and memory accesses, only the latter is similar on various platforms. In a bandwidth-hard function the memory accesses dominate the energy cost on a standard CPU, and thus the function cannot be evaluated at much lower energy cost on an ASICs as on a standard CPU.
 \vspace{-.1cm}
\paragraph{Memory-Hard Function}
Whereas memory-bound and bandwidth-hard functions aim at being egalitarian in terms of time and energy, memory-hard functions (MHF), proposed by Percival~\cite{Per09}, aim at being egalitarian in terms of hardware cost. 
A memory-hard function, in his definition, is one where the memory used by the algorithm, multiplied by the amount of time, is high, i.e., it has high space-time (ST) complexity. Moreover, parallelism should not help to evaluate this function at significantly lower cost by this measure. 
The rationale here is that the hardware cost for evaluating an MHF is dominated by the memory cost, and as memory cost does not vary much over different architectures, the hardware cost for evaluating MHFs is not much lower on ASICs than on standard CPUs.

\vspace{-.1cm}
\paragraph{Cumulative Memory Complexity}
Alwen and Serbinenko~\cite{AS15} observe that ST complexity misses a crucial point, amortization. A function might 
have high ST complexity because at some point during the evaluation the space requirement is high, but for most of the time a small memory is sufficient. As a consequence, ST complexity is not multiplicative: a function can have ST complexity $C$, but evaluating $X$ instances of the function can be done with ST complexity much less than $X\cdot C$, so the amortized ST cost is much less than $C$. Alwen and Blocki~\cite{AB16,AB16b} later showed that prominent MHF candidates such as Argon2i~\cite{Argon2}, winner of the Password Hashing Competition~\cite{PHC} do not have high amortized ST complexity. 

To address this issue, \cite{AS15} put forward the notion of cumulative-memory complexity (cmc). The cmc of a function is 
the sum -- over all time steps -- of the memory required to compute the function by any algorithm. Unlike ST complexity, cmc is multiplicative.
\vspace{-.1cm}
\paragraph{Sustained-Memory Complexity} 
Although cmc takes into account amortization and parallelism, it has been observed~(e.g., \cite{TCC:RenDev16,CFRGmailinglistBill} that it still is not sufficient to guarantee egalitarian hardware cost. The reason is simple: if a function has cmc $C$, this could mean that the algorithm minimizing cmc uses some $T$ time steps  and $C/T$ memory on average, but it could also mean it uses time $100\cdot T$ and $C/100\cdot T$ memory on average. In practice this can makes a huge difference because \emph{memory cost doesn't scale linearly}. The length of the wiring required to access memory of size $M$ grows like $\sqrt{M}$ (assuming a two dimensional layout of the circuit). This means for one thing, that -- as we increase $M$ -- the latency of accessing the memory will grow as $\sqrt{M}$, and moreover the space for the wiring required to access the memory will grow like $M^{1.5}$. 

The exact behaviour of the hardware cost as the memory grows is not crucial here, just the point that cmc misses to take into account that it's not linear. In this work we introduce the notion of sustained-memory complexity, which takes this into account. Ideally, we want a function which can be evaluated by a ``na\"ive" sequential algorithm (the one used by the honest parties) in time $T$ using a memory of size $S$ where (1) $S$ should be close to $T$ and (2) any \emph{parallel} algorithm evaluating the function must use memory $S'$ for at least $T'$ steps, where $T'$ and $S'$ should be not much smaller than $T$ and $S$, respectively. 

Property (1) is required so the memory cost dominates the evaluation cost already for small values of $T$. Property (2) means that even a parallel algorithm will not be able to evaluate the function at much lower cost; any parallel algorithm must  make almost as many steps as the na\"ive algorithm during which the required memory is almost as large as the maximum memory $S$ used by the na\"ive algorithm. So, the cost of the best parallel algorithm is similar to the cost of the na\"ive sequential one, even if we don't charge the parallel algorithm anything for all the steps where the memory is below $S'$.

Ren and Devadas~\cite{TCC:RenDev16} previously proposed the notion of ``consistent memory hardness'' which requires that any {\em sequential} evaluation algorithm must either use space $S'$ for at least $T'$ steps, or the algorithm must run for a long time e.g., $T \gg n^2$. Our notion of sustained-memory complexity strengthens this notion in that we consider {\em parallel} evaluation algorithms, and our guarantees are absolute e.g., even if a parallel attacker runs for a very long time he must still use memory $S'$ for at least $T'$ steps. 

In this work we show that functions with almost optimal sustained-memory complexity exist in the random oracle model. 
We note that we must make some idealized assumption on our building block, like being a random oracle, as with the current state of complexity theory, we cannot even prove superlinear circuit lower-bounds for problems in {$\cal NP$}. 
For a given time $T$, our function uses maximal space $S=\Omega(T)$ for the na\"ive algorithm,\footnote{Recall that the na\"ive algorithm is sequential, so $S$ must be in $O(T)$ as in time $T$ the algorithm cannot even touch more than $O(T)$ memory.} while any \emph{parallel} algorithm must make at least $T'=\Omega(T)$ steps during which it uses $S'=\Omega(T/\log(T))=\Omega(S/\log(S))$ of memory.
\vspace{-.1cm}
\paragraph{Graph Labelling}
The functions we construct are defined by DAGs. For a DAG $G_n=(V,E)$, 
we order the vertices $V=\{v_1,\ldots,v_n\}$ in some topological order (so if there's a path from $i$ to $j$ then $i<j$), with 
$v_1$ being the unique source, and $v_n$ the unique sink of the graph. The function is now defined by $G_n$ and the input specifies a random oracle $H$. The output is the label $\ell_n$ of the sink, where the label of a node $v_i$ is recursively defined as 
$\ell_i=H(i,\ell_{p_1},\ldots,\ell_{p_d})$ where $v_{p_1},\ldots,v_{p_d}$ are the parents of $v_i$. 
\paragraph{Pebbling}
Like many previous works, including~\cite{NDSS:AbadiBW03,cryptoeprint:2017:225,AS15} discussed above, 
we reduce the task of proving lower bounds -- in our case, on sustained memory complexity -- for functions as just described, to proving lower bounds on some complexity of a pebbling game played on the underlying graph.

For example \cite{cryptoeprint:2017:225} define a cost function for the so called reb-blue pebbling game, which then implies lower bounds on the bandwidth hardness of the function defined over the corresponding DAG.

Most closely related to this work is~\cite{AS15}, who show that a lower bound the so called sequential (or parallel) \emph{cumulative (black)  pebbling complexity} (cpc) of a DAG implies a lower bound on the sequential (or parallel) cumulative memory complexity (cmc) of the labelling function defined over this graph. Recently~\cite{ABP17} constructed a constant indegree family of DAGs with parallel cpc $\Omega(n^2/\log(n))$, which is optimal~\cite{AB16}, and thus gives functions with optimal cmc. 

The black pebbling game -- as considered in cpc -- goes back to~\cite{HP70,Coo73}. It is defined over a DAG $G=(V,E)$ and goes in round as follows. Initially all nodes are empty. In every round, the player can put a pebble on a node if all its parents contain pebbles (arbitrary many pebbles per round in the parallel game, just one in the sequential). Pebbles can be removed at any time. The game ends when a pebble is put on the sink.  The cpc of such a game is the sum, over all time steps, of the pebbles placed on the graph. The sequential (or parallel) cpc of $G$ is the cpc of the sequential (or parallel) black pebbling strategy which minimizes this cost. 

It's not hard to see that the sequential/parallel cpc of $G$ directly implies the same upper bound on the sequential/parallel cmc of the graph labelling function, as to compute the function in the sequential/parallel random oracle model, one simply mimics the pebbling game, where putting a pebble on vertex $v_i$ with parents $v_{p_1},\ldots,v_{p_d}$ corresponds to the 
query $\ell_i\gets H(i,\ell_{p_1},\ldots,\ell_{p_d})$. And where one keeps a label $\ell_j$ in memory, as long as $v_j$ is pebbled.  
If the labels $\ell_i\in\{0,1\}^w$ are $w$ bits long, a cpc of $p$ translates to cmc of $p\cdot w$.

More interestingly, the same has been shown to hold for interesting notions also for lower bounds. In particular, the \emph{ex-post facto} argument~\cite{AS15} shows that any adversary who computes the label $\ell_n$ with high probability (over the choice of the random oracle $H$) with cmc of $m$, translates into a black pebbling strategy of the underlying graph with cpc almost $m/w$.

In this work we define the \emph{sustained-space complexity} (ssc) of a sequential/parallel black pebbling game, and show that lower bounds on ssc translate to lower bounds on the sustained-memory complexity (smc) of the graph labelling function in the sequential/parallel random oracle model.

Consider a sequential (or parallel) black pebbling strategy (i.e., a valid sequence pebbling configurations where the last configuration contains the sink) for a DAG $G_n=(V,E)$ on $|V|=n$ vertices. 
For some space parameter $ s \le n$, the $s$-ssc of this strategy is the number of pebbling configurations of size at least $s$. The sequential (or parallel) $s$-ssc of $G$ is the strategy minimizing this value. For example, if it's possible to pebble $G$ using $s'<s$ pebbles (using arbitrary many steps), then its $s$-ssc is $0$. Similarly as for csc vs cmc, an upper bound on $s$-ssc implies the same upper bound for $(w\cdot s)$-smc.  In \secref{pebbling} we prove that also lower bounds on ssc translate to lower bounds on smc. 

Thus, to construct a function with high parallel smc, it suffices to construct a family of DAGs with constant indegree and high parallel ssc. In \secref{graph} we construct such a family $\{G_n\}_{n\in\mathbb{N}}$ of DAGs where $G_n$ has $n$ vertices and has indegree $2$, where $\Omega(n/\log(n))$-ssc is in $\Omega(n)$. This is basically the best we can hope for, as our bound on ssc trivially implies a $\Omega(n^2/\log(n))$ bound on csc, which is optimal for any constant indegree graph~\cite{AS15}.
\subsection{High Level Description of our Construction and Proof}
Our construction of a family  $\{G_n\}_{n\in\mathbb{N}}$ of  DAGs with optimal ssc involves three building blocks: 

The first building block is a construction of Paul et al.~\cite{PTC76} of a family of DAGs $\{G_n\}_{n\in\mathbb{N}}$ with $\indeg(G_n)=2$ and space complexity $\Omega(n/\log n)$. More significantly for us they proved that for any sequential pebbling of $G_n$ there is a time interval $[i,j]$ during which at least $\Omega(n/\log n)$ new pebbles are placed on sources of $G_n$ and at least $\Omega(n/\log n)$ are always on the DAG. We extend the proof of Paul et al.~\cite{PTC76} to show that the same holds for any {\em parallel} pebbling of $G_n$. We can argue that $j-i =\Omega(n/\log n)$ for any sequential pebbling since it takes at least this many steps to place $\Omega(n/\log n)$ new pebbles on $G_n$. However, we stress that this argument does not apply to parallel pebblings so this does not directly imply anything about sustained space complexity for parallel pebblings. 

To address this issue we introduce our second building block: a family of $\{G_n^\epsilon\}_{n\in\mathbb{N}}$ of extremely depth robust DAGs with $\indeg(G_n)=O\left(\log n\right)$ --- for any constant $\epsilon >0$ the DAG $G_n^\epsilon$ is $(e,d)$-depth robust for any $e+d \leq (1-\epsilon)n$. We remark that our result improves upon  the construction of Mahmoody et al.\cite{MMV13} whose construction required $\indeg(G_n) = O\left(\log^2 n \polylog(\log n)\right)$ and may be of independent interest (e.g., our construction immediately yields a more efficient construction of proofs of sequential work~\cite{MMV13}). Our construction of $G_n^\epsilon$ is (essentially) the same as Erdos et al.~\cite{EGS75} albeit with much tighter analysis. By overlaying an extremely depth-robust DAG $G_n^\epsilon$ on top of the sources of $G_n$, the construction of Paul et al.~\cite{PTC76}. We can ensure that it takes $\Omega(n/\log n)$ steps to pebble $\Omega(n/\log n)$ sources of $G_n$. However, the resulting graph would have $\indeg(G_n) = O(\log n)$ and would have sustained space $\Omega(n/\log n)$ for at most $O(n/\log n)$ steps. By contrast, we want a $n$-node DAG $G$ with $\indeg(G)=2$ which requires space $\Omega(n/\log n)$ for at least $\Omega(n)$ steps.
 
 Our final tool is to apply the indegree reduction lemma of~Alwen et al.~\cite{ABP17} to $\{G_t^{\epsilon}\}_{t\in\mathbb{N}}$ to obtain a family of DAGs $\{D_t^{\epsilon}\}_{t\in\mathbb{N}}$ such that $D_t$ has $\indeg\left(D_t^{\epsilon}\right)=2$ and $2t \indeg(G_t) = O(t \log t)$ nodes. Each node in $G_t$ is associated with a path of length $2 \indeg(G_t)$ in $D_t^{\epsilon}$ and each path $p$ in $G_t$ corresponds to a path $p'$ of length $|p'|\geq |p| \indeg(G_t)$ in $D_t^{\epsilon}$.  We can then overlay the DAG $D_t^{\epsilon}$ on top of the sources in $G_n$ where $t=\Omega(n/\log n)$ is the number of sources in $G_n$. The final DAG has size $O(n)$ and we can then show that any legal parallel pebbling requires $\Omega(n)$  steps with at least $\Omega(n/\log n)$ pebbles on the DAG.

\section{Preliminaries}\seclab{prelim}
%
In this section we introduce common notation, definitions and results from other work which we will be using. In particular the following borrows heavily from~\cite{ABP17,AT17}.

\subsection{Notation}
We start with some common notation. Let $\N = \{0, 1, 2,\ldots\}$, $\N^+ = \{1, 2,\ldots\}$, and $\N_{\ge c} = \{c, c+1, c+2, \ldots\}$ for $c \in \N$. Further, we write $[c] := \{1, 2,\ldots,c\}$ and $[b,c]= \{b, b+1, \ldots, c\}$ where $c \ge b \in \N$. We denote the cardinality of a set $B$ by $|B|$. 

\subsection{Graphs}
The central object of interest in this work are directed acyclic graphs (DAGs). A DAG $G=(V,E)$ has {\em size} $n = |V|$. The indegree of node $v\in V$ is $\d = \indeg(v)$ if there exist $\d$ incoming edges $\d = |(V \times \{v\}) \cap E|$. More generally, we say that $G$ has indegree $\d = \indeg(G)$ if the maximum indegree of any node of $G$ is $\d$. If $\indeg(v) = 0$ then $v$ is called a {\em source} node and if $v$ has no outgoing edges it is called a {\em sink}. We use $\parents_G(v) = \{u \in V: (u,v) \in E\}$ to denote the parents of a node $v \in V$. In general, we use $\ancestors_G(v) := \bigcup_{i \geq 1} \parents_G^i(v)$ to denote the set of all ancestors of $v$ --- here, $\parents_G^2(v) := \parents_G\left(\parents_G(v) \right)$ denotes the grandparents of $v$ and $\parents^{i+1}_G(v) := \parents_G\left( \parents^i_G(v)\right)$. When $G$ is clear from context we will simply write $\parents$ ($\ancestors$).  We denote the set of all sinks of $G$ with $\sinks(G)=\{v\in V: \nexists (v,u)\in E\}$ --- note that $\ancestors\left(\sinks(G) \right) = V$. The length of a directed path $p = (v_1, v_2, \ldots, v_z)$ in $G$ is the number of nodes it traverses $\len(p):=z$. The depth $d=\depth(G)$ of DAG $G$ is the length of the longest directed path in $G$. We often consider the set of all DAGs of fixed size $n$ $\G_n := \{G=(V,E)\ : \ |V|=n\}$ and the subset of those DAGs at most some fixed indegree $\G_{n,\d} := \{G\in\G_n :  \indeg(G) \le \d\}$. Finally, we denote the graph obtained from $G=(V,E)$ by removing nodes $S\subseteq V$ (and incident edges) by $G-S$ and we denote by $G[S]=G-(V\setminus S)$ the graph obtained by removing nodes $V \setminus S$ (and incident edges).

The following is an important combinatorial property of a DAG for this work.

\begin{definition}[Depth-Robustness]
 For $n\in\N$ and $e,d \in [n]$ a DAG $G=(V,E)$ is {$(e,d)$-depth-robust} if
 $$\forall S \subset V~~~|S| \le e \Rightarrow \depth(G-S) \ge d.$$
\end{definition}

The following lemma due to Alwen et al.~\cite{ABP17} will be useful in our analysis. Since our statement of the result is slightly different from~\cite{ABP17} we include a proof in  \appref{MissingProof} for completeness.
\newcommand{\indegReductionLemma}{({\bf Indegree-Reduction}) Let $G=(V=[n],E)$ be a $(e,d)$-depth robust DAG on $n$ nodes and let $\d = \indeg(G)$. We can efficiently construct a  DAG $G' = (V'=[2n\d],E')$ on $2n\d$ nodes with $\indeg(G')=2$ such that for each path $p=(x_1,...,x_k)$ in $G$ there exists a corresponding path $p'$ of length $\geq k\d$ in $G'\left[\bigcup_{i=1}^k  [2(x_i-1)\d+1,2x_i\d]\right]$ such that $2x_i \d \in p'$ for each $i \in [k]$.  In particular, $G'$ is $(e,d\d)$-depth robust.}
\begin{lemma}\cite[Lemma 1]{ABP17}  \lemmlab{IndegRed}
\indegReductionLemma
\end{lemma}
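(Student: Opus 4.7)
My plan is to build $G'$ by replacing each vertex $x\in[n]$ with a directed \emph{meta-path}
\[
M_x\ :\ 2(x-1)\delta+1\ \to\ 2(x-1)\delta+2\ \to\ \cdots\ \to\ 2x\delta,
\]
so that $V'=\bigcup_x M_x$ has exactly $2n\delta$ vertices, and then routing the parent edges of $x$ into the first $\delta$ vertices of $M_x$. Concretely, if the parents of $x$ in $G$ are listed in some fixed order as $y_1,\ldots,y_j$ with $j\le\delta$, I add the edge $\bigl(2y_i\delta,\ 2(x-1)\delta+i\bigr)$ for each $i\in[j]$. Each vertex of $M_x$ then inherits at most one edge from its predecessor in $M_x$ and at most one edge from a parent meta-path, giving $\indeg(G')\le 2$.

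For the path-correspondence property, given $p=(x_1,\ldots,x_k)$ in $G$ I would build $p'$ greedily: start at $2(x_1-1)\delta+1$ and walk all of $M_{x_1}$ to its terminal vertex $2x_1\delta$; for $i\ge 2$, follow the parent edge $(2x_{i-1}\delta,\,2(x_i-1)\delta+j_i)$ where $j_i\in[\delta]$ is the index of $x_{i-1}$ in the parent list of $x_i$, and then walk the remainder of $M_{x_i}$ to $2x_i\delta$. By construction $p'$ lies entirely in $\bigcup_i M_{x_i}$ and visits each terminal vertex $2x_i\delta$. The length contributed by $M_{x_1}$ is $2\delta$, and for $i\ge 2$ the length contributed by $M_{x_i}$ is $2\delta-j_i+1\ge\delta+1$, so
\[
\len(p')\ \ge\ 2\delta+(k-1)(\delta+1)\ \ge\ k\delta.
\]
This is precisely where the padding to $2\delta$ (rather than $\delta$) vertices per meta-path is essential, since the entry index $j_i$ could be as large as $\delta$.

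The depth-robustness consequence then falls out quickly. Given $S'\subseteq V'$ with $|S'|\le e$, set $S:=\{x\in[n]:M_x\cap S'\ne\emptyset\}$, so $|S|\le|S'|\le e$. By $(e,d)$-depth-robustness of $G$ there is a directed path $p$ in $G-S$ of length at least $d$; the corresponding path $p'$ constructed above has length at least $d\delta$, lies in $\bigcup_{x\in p}M_x$, and is therefore disjoint from $S'$ by the definition of $S$. Hence $\depth(G'-S')\ge d\delta$, and $G'$ is $(e,d\delta)$-depth robust. The construction is largely mechanical; the only subtle point to get right is the simultaneous bookkeeping of the indegree bound and the per-step length lower bound on $p'$, which is exactly what dictates the choice of $2\delta$ vertices per meta-path and the placement of incoming edges in the first $\delta$ of them.
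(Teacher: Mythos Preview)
Your proposal is correct and follows essentially the same approach as the paper: the construction of $G'$ via meta-paths of length $2\delta$ with parent edges routed into the first $\delta$ vertices is identical, as is the argument lifting a path $p$ in $G-S$ to a long path $p'$ in $G'-S'$ by projecting $S'$ down to genesis nodes. Your length accounting is slightly more explicit than the paper's (which simply notes that $p'$ contains at least the last $\delta+1$ vertices of each metanode), but the substance is the same.
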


\subsection{Pebbling Models}
The main computational models of interest in this work are the parallel (and sequential) pebbling games played over a directed acyclic graph. Below we define these models and associated notation and complexity measures. Much of the notation is taken from~\cite{AS15,ABP17}.

\begin{definition}[Parallel/Sequential Graph Pebbling]\deflab{pebbling}
Let $G= (V,E)$ be a DAG and let $T \subseteq V$ be a target set of nodes to be pebbled. 
A {\em pebbling configuration (of $G$)} is a subset $P_i\subseteq V$. A legal \emph{parallel} pebbling of $T$ is a sequence $P=(P_0,\ldots,P_t)$ of {\em pebbling configurations} of $G$ where $P_0 = \emptyset$ and which satisfies conditions 1 \& 2 below. A \emph{sequential} pebbling additionally must satisfy condition 3.
 \begin{enumerate}
  \item At some step every target node is pebbled (though not necessarily simultaneously).
    $$ \forall x \in T~ \exists z \le t ~~:~~x\in P_z.$$
  \item A pebble can be added only if all its parents were pebbled at the end of the previous step.
    $$ \forall i \in [t]~~:~~ x \in (P_i \setminus P_{i-1}) ~\Rightarrow~ \parents(x) \subseteq P_{i-1}.$$
  \item At most one pebble is placed per step.
    $$ \forall i \in [t]~~:~~ |P_i \setminus P_{i-1}|\le 1 \ .$$
 \end{enumerate}
We denote with $\Peb_{G,T}$ and $\pPeb_{G,T}$ the set of all legal sequential and parallel pebblings of $G$ with target set $T$, respectively. Note that 
$\Peb_{G,T}\subseteq \pPeb_{G,T}$. We will mostly be interested in the case where $T = \sinks(G)$ in which case we write $\Peb_{G}$ and $\pPeb_{G}$.
\end{definition}

\begin{definition}[Pebbling Complexity]
The standard notions of {\em time}, {\em space}, {\em space-time} and {\em cumulative (pebbling) complexity} (cc) of a pebbling $P=\{P_0,\ldots,P_t\}\in\pPeb_G$ are defined to be:
$$
\peb_\Ctime(P)=t ~~~~~
\peb_\Cspace(P)= \max_{i\in[t]}|P_i| ~~~~~
\peb_\Cspacetime(P)= \peb_\Ctime(P)\cdot\peb_\Cspace(P) ~~~~~
\peb_\Ccc(P)= \sum_{i\in [t]}|P_i| \ .
$$
For $\alpha\in\Cevery$ and a target set $T \subseteq V$, the sequential and parallel pebbling complexities of $G$ are defined as
$$
\peb_\alpha(G,T)=\min_{P\in\Peb_{G,T}}\peb_\alpha(P) \qquad\textrm{and}\qquad
\ppeb_\alpha(G,T)=\min_{P\in\pPeb_{G,T}}\peb_\alpha(P) \ .
$$
When $T = \sinks(G)$ we simplify notation and write $\peb_\alpha(G)$ and $\ppeb_\alpha(G)$.
\end{definition}

\noindent The following defines a sequential pebbling obtained naturally from a parallel one by adding each new pebble on at a time.

\begin{definition}\deflab{seq}
Given a DAG $G$ and $P = \left(P_0,\ldots,P_t\right) \in \pPeb_G$ the {\em sequential transform} $\seq(P) = P' \in \peb_G$ is defined as follows: Let difference $D_j = P_i \setminus P_{i-1}$ and let $a_i = \left| P_i \setminus P_{i-1}\right|$ be the number of new pebbles placed on $G_n$ at time $i$. Finally, let $A_j = \sum_{i=1}^j a_i$ ( $A_0=0$) and let $D_j[k]$ denote the $k\th$ element of $D_j$ (according to some fixed ordering of the nodes). We can construct $P' = \left(P_{1}',\ldots,P_{A_t}'\right) \in \Peb(G_n)$ as follows: (1) $P_{A_i}' = P_i$ for all $i\in[0,t]$, and (2) for $k\in[1, a_{i+1}]$ let $P_{A_i+k}' = P_{A_i+k-1}' \cup D_j[k] $. 
\end{definition}

\noindent If easily follows from the definition that the parallel and sequential space complexities differ by at most a multiplicative factor of $2$.

\begin{lemma}\lemmlab{seqparspace}
 For any DAG $G$ and $P\in \pPeb_G$ it holds that $\seq(P) \in \Peb_G$ and $\peb_\Cspace(\seq(P)) \le 2 *\ppeb_\Cspace(P)$. In particular $\peb_\Cspace(G) /2 \le\ppeb_\Cspace(G)$.
\end{lemma}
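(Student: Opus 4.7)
The plan is to verify three things: (i) $\seq(P)$ satisfies the three conditions of a legal sequential pebbling, (ii) its maximum space never exceeds $2\ppeb_\Cspace(P)$, and (iii) the ``in particular'' claim follows by minimization.

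For legality I first check condition 3: by construction, the only new pebble appearing between $P_{A_{i-1}+k-1}'$ and $P_{A_{i-1}+k}'$ (for $k<a_i$) is $D_i[k]$, and at step $A_i$ the new pebble is $D_i[a_i]$ (accompanied by the free removal of all pebbles in $P_{i-1}\setminus P_i$, which is permitted since condition 3 only bounds \emph{additions}). Condition 1 is immediate from $P_{A_z}'=P_z$: any target pebbled by $P$ at time $z$ is pebbled by $\seq(P)$ at time $A_z$. Condition 2 is the one semantic step: when $D_i[k]$ is placed, its parents must already be present in the previous configuration. Because $D_i[k]\in P_i\setminus P_{i-1}$ and $P$ is a legal parallel pebbling, $\parents(D_i[k])\subseteq P_{i-1}=P_{A_{i-1}}'$, and since on the interval from $A_{i-1}$ to $A_{i-1}+k-1$ we only add pebbles and never remove any, these parents remain present in $P_{A_{i-1}+k-1}'$.

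For the space bound, fix a configuration $P_{A_{i-1}+k}'$. If $k=0$ or $k=a_i$ this is a checkpoint $P_{i-1}$ or $P_i$, of size at most $\ppeb_\Cspace(P)$. For $0<k<a_i$ the configuration equals $P_{i-1}\cup\{D_i[1],\ldots,D_i[k]\}\subseteq P_{i-1}\cup P_i$, whose size is at most $|P_{i-1}|+|P_i|\le 2\ppeb_\Cspace(P)$. Taking the maximum over all $i$ and $k$ yields $\peb_\Cspace(\seq(P))\le 2\ppeb_\Cspace(P)$.

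Finally, choosing $P$ to be a space-optimal parallel pebbling of $G$ gives a legal sequential pebbling $\seq(P)$ whose space complexity is at most $2\ppeb_\Cspace(G)$, so $\peb_\Cspace(G)\le 2\ppeb_\Cspace(G)$, which rearranges to the stated inequality. Nothing here is technically difficult; the one subtle point worth flagging is the reading of condition~3 --- that ``at most one pebble placed per step'' allows arbitrary simultaneous removals --- which is what legitimizes both the collapse $P_{A_i}'=P_i$ at checkpoint steps and the factor-of-two (rather than larger) bound on the intermediate configurations.
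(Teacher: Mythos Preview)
Your proof is correct and follows essentially the same approach as the paper's own proof: both verify legality of $\seq(P)$ by reducing the parent condition for each newly placed pebble $D_i[k]$ back to the legality of $P$ at step $i$, and both obtain the space bound from the inclusion $P'_{A_{i-1}+k}\subseteq P_{i-1}\cup P_i$. The only cosmetic differences are that the paper argues legality by contradiction whereas you verify the three conditions directly, and you explicitly check condition~3 (and flag the point about simultaneous removals at checkpoint steps), which the paper leaves implicit.
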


\begin{proof}
 Let $P\in\pPeb_G$ and $P' = \seq(P)$. Suppose $P'$ is not a legal pebbling because $v\in V$ was illegally pebbled in $P'_{A_i+k}$. If $k=0$ then $\parents_G(v) \not\subseteq P'_{A_{i-1}+a_i-1}$ which implies that $\parents_G(v) \not\subseteq P_{i-1}$ since $P_{i-1} \subseteq P'_{A_{i-1}+a_i-1}$. Moreover $v\in P_i$ so this would mean that also $P$ illegally pebbles $v$ at time $i$. If instead, $k>1$ then $v\in P_{i+1}$ but since $\parents_G(v) \not\subseteq P'_{A_i+k-1}$ it must be that $\parents_G(v) \not\subseteq  P_i$ so $P$ must have pebbled $v$ illegally at time $i+1$. Either way we reach a contradiction so $P'$ must be a legal pebbling of $G$. To see that $P'$ is complete note that $P_0 = P'_{A_0}$. Moreover for any sink $u \in V$ of $G$ there exists time $i\in[0,t]$ with $u\in P_i$ and so $u\in P'_{A_i}$. Together this implies $P'\in\Peb_G$.

 Finally, it follows by inspection that for all $i\ge 0$ we have $|P'_{A_i}| = |P_i|$ and for all $0<k<a_i$ we have $|P'_{A_i+k}| \le |P_i| + |P_{i+1}|$ which implies that $\peb_\Cspace(P') \le 2*\ppeb_\Cspace(P)$.
\end{proof}

\noindent New to this work is the following notion of sustained-space complexity.

\begin{definition}[Sustained Space Complexity]
 For $s\in \N$ the {\em $s$-sustained-space} ($s$-ss) complexity of a pebbling $P=\{P_0,\ldots,P_t\}\in\pPeb_G$ is:
 $$ \peb_\Css(P,s) = |\{ i \in [t] : |P_i| \ge s\}|.$$
 More generally, the sequential and parallel $s$-sustained space complexities of $G$ are defined as
 $$
  \peb_\Css(G,T,s)=\min_{P\in\Peb_{G,T}}\peb_\Css(P,s) \qquad\textrm{and}\qquad
  \ppeb_\Css(G,T,s)=\min_{P\in\pPeb_{G,T}}\peb_\Css(P,s) \ .
 $$
 As before, when $T = \sinks(G)$ we simplify notation and write $\peb_\Css(G,s)$ and $\ppeb_\Css(G,s)$.
\end{definition}

\begin{remark} (On Amortization)
An astute reader may observe that $\ppeb_\Css$ is not amortizable. In particular, if we let $G^{\bigotimes m}$ denotes the graph which consists of $m$ independent copies of $G$ then we may have  $\ppeb_\Css\left(G^{\bigotimes m},s\right) \ll m \ppeb_\Css(G,s)$. However, we observe that the issue can be easily corrected by defining the {\em amortized $s$-sustained-space} complexity of a pebbling $P=\{P_0,\ldots,P_t\}\in\pPeb_G$:
 $$ \peb_{am,\Css}(P,s) \sum_{i=1}^t \left\lfloor \frac{\left| P_i\right|}{s} \right\rfloor.$$
In this case we have  $\ppeb_{am,\Css}\left(G^{\bigotimes m},s\right)= m \ppeb_{am,\Css}(G,s)$ where $\ppeb_{am,\Css}(G,s) \doteq \min_{P\in\pPeb_{G,\sinks(G)}} \peb_{am,\Css}(P,s)$. We also remark that {\em $s$-sustained-space}  complexity is a strictly stronger guarantee than {\em amortized $s$-sustained-space} since $\ppeb_\Css(G,s) \leq \ppeb_{am,\Css}(G,s)$. Thus, all of our lower bounds from $\ppeb_\Css$ also hold with respect to $\ppeb_{am,\Css}$.
\end{remark}

\noindent The following shows that the indegree of any graph can be reduced down to $2$ with out loosing too much in the parallel sustained space complexity. The technique is similar the indegree reduction for cumulative complexity in~\cite{AS15}. The proof is in \appref{MissingProof}.

\newcommand{\IndegReductionForParallelSustainedSpace}{$$ \forall G \in \G_{n,\d}, ~~ \exists H \in \G_{n',2} \mbox{ such that }\forall s\ge 0~~ \ppeb_\Css(H,s/(\d-1)) = \ppeb_\Css(G,s) \mbox{ where } n' \in [n,\d n].$$}

\begin{lemma}[Indegree Reduction for Parallel Sustained Space]\lemmlab{indegred}
\IndegReductionForParallelSustainedSpace
\end{lemma}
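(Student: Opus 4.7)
I would adapt the indegree-reduction transformation of~\cite{AS15} to the sustained-space setting, showing that the sustained-space threshold rescales by exactly the gadget size $\d-1$. For every non-source $v\in V(G)$ with parents $p_1,\dots,p_{d_v}$, where $d_v=\indeg_G(v)\in[2,\d]$, I would replace $v$ with a chain gadget on fresh nodes $v^{(2)},v^{(3)},\dots,v^{(d_v)}$ in $H$, identifying $v\equiv v^{(d_v)}$ and adding edges $p_1\to v^{(2)}$, $p_2\to v^{(2)}$ and, for $3\leq k\leq d_v$, the edges $v^{(k-1)}\to v^{(k)}$ and $p_k\to v^{(k)}$. Source nodes and nodes with $d_v\leq 1$ are copied unchanged, and outgoing edges from $v$ become outgoing from $v^{(d_v)}$. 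Then $\indeg(H)=2$, and since each original node contributes at most $\d-1$ gadget nodes, $n'\in[n,\d n]$ as required.

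\textit{Lower bound $\ppeb_\Css(H,s/(\d-1))\geq \ppeb_\Css(G,s)$.} For any $P'\in\pPeb_H$ I would project to a pebbling $P\in\pPeb_G$ by declaring $v\in P_i$ whenever $\textup{gadget}(v)\cap P'_i\neq\emptyset$. Each pebble of $P'_i$ lies in exactly one gadget, and each gadget contains at most $\d-1$ nodes, giving $|P'_i|/(\d-1)\leq|P_i|\leq|P'_i|$. Provided $P$ is legal, the upper bound $|P_i|\leq|P'_i|$ immediately yields $\peb_\Css(P,s)\leq\peb_\Css(P',s/(\d-1))$, because any step with $|P_i|\geq s$ must also satisfy $|P'_i|\geq s\geq s/(\d-1)$. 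Taking the minimum over $P'$ then gives the desired inequality.

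\textit{Main obstacle and upper bound.} The hard step is legality of $P$: a placement $v^{(k)}\in P'_i\setminus P'_{i-1}$ with $k<d_v$ only forces $\{v^{(k-1)},p_k\}\subseteq P'_{i-1}$, not the full parent set $\parents_G(v)$. To repair legality I would delay $v$'s entry into $P$ until $v^{(d_v)}\in P'$ and retroactively pad each $P_i$ with those $G$-parents of partially computed gadgets that are needed to legalize future placements; a case analysis mirroring the one for cumulative complexity in~\cite{AS15} shows that this padding inflates the pebble count of each configuration by at most a factor $\d-1$, which is exactly absorbed by the $1/(\d-1)$ threshold rescaling. The matching upper bound $\ppeb_\Css(H,s/(\d-1))\leq \ppeb_\Css(G,s)$ is a routine simulation: given an optimal $P\in\pPeb_G$, expand each macro-step into at most $\d-1$ sub-steps that build the new gadgets in parallel by sequentially placing $v^{(2)},\ldots,v^{(d_v)}$ and discarding $v^{(k-1)}$ once $v^{(k)}$ is in place, so that sub-steps with $|P'|\geq s/(\d-1)$ lie within macro-steps with $|P_i|\geq s$.
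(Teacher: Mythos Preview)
Your lower-bound direction, $\ppeb_\Css(H,s/(\d-1))\ge\ppeb_\Css(G,s)$, is essentially the paper's argument. The paper defines the projection $P'\mapsto P$ directly via the two rules you arrive at after your ``repair'': put $v$ into $P_i$ whenever the final gadget node is pebbled, and whenever an intermediate gadget node $\langle v,j\rangle$ is pebbled insert the first $j$ parents $p_{v,1},\dots,p_{v,j}$ of $v$ into $P_i$ (this is precisely your ``padding with $G$-parents of partially computed gadgets''). The paper then verifies legality by a short case analysis and observes that each pebble in $P'_i$ contributes at most $\d-1$ pebbles to $P_i$, giving $|P_i|\le(\d-1)|P'_i|$ and hence the desired inequality. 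So on this half your plan and the paper coincide.

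The gap is the upper bound. First, even if your claimed implication ``a sub-step with $|P'_j|\ge s/(\d-1)$ lies within a macro-step with $|P_i|\ge s$'' held, expanding each macro-step into up to $\d-1$ sub-steps multiplies the sustained-space count by $\d-1$, not by $1$; you would only get $\ppeb_\Css(H,s/(\d-1))\le(\d-1)\,\ppeb_\Css(G,s)$. Second, the implication itself fails, and in fact the equality is false for your gadget. Take $G$ on $\{1,2,3,4\}$ with edges $(1,4),(2,4),(3,4)$, so $\d=3$. Your $H$ has nodes $\{1,2,3,4^{(2)},4^{(3)}\}$ with edges $1\to 4^{(2)}$, $2\to 4^{(2)}$, $4^{(2)}\to 4^{(3)}$, $3\to 4^{(3)}$. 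Then $\ppeb_\Css(G,2)=1$ (the pebbling $(\emptyset,\{1,2,3\},\{4\})$ achieves it, and any legal pebbling must have some $P_{i-1}\supseteq\{1,2,3\}$), while $\ppeb_\Css(H,1)=3$ since $\depth(H)=3$ forces at least three nonempty steps. Hence at $s=2$ one has $\ppeb_\Css(H,s/(\d-1))=3>1=\ppeb_\Css(G,s)$.

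The paper's own proof establishes only the $\ge$ direction --- which is the only direction it needs --- and makes no attempt at the reverse inequality; the stated equality appears to be an over-claim. Drop the upper-bound paragraph and your plan matches the paper.
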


%

\section{A Graph with Optimal Sustained Space Complexity}\seclab{graph}
%
In this section we construct and analyse a graph with very high sustained space complexity by modifying the graph of~\cite{PTC76} using the graph of~\cite{EGS75}. \thmref{main}, our main theorem, states that there is a family of constant indegree DAGs $\{G_n\}_{n=1}^\infty$ with maximum possible sustained space $\peb_\Css\left(G_n, \Omega(n/\log n) \right) = \Omega(n)$.

\begin{theorem} \thmlab{main}
For some constants $c_4,c_5 >0$ there is a family of DAGs $\{G_n\}_{n=1}^\infty$ with $\indeg\left(G_n\right) = 2$,  $O(n)$ nodes and $\ppeb_\Css\left(G_n,c_4n/\log n \right) \geq c_5n$.   
\end{theorem}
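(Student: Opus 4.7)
The plan follows the three-part strategy sketched in the introduction.

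\textbf{Construction.} Begin with the Paul--Tarjan--Celoni DAG $G^{\mathrm{PTC}}$ on $n$ nodes with $\indeg(G^{\mathrm{PTC}})=2$, whose source set $T$ has size $t = \Theta(n/\log n)$. Take the extremely depth-robust family $\{G_t^\epsilon\}$ constructed earlier in the paper, so $G_t^\epsilon$ is $(e,d)$-depth-robust for every $e+d \leq (1-\epsilon)t$ and $\indeg(G_t^\epsilon) = O(\log t)$. Apply \lemmref{IndegRed} to obtain $D_t^\epsilon$ with $\indeg(D_t^\epsilon) = 2$ on $2t\cdot\indeg(G_t^\epsilon) = O(t\log t) = O(n)$ nodes; the lemma guarantees that each path of length $k$ in $G_t^\epsilon$ corresponds to a path of length $\geq k \cdot \indeg(G_t^\epsilon)$ in $D_t^\epsilon$ passing through the designated ``original'' nodes $\{2\indeg(G_t^\epsilon) \cdot i\}_{i=1}^t$. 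Identify these $t$ original nodes with the $t$ sources of $G^{\mathrm{PTC}}$ to form the combined DAG $G_n$, which has $O(n)$ vertices and indegree $2$.

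\textbf{Parallel extension of the Paul--Tarjan--Celoni bound.} Next I would lift the sequential pebbling lower bound of~\cite{PTC76} to parallel pebblings of the $G^{\mathrm{PTC}}$-part. Given any $P = (P_0,\dots,P_\tau) \in \pPeb_{G_n}$, project to $P_k' = P_k \cap V(G^{\mathrm{PTC}})$ and either rework the combinatorial argument of~\cite{PTC76} directly in the parallel setting (tracking fresh source pebblings across the parallel trace) or apply it to the sequential transform $\seq(P)$ of \defref{seq} and pull the conclusion back through the induced time re-mapping. The output should be an interval $[i,j]$ of parallel time steps such that $|P_k'| \geq \Omega(n/\log n)$ for every $k \in [i,j]$, and at least $\Omega(t) = \Omega(n/\log n)$ distinct members of $T$ first appear in $\bigcup_{k \in [i,j]}(P_k \setminus P_{k-1})$.

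\textbf{Bounding $j - i$ via depth-robustness.} Let $T' \subseteq T$ be the set of $\Omega(t)$ sources freshly pebbled during $[i,j]$, so $|T \setminus T'| \leq (1-c)t$ for some constant $c>0$. Pick $\epsilon>0$ with $\epsilon < c$, set $e := (1-c)t$ and $d := (c-\epsilon)t$, so that $e+d \leq (1-\epsilon)t$ and $(e,d)$-depth-robustness of $G_t^\epsilon$ applies: removing the nodes $T \setminus T'$ from $G_t^\epsilon$ leaves a directed path of length $d = \Omega(t)$ consisting entirely of members of $T'$. By \lemmref{IndegRed} this lifts to a directed path $\pi$ in $D_t^\epsilon$ of length $\Omega(t \log t) = \Omega(n)$ that passes in order through the original-node representatives of these $T'$ members. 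Because $\pi$ is a chain in $D_t^\epsilon$ with every non-source node on $\pi$ requiring its $\pi$-predecessor, pebbling the deepest $T'$-representative on $\pi$ during $[i,j]$ requires pebbling each of its $\pi$-predecessors at least once during the parallel execution, and a standard parallel-pebbling argument (each step extends the pebbled portion of a single chain by at most one) then yields $j - i = \Omega(n)$.

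\textbf{Main obstacle.} I expect the delicate point to be handling predecessors on $\pi$ that are already pebbled at time $i$, since these do not contribute additional parallel time; one must argue that a long enough suffix of $\pi$ lies in $D_t^\epsilon - P_i$. The intended fix is to split on $|P_i \cap V(D_t^\epsilon)|$: if it is at least $e$ then $|P_i|$ already contributes the required $\Omega(n/\log n)$ pebbles on its own and can be absorbed into the accounting, while otherwise the $(e, d\cdot\indeg(G_t^\epsilon))$-depth-robustness of $D_t^\epsilon$ guarantees a long residual path whose deep end can be chosen to be a $T'$-representative. Once this bookkeeping is in place, combining the parallel-PTC pebble bound from step two with the $\Omega(n)$ time bound from step three delivers $\ppeb_\Css(G_n, c_4 n/\log n) \geq c_5 n$ for suitable constants $c_4, c_5 > 0$.
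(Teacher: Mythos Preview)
Your construction and the parallel lifting of the Paul--Tarjan--Celoni bound match the paper's approach exactly, and your ``forward'' case (small $|P_i\cap V(D_t^\epsilon)|$, hence a long residual path in $D_t^\epsilon-P_i$ ending at a $T'$-representative, hence $j-i=\Omega(n)$) is essentially the paper's Case~1.

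The gap is in your other case. When $|P_i\cap V(D_t^\epsilon)|\ge e$ you say this ``already contributes the required $\Omega(n/\log n)$ pebbles on its own and can be absorbed into the accounting.'' But this gives you \emph{one} time step with $\Omega(n/\log n)$ pebbles, whereas the theorem demands $\Omega(n)$ such steps. There is nothing to absorb it into: the PTC interval $[i,j]$ may be short in exactly this case, and a single large configuration $P_i$ says nothing about $|P_k|$ for $k\ne i$. So as written the argument does not close.

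The paper's fix is not to refine the forward argument but to turn around and look \emph{backward} in time. It splits not on $|P_i|$ but on $\depth\bigl(\ancestors_{G-P_i}(T)\bigr)$. If this depth is at least $|T|\delta/4$ the forward argument gives $j-i=\Omega(n)$. If it is less than $|T|\delta/4$, then since depth can drop by at most one per parallel step, there is a stretch of $\Omega(|T|\delta)=\Omega(n)$ rounds $r\le i$ during which the depth of $\ancestors_{J-P_r}(T)$ stays below $|T|\delta/2$. For each such $r$, projecting $P_r$ back to the pre-reduction graph $D_{|S|}^\epsilon$ and invoking extreme depth-robustness forces $|P_r|\ge |T|-\epsilon|S|-|T|/2=\Omega(n/\log n)$; this is the content of \claimref{LDImpliesSSTwo}. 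That backward-looking step---short residual depth persists for many rounds, and short residual depth forces many pebbles at every one of those rounds---is the missing idea in your outline.
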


\begin{remark}
We observe that \thmref{main} is essentially optimal in an asymptotic sense. Hopcroft et al.~\cite{HPV77} showed that any DAG $G_n$ with $\indeg(G_n)=O(1)$ can be pebbled with space at most $\ppeb_\Cspace(G_n) = O\left(n/\log n\right)$. Thus,  $\peb_\Css\left(G_n,s_n=\omega\left(n/\log n\right)\right) = 0$ for any DAG $G_n$ with $\indeg(G_n)=O(1)$ since $s_n > \peb_\Cspace(G_n)$.  \footnote{Furthermore, even if we restrict our attention to pebblings which finish in time $O(n)$ we still have $\peb_\Css\left(G_n,f(n)\right) \leq g(n)$ whenever $f(n)g(n) = \omega\left(\frac{n^2 \log \log n}{\log n}\right)$ and $\indeg(G_n)=O(1)$. In particular, Alwen and Blocki~\cite{AB16} showed that for any $G_n$ with $\indeg(G_n)=O(1)$ then there is a pebbling $P = (P_0,\ldots,P_n) \in \ppeb_{G_n}$ with $\pcc(P) \leq O\left(\frac{n^2 \log \log n}{\log n}\right)$. By contrast, the generic pebbling~\cite{HPV77} of any DAG with $\indeg=O(1)$ in space $O\left(n/\log n\right)$ can take exponentially long. }
\end{remark}

We now overview the key technical ingredients in the proof of \thmref{main}.

\subsubsection*{Technical Ingredient 1: High Space Complexity DAGs} The first key building blocks is a construction of Paul et al.~\cite{PTC76} of a family of $n$ node DAGs $\{G_n\}_{n=1}^\infty$ with space complexity $\peb_\Cspace(G_n)=\Omega(n/\log n)$ and $\indeg(G_n)=2$. \lemmref{seqparspace} implies that $\ppeb_\Cspace(G_n) = \Omega(n/\log n)$ since  $\peb_\Cspace(G_n) /2 \le\ppeb_\Cspace(G_n)$. However, we stress that this does not imply that the sustained space complexity of $G_n$ is large. In fact, by inspection one can easily verify that $\depth(G_n) = O(n/\log n)$ so we have $\peb_\Css(G_n, s) \leq O(n/\log n)$ for any space parameter $s>0$. Nevertheless, one of the core lemmas from \cite{PTC76} will be very useful in our proofs. In particular, $G_n$ contains $O(n/\log n)$ source nodes and \cite{PTC76} proved that for any sequential pebbling $P = (P_0,\ldots,P_t) \in \peb_{G}$ we can find an interval $[i,j] \subseteq [t]$ during which $\Omega(n/\log n)$ sources are (re)pebbled and at least $\Omega(n/\log n)$ pebbles are always on the graph --- see \thmref{PTC} in \appref{MissingProof} for a formal statement of their original result.  

As we show in \thmref{PTCParallel} the same claim holds for all parallel pebblings $P \in \ppeb_{G_n}$. Since  Paul et al.~\cite{PTC76} only considered sequential black pebblings we include the straightforward proof of \thmref{PTCParallel} in \appref{MissingProof} for completeness. Briefly, to prove \thmref{PTCParallel} we simply consider the sequential transform $\seq(P) = (Q_0,\ldots,Q_{t'})   \in \peb_{G_n}$ of the parallel pebbling $P$. Since $\seq(P)$ is sequential we can find an interval $[i',j'] \subseteq [t']$ during which $\Omega(n/\log n)$ sources are (re)pebbled and at least $\Omega(n/\log n)$ pebbles are always on the graph $G_n$. We can then translate $[i',j']$ to a corresponding interval $[i,j] \subseteq [t]$ during which the same properties hold for $P$. 

\newcommand{\thmPTCParallel}{There is a family of DAGs $\{G_n= (V_n=[n],E_n)\}_{n=1}^\infty$ with $\indeg\left(G_n\right) = 2$ with the property that for some positive constants $c_1,c_2,c_3 > 0$ such that for each $n \geq 1$ the set $S = \{v \in [n] ~:~ \parents(v)=\emptyset \}$ of sources of $G_n$ has size  $\left| S\right| \leq c_1n/\log n$ and for any legal pebbling $P=\left(P_1,\ldots,P_t\right) \in \pPeb_{G_n}$ there is an interval $[i,j] \subseteq [t]$ such that  (1) $\left| S \cap \bigcup_{k =i}^j P_k \setminus P_{i-1} \right| \geq c_2n/\log n$ i.e.,  at least $c_2 n/\log n$ nodes in $S$ are (re)pebbled during this interval, and (2) $\forall k \in [i,j], \left| P_k \right| \geq c_3n/\log n$ i.e., at least $c_3n/\log n$ pebbles are always on the graph. }
\begin{theorem} \thmlab{PTCParallel}
\thmPTCParallel
\end{theorem}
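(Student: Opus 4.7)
The excerpt already suggests the route: take $\{G_n\}$ to be the graph family of Paul--Tarjan--Celoni, reduce a parallel pebbling to its sequential transform via \defref{seq}, invoke the original sequential statement of \cite{PTC76} (i.e., \thmref{PTC} in the appendix) on the transform, and translate the resulting interval of sequential time back to parallel time. First I would fix the family: let $\{G_n\}$ be exactly the DAGs of \cite{PTC76}, so that $\indeg(G_n)=2$, the source set $S$ already satisfies $|S|\le c_1 n/\log n$, and the sequential version of the theorem produces constants $c_2', c_3'$ (which I will slightly shrink at the end).

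Given any $P=(P_0,\ldots,P_t)\in\pPeb_{G_n}$, set $P'=\seq(P)=(Q_0,\ldots,Q_{t'})$; by \lemmref{seqparspace} this is a legal sequential pebbling of $G_n$. Applying the sequential result yields an interval $[i',j']\subseteq[t']$ such that (a') at least $c_2' n/\log n$ sources are (re)pebbled in $P'$ during $[i',j']$, and (b') $|Q_s|\ge c_3' n/\log n$ for every $s\in[i',j']$. Using the index map $A_0,A_1,\ldots,A_t$ from \defref{seq}, define the parallel interval $[i,j]$ with $i$ the largest parallel index satisfying $A_i\le i'$ and $j$ the smallest with $A_j\ge j'$. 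The two basic observations are: (i) any source newly added in $P'$ at some sequential step $s\in[i',j']$ is added in $P$ at the unique parallel step $k$ with $A_{k-1}<s\le A_k$, and the bounds on $[i,j]$ force $k\in[i+1,j]$; (ii) for any $k$ with $i<k<j$ the sequential step $A_k$ lies strictly inside $(i',j')$, so $|P_k|=|Q_{A_k}|\ge c_3' n/\log n$ by (b').

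The main (only) obstacle is the boundary behaviour at the endpoint $k=j$: all sources counted in (a') are placed in parallel steps $[i+1,j]$, but the space lower bound only transfers cleanly to $[i+1,j-1]$, because $A_j$ may strictly exceed $j'$. I plan to resolve this by a dichotomy on $|P_j|$. Either $|P_j|\ge c_3' n/(2\log n)$, in which case $|P_k|\ge c_3' n/(2\log n)$ throughout $[i+1,j]$ and that interval witnesses both properties; or $|P_j|<c_3' n/(2\log n)$, in which case at most $|P_j\setminus P_{j-1}|\le|P_j|<c_3' n/(2\log n)$ sources are placed during parallel step $j$, so at least $c_2' n/\log n - c_3' n/(2\log n)$ of them are placed in $[i+1,j-1]$ and that interval witnesses both properties. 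In either case, adjusting $c_2, c_3$ to the worse of the two constants finishes the argument. Everything else (e.g.\ the structure of $A_k$ and the fact that $Q_{A_k}=P_k$) is immediate from \defref{seq}, and no new combinatorial input beyond \cite{PTC76} is required.
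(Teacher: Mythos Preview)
Your plan is correct and proves the theorem; it differs from the paper's proof mainly in how the right endpoint is handled. One small glitch: claim (i) fails at the left boundary when $i' = A_i$, since a source first added at sequential step $s = i' = A_i$ lies in $P_i \setminus P_{i-1}$, i.e.\ at parallel step $k = i$ rather than $k \geq i+1$. This is harmless---just use $[i,j]$ rather than $[i+1,j]$ in that sub-case, as $|P_i| = |Q_{A_i}| = |Q_{i'}|$ is already controlled by (b') and the relevant sources are not in $Q_{i'-1} \supseteq P_{i-1}$---and the paper in fact makes exactly this two-case split at the left end.

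For the right endpoint the paper proceeds differently. Instead of taking whichever sequential interval the \cite{PTC76} theorem produces and then running your dichotomy on $|P_j|$, the paper chooses the sequential interval with \emph{maximal} right endpoint $t_2^*$ among all intervals satisfying (a') and (b'). Because the sequential transform is monotone on each block $(A_{k},A_{k+1})$, any non-aligned $t_2^*$ could be pushed one step to the right without violating (b'); maximality therefore forces $t_2^* = A_j$ exactly, so $|P_j| = |Q_{t_2^*}| \geq c_3' n/\log n$ directly, with no case split and no loss in the constants. Your dichotomy also works, but note that the threshold should be tied to $c_2'$ rather than $c_3'$ (e.g.\ split on whether $|P_j| \geq c_2' n/(2\log n)$): with your stated threshold $c_3' n/(2\log n)$ the Case-B source count $c_2' - c_3'/2$ need not be positive, and the degenerate sub-case $j = i+1$ in Case B is not automatically excluded.
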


One of the key remaining challenges to establishing high sustained space complexity is that the interval $[i,j]$ we obtain from \thmref{PTCParallel} might be very short for parallel black pebblings. For sequential pebblings it would take $\Omega(n/\log n)$ steps to (re)pebble $\Omega(n/\log n)$ source nodes since we can add at most one new pebble in each round. However, for parallel pebblings we cannot rule out the possibility that all $\Omega(n/\log n)$ sources were pebbled in a single step! 

A first attempt at a fix is to modify $G_n$ by overlaying a path of length $\Omega(n)$ on top of these $\Omega(n/\log n)$ source nodes to ensure that the length of the interval $j-i+1$ is sufficiently large. The hope is that it will take now at least $\Omega(n)$ steps to (rep)pebble any subset of $\Omega(n/\log n)$ of the original sources since these nodes will be connected by a path of length $\Omega(n)$. However, we do not know what the pebbling configuration looks like at time $i-1$. In particular, if $P_{i-1}$ contained just $\sqrt{n}$ of the nodes on this path then the it would be possible to (re)pebble all nodes on the path in at most $O\left(\sqrt{n}\right)$ steps. This motivates our second technical ingredient: extremely depth-robust graphs.

\subsubsection*{Technical Ingredient 2: Extremely Depth-Robust Graphs} Our second ingredient is a family $\{D_n^\epsilon\}_{n=1}^\infty$ of highly depth-robust DAGs with $n$ nodes and $\indeg(D_n) = O(\log n)$. In particular, $D_n^\epsilon$ is $(e,d)$-depth robust for {\em any} $e+d \leq n(1-\epsilon)$. We show how to construct such a family $\{D_n^\epsilon\}_{n=1}^\infty$ for for any constant $\epsilon>0$ in \secref{dr}. Assuming for now that such a family exists we can overlay $D_m$ over the $m\leq c_1 n/\log n$ sources of $G_n$. Since $D_m^\epsilon$ is {\em highly} depth-robust it will take at least $c_2n/\log n - \epsilon m \geq c_2 n/\log n - \epsilon c_1 n/\log n = \Omega(n/\log n)$ steps to pebble these $c_2 n/\log n$ sources during the interval $[i,j]$. 

Overlaying $D_m^\epsilon$ over the $m = O(n/\log (n))$ sources of $G_n$ yields a DAG $G$ with $O(n)$ nodes, $\indeg(G)=O(\log n)$ and $\ppeb_\Css\left(G,c_4n/\log n \right) \geq c_5n/\log n$ for some constants $c_4,c_5 >0$. While this is progress it is still a weaker result than \thmref{main} which promised  a DAG $G$ with $O(n)$ nodes, $\indeg(G)=2$ and $\ppeb_\Css\left(G,c_4n/\log n \right) \geq c_5n$ for some constants $c_4,c_5 >0$. Thus, we need to introduce a third technical ingredient: indegree reduction. 

\subsubsection*{Technical Ingredient 3: Indegree Reduction}  To ensure $\indeg(G)=2$ we instead apply indegree reduction algorithm from \lemmref{IndegRed} to $D_m^\epsilon$  to obtain a graph $J_m^\epsilon$ with $2m\d = O(n)$ nodes $[2\d m]$ and $\indeg(J_m^\epsilon)=2$ before overlaying --- here $\d = \indeg(D_m^\epsilon)$. We then associate the $m$ sources of $G_n$ with the nodes $\{2\d v~:v \in [m]\}$ in $J_m^\epsilon$. $J_m^\epsilon$ is $(e,\d d)$-depth robust for any $e+d \leq (1-\epsilon) m$, which seems to suggests that it will take $\Omega(n)$ steps to (re)pebble $c_2 n/\log n$ sources during the interval. However, we still run into the same problem: In particular, suppose that at some point in time $k$ we can find a set $T \subseteq \{2v\d:v \in [m]\}\setminus P_k$ with $|T| \geq c_2 n/\log n$ (e.g., a set of sources in $G_n$) such that the longest path running through $T$ in $J_m^\epsilon - P_{k}$ has length at most $c_5 n$. If the interval $[i,j]$ starts at time $i=k+1$ then cannot ensure that it will take time $\geq c_5 n$ to (re)pebble these $c_2 n/\log n$ source nodes. 

\claimref{LDImpliesSSTwo} addresses this challenge directly. If such a problematic time $k$ exists then  \claimref{LDImpliesSSTwo} implies that we must have $\ppeb_\Css\left(P, \Omega(n/\log n)) \right) \geq \Omega(n)$. At a high level the argument proceeds as follows: suppose that we find such a problem time $k$ along with a set $T \subseteq \{2v\d:v \in [m]\}\setminus P_k$ with $|T| \geq c_2 n/\log n$ such that $\depth\left(J_m^\epsilon[T]\right) \leq c_5 n$. Then for any time $r \in [k-c_5 n,k]$ we know that the the length of the longest path running through $T$ in  $J_m^\epsilon - P_r$ is at most $\depth\left(J_m^\epsilon[T]- P_r\right) \leq c_5 n +(k-r) \leq 2 c_5 n$ since the depth can decrease by at most one each round. We can then use the extreme depth-robustness of $D_m^\epsilon$ and the construction of $J_m^\epsilon$ to argue that $\left| P_r \right| = \Omega(n/\log n)$ for each $r \in [k-c_5 n,k]$. Finally, if no such problem time $k$ exists then the interval $[i,j]$ we obtain from \thmref{PTCParallel} must have length at least $ i- j \geq c_5 n$. In either case we have $\ppeb_\Css\left(P, \Omega(n/\log n)) \right) \geq \Omega(n)$.

\begin{proofof}{\thmref{main}}
We begin with the family of DAGs $\{G_n\}_{n=1}^\infty$ from \thmref{PTCParallel} and \thmref{PTC}. Fixing $G_n=([n],E_n)$ we let $S = \{v \in [n]: \parents(v) = \emptyset\}\subseteq V$ denote the sources of this graph and we let $c_1,c_2,c_3 > 0$ be the constants from \thmref{PTCParallel}. Let $\epsilon \leq c_2/(4c_1)$. By \thmref{ExtremeDepthRobust} we can find a depth-robust DAG $D_{|S|}^\epsilon$ on $|S|$ nodes which is $(a|S|,b|S|)$-DR for any $a+b\leq 1-\epsilon$ with indegree $c' \log n \leq \d = \indeg(D) \leq c'' \log(n)$ for some constants $c',c''$. We let $J_{|S|}^\epsilon$ denote the indegree reduced version of $D_{|S|}^\epsilon$ from \lemmref{IndegRed} with $2|S|\d = O(n)$ nodes and $\indeg=2$. To obtain our DAG $G$ from $J_n^\epsilon$ and $G_n$ we associate each of the $S$ nodes $2v\d$ in $J_n^\epsilon$ with one of the nodes in $S$. We observe that $G$ has at most $2|S|\d+n = O(n)$ nodes and that $\indeg(G) \leq \max\left\{\indeg(G_n),\indeg\left(J_n^\epsilon\right)\right\}=2$ since we do not increase the indegree of any node in $J_n^\epsilon$ when overlaying and in $G_n$ do not increase the indegree of any nodes other that sources $S$ (which may now have indegree $2$ in $J_n^\epsilon$).

Let $P=(P_0,\ldots,P_t) \in \pPeb_{G}$ be given and observe that by restricting $P'_i = P_i \cap V(G_n) \subseteq P_i$ we have a legal pebbling $P'=(P_0',\ldots,P_t') \in \pPeb_{G_n}$ for $G_n$. Thus, by \thmref{PTCParallel} we can find an interval $[i,j]$ during which at least $c_2n/\log n$ nodes in $S$ are (re)pebbled and $\forall k \in [i,j]$ we have $\left|P_k\right| \geq c_3n/\log n$. We use $T=S \cap \bigcup_{x=i}^j P_x - P_{i-1}$ to denote the source nodes of $G_n$ that are (re)pebbled during the interval $[i,j]$. We now set $c_4 = c_2/4$ and $c_5 = c_2 c'/4$ and consider two cases:

{\bf Case 1:} We have $\depth\left(\ancestors_{G-P_i}(T)\right) \geq |T|\d/4$. In other words at time $i$  there is an unpebbled path of length $\geq |T|\d/4$ to some node in $T$. In this case, it will take at least $j-i\geq |T|\d/4$ steps to pebble $T$ so we have $|T|\d/4 = \Omega(n)$ steps with at least $c_3n/\log n$ pebbles. Because $c_5 = c_2 c'/4$ it follows that $|T|\d/4 \geq c_2c' n \geq c_5 n$. Finally, since $c_4 \leq c_2$ we have $\ppeb_\Css\left(G_n,c_4n/\log n \right) \geq c_5n$. \\

{\bf Case 2:}  We have $\depth\left(\ancestors_{G-P_i}(T)\right) < |T|\d/4$. In other words at time $i$ there is no unpebbled path of length $\geq |T|\d/4$ to any node in $T$. Now \claimref{LDImpliesSSTwo} directly implies that $\ppeb_\Css\left(P, |T|-\epsilon |S| - |T|/2) \right) \geq \d |T|/4$. This in turn implies that $\ppeb_\Css\left(P, (c_2/2)n/(\log n) -\epsilon|S| \right) \geq \d c_2n/(2\log n) $. We observe that $\d c_2n/(2\log n) \geq c_5n$ since, we have $c_5 = c_2 c'/4$. We also observe that $(c_2/2 )n/\log n -\epsilon|S| \geq (c_2/2-\epsilon c_1)n/\log n \geq  (c_2/2-c_2/4)n/\log n \geq c_2n/(4 \log n) = c_4n$  since $|S| \leq c_1 n/\log n$, $\epsilon \leq c_2/(4c_1)$ and $c_4=c_2/4$. Thus, in this case we also have $\ppeb_\Css\left(P, c_4n/\log n \right) \geq c_5 n$.\QED
\end{proofof}


\begin{claim} \claimlab{LDImpliesSSTwo}
Let $D_n^\epsilon$ be an DAG with nodes $V\left(D_n^\epsilon \right)=[n]$, indegree $\d = \indeg\left( D_n^\epsilon\right)$ that is $(e,d)$-depth robust for all $e,d>0$ such that $e+d\leq (1-\epsilon)n$, let $J_n^\epsilon$ be the indegree reduced version of $D_n^\epsilon$ from \lemmref{IndegRed} with $2 \d$ nodes and $\indeg\left(J_n^\epsilon\right) = 2$, let $T \subseteq [n]$  and let $P=(P_1,\ldots,P_t) \in \pPeb_{J_n^\epsilon,\emptyset}$ be a (possibly incomplete) pebbling of $J_n^\epsilon$. Suppose that during some round $i$ we have $\depth\left(\ancestors_{J_n^\epsilon-P_i}\left(\bigcup_{v \in T} \{2\d v \}\right)\right) \leq c\d |T|$ for some constant $0 < c < \frac{1}{2}$. Then $\ppeb_\Css\left(P, |T|-\epsilon n - 2c|T|) \right) \geq c \d |T|$.
\end{claim}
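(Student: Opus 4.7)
The plan is to show $|P_r| \geq |T| - \epsilon n - 2c|T|$ for every $r$ in the window $W = [i - c\d|T| + 1,\; i]$ of cardinality $c\d|T|$. This combines a monotonicity property of the ancestor depth with the extreme depth-robustness of $D_n^\epsilon$ lifted through the indegree-reduction \lemmref{IndegRed}.

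First I will show that for every $r \leq i$,
\[
\depth\bigl(\ancestors_{J_n^\epsilon - P_r}(\{2\d v : v \in T\})\bigr) \;\leq\; c\d|T| + (i - r),
\]
so that for $r \in W$ this quantity is at most $2c\d|T|$. The argument is the familiar ``ancestor depth shrinks by at most one per forward step'' principle: let $v_1 \to \cdots \to v_k$ be any path in $J_n^\epsilon - P_r$ ending at a $T$-representative. No $v_j$ with $j \geq 2$ can be freshly pebbled at time $r+1$, because its in-path predecessor $v_{j-1}$ sits outside $P_r$ yet would have to lie in $P_r$ by the parallel pebbling rule; so only $v_1$ can be newly pebbled, and $v_2 \to \cdots \to v_k$ survives in $J_n^\epsilon - P_{r+1}$ and still ends at the same $T$-representative. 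Iterating this inequality from $r$ up to $i$ and using the hypothesis at $r = i$ gives the claimed bound.

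Next I will convert the depth bound into the promised pebble count via depth-robustness. Define the shadow $S_r = \{v \in [n] : P_r \cap [2(v-1)\d+1,\; 2v\d] \neq \emptyset\}$, i.e.\ the indices of $D_n^\epsilon$ whose associated segment in $J_n^\epsilon$ carries at least one pebble. Since distinct segments are pairwise disjoint, $|P_r| \geq |S_r| \geq |T \cap S_r|$, so it suffices to show $|T \cap S_r| \geq |T| - \epsilon n - 2c|T|$. Suppose for contradiction that $|T \setminus S_r| > 2c|T| + \epsilon n$. Applying $(n - |T \setminus S_r|,\; |T \setminus S_r| - \epsilon n)$-depth-robustness of $D_n^\epsilon$ yields a directed path $Q = (v_1,\ldots,v_k)$ in $D_n^\epsilon[T \setminus S_r]$ with $k \geq |T \setminus S_r| - \epsilon n > 2c|T|$. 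By \lemmref{IndegRed}, $Q$ lifts to a path $Q'$ of length $\geq k\d$ in $J_n^\epsilon$ that lies entirely inside the union of the segments of $v_1,\ldots,v_k$ and terminates at $2\d v_k$. Since every $v_i$ lies outside $S_r$, each of these segments is pebble-free, so $Q' \subseteq J_n^\epsilon - P_r$ and ends at a $T$-representative, forcing $k\d \leq 2c\d|T|$, i.e.\ $k \leq 2c|T|$, a contradiction.

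Combining the two steps yields $|P_r| \geq |T| - \epsilon n - 2c|T|$ for every $r \in W$, and hence $\ppeb_\Css(P, |T| - \epsilon n - 2c|T|) \geq |W| = c\d|T|$. The main subtlety is the design of the shadow $S_r$: it must record every segment touched by any pebble, rather than only pebbles sitting on the representative $2\d v$, since only under this broader notion does ``$v_i \notin S_r$'' guarantee that the lifted path delivered by \lemmref{IndegRed}---which can visit arbitrary nodes of each segment, not merely the representative---actually avoids $P_r$. This is what makes the extreme depth-robustness lower bound on $k$ interlock correctly with the monotonicity upper bound on the ancestor depth.
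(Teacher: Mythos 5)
Your proof is correct and takes essentially the same route as the paper's: it uses the identical ``shadow'' map $S_r$ (the paper's $P_r'$), the same monotonicity principle for $\ancestors_{J_n^\epsilon - P_r}$, and the same interlocking of extreme depth-robustness of $D_n^\epsilon$ with the path-lifting guarantee of \lemmref{IndegRed}. The only cosmetic differences are that you fix the window $[i - c\d|T|+1,\,i]$ up front (where the paper instead locates the latest $k<i$ with $\depth(H_k)\ge 2c|T|\d$ and argues $i-k \ge c|T|\d$) and that you phrase the depth-robustness step as a contradiction rather than a direct chain of inequalities; these choices make the argument slightly cleaner but do not change its substance.
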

\begin{proof}
For each time step $r$ we let $H_r = \ancestors_{J_n^\epsilon-P_r}\left(\bigcup_{v \in T} \{2\d v \}\right) $ and let $k < i$ be the last pebbling step before $i$ during which $\depth(G_k) \geq  2c|T|\d$. Observe that $k-i \geq \depth(H_k) - \depth(H_i) \geq cn\d$ since we can decrease the length of any unpebbled path by at most one in each pebbling round. We also observe that $\depth(H_k) = c|T|\d$ since $\depth(H_k)-1 \leq \depth(H_{k+1}) < 2c|T|\d$. 

Let $r \in [k,i]$ be given then, by definition of $k$, we have $\depth\left(H_r\right) \leq 2c|T|\d$. Let $P_r' = \{v \in V(D_n^\epsilon): P_r \cap [2\d(v-1)+1,2\d v] \neq \emptyset \}$ be the set of nodes $v \in [n]=V\left(D_n^\epsilon\right)$ such that the corresponding path $2\d(v-1)+1,\ldots,2\d v$ in $J_n^\epsilon$ contains at least one pebble at time $r$. By depth-robustness of $D_n^\epsilon$ we have 
\begin{equation} \eqnlab{depthPartOne} \depth\left(D_n^\epsilon[T]  - P_r'\right) \geq |T|-|P_r'|-\epsilon n \ . \end{equation} On the other hand, exploiting the properties of the indegree reduction from \lemmref{IndegRed}, we have 
\begin{equation}\eqnlab{depthPartTwo} \depth\left(D_n^\epsilon[T]  - P_r'\right) \d \leq \depth\left(H_r\right)  \leq 2c|T|\d \ .   \end{equation}
Combining \eqnref{depthPartOne} and \eqnref{depthPartTwo} we have \[|T|-|P_r'|-\epsilon n \leq \depth\left(D_n^\epsilon[T]  - P_r'\right)   \leq 2c|T| \ . \]
It immediately follows that 
$\left| P_r\right| \geq |P_r'| \geq |T|- 2c |T| - \epsilon n$ for each $r \in [k,i]$ and, therefore, $\ppeb_\Css\left(P, |T|-\epsilon n - 2c|T| \right) \geq c \d |T|$.
\end{proof}

\begin{remark} (On the Explicitness of Our Construction) Our construction of a family of DAGs with high sustained space complexity is explicit in the sense that there is a probabilistic polynomial time algorithm which, except with very small probability, outputs an $n$ node DAG $G$ that has high   sustained space complexity. In particular, \thmref{main} relies on an explicit construction of \cite{PTC76}, and the extreme depth-robust DAGs from \thmref{ExtremeDepthRobust}. The construction of \cite{PTC76} in turn uses an object called superconcentrators. Since we have explicit constructions of superconcentrators~\cite{gabber1981explicit} the construction of \cite{PTC76} can be made explicit. While the proof of the existence of a family of extremely depth-robust DAGs is not explicit the proof uses a probabilistic argument and can be adapted to obtain a  probabilistic polynomial time which, except with very small probability, outputs an $n$ node DAG $G$ that is extremely depth-robust.  In practice, however it is also desirable to ensure that there is a local algorithm which, on input $v$, computes the set $\parents(v)$ in time $\polylog(n)$. It is an open question whether any DAG $G$ with high sustained space complexity allows for highly efficient computation of the set $\parents(v)$. 
\end{remark}

\section{Better Depth-Robustness}\seclab{dr}
%
In this section we improve on the original analysis of Erdos et al.~\cite{EGS75}, who constructed a family of DAGs with $\indeg=O(\log n)$ that is $\left(e= \Omega(n),d=\Omega(n)\right)$-depth robust. Such a DAG $G_n$ is not sufficient for us since we require that the subgraph $G_n[T]$ is also highly depth robust for any sufficiently large subset $T \subseteq V_n$ of nodes e.g., for any $T$ such that $|T| \geq n/1000$. For any fixed constant $\epsilon > 0$ \cite{MMV13} constructs a family of DAGs $\{G_n^\epsilon\}_{n=1}^\infty$ which is $(\alpha n,\beta n)$-depth robust for any positive constants $\alpha,\beta$ such that  $\alpha + \beta \leq 1-\epsilon$ but their construction has indegree $O\left(\log^2 n \cdot \polylog \left(\log n\right)\right)$. By contrast our results in the previous section assumed the the existence of such a family of DAGs with $\indeg(G_n) = O(\log n)$. 

In fact our family of DAGs is essentially the same as ~\cite{EGS75} with one minor modification to make the construction for for all $n > 0$. Our contribution in this section is an improved analysis which shows that the family of DAGs $\{G_n^\epsilon\}_{n=1}^\infty$ with indegree $O\left(\log n\right)$ is $(\alpha n,\beta n)$-depth robust for any positive constants $\alpha,\beta$ such that  $\alpha + \beta \leq 1-\epsilon$. 

We remark that if we allow our family of DAGs to have $\indeg(G_n) =  O(\log n \log^* n)$ then we can eliminate the dependence on $\epsilon$ entirely. In particular, we can construct a family of DAGs  $\{G_n^\epsilon\}_{n=1}^\infty$ with $\indeg(G_n) =  O(\log n \log^* n)$ such that for any positive constants such that $\alpha + \beta < 1$ the DAG $G_n$ is $(\alpha n,\beta n)$-depth robust for all suitably large $n$.


\begin{theorem} \thmlab{ExtremeDepthRobust}
Fix $\epsilon >0$ then there exists a family of DAGs $\{G_n^\epsilon\}_{n=1}^\infty$ with $\indeg\left(G_n^\epsilon\right)=O(\log n)$ that is $\left( \alpha n, \beta n \right)$-depth robust for any constants $\alpha,\beta$ such that  $\alpha + \beta < 1-\epsilon$. 
\end{theorem}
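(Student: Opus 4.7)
I would adopt the construction of \Erdos, Graham and Szemer\'edi essentially verbatim and prove depth-robustness via induction on dyadic scale, distributing the $\epsilon$-budget across scales so that only a constant expander degree per scale is needed. This saves a $\log n \cdot \polylog(\log n)$ factor over Mahmoody et al.

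\paragraph{Construction.} On vertex set $[n]$ include the Hamiltonian path edges $(i, i+1)$. For each dyadic scale $k \in \{1, 2, 4, \ldots, 2^{\lfloor \log_2 n \rfloor}\}$ and each length-$2k$ window $I = [a, a + 2k - 1] \subseteq [n]$, overlay a bipartite expander of constant degree $d = d(\epsilon)$ between the left half $L(I) := [a, a + k - 1]$ and the right half $R(I) := [a + k, a + 2k - 1]$. Using random bipartite expanders together with a union bound over windows and large subset pairs, one can arrange the following \emph{local expansion property}: for every window $I$ and every $X \subseteq L(I)$, $Y \subseteq R(I)$ with $|X|, |Y| \geq (\epsilon/4) |I|$, some edge of $G_n^\epsilon$ goes from $X$ to $Y$. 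Each vertex lies in $O(\log n)$ windows, giving $\indeg(G_n^\epsilon) = O(\log n)$.

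\paragraph{Depth-robustness.} Fix $S \subseteq [n]$ with $|S| \leq \alpha n$ and $\alpha + \beta \leq 1 - \epsilon$; I must show $\depth(G_n^\epsilon - S) \geq \beta n$. The plan is to prove by induction on scale $k$ the invariant that, for every dyadic-length window $I$, $\depth\left(G_n^\epsilon[I] - S\right) \geq |I| - |S \cap I| - (\epsilon/2) |I|$. The base case $|I| = 1$ is immediate. For the inductive step, apply the hypothesis to both halves $L(I)$ and $R(I)$, obtaining long paths $P_L$ and $P_R$ in each half; identify a ``backward-reach'' set $X \subseteq L(I) \setminus S$ of vertices from which a long path reaches a sink of $L(I)$ inside $G_n^\epsilon[L(I)] - S$, and symmetrically a ``forward-reach'' set $Y \subseteq R(I) \setminus S$; by (a strengthened form of) the inductive hypothesis argue $|X|, |Y| \geq (\epsilon/4) |I|$; invoke local expansion to get an edge $u \to v$ with $u \in X$, $v \in Y$; and glue $P_L$ to $P_R$ via $u \to v$. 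Applying the invariant to $I = [1, n]$ gives a path of length $\geq (1 - \alpha - \epsilon/2) n \geq \beta n$.

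\paragraph{Main obstacle.} The hard part is establishing the lower bounds $|X|, |Y| \geq (\epsilon/4) |I|$ used in the inductive step: a bare single-path invariant guarantees only the \emph{existence} of long paths, not that their endpoints lie in sizable ``extensible'' sets. I would therefore strengthen the inductive invariant to a \emph{robust reachability} property --- each dyadic window $I$ hosts a family of long paths whose startpoints (and separately, endpoints) collectively cover an $(\epsilon/4)$-fraction of $I$. Propagating this stronger invariant across the $\log n$ scales, and verifying that a single constant $d(\epsilon)$ suffices uniformly across the $O(n)$ dyadic windows, is the technical crux; the naive accounting of Mahmoody et al.\ loses a $\log n \cdot \polylog(\log n)$ factor which the careful bookkeeping here is designed to avoid.
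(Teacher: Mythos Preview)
Your construction is essentially the paper's (and \Erdos--Graham--Szemer\'edi's): overlay constant-degree bipartite expanders at each dyadic scale over the path $[n]$. The local expansion property you state is what the paper calls $\delta$-local expansion (with $\delta \approx \epsilon/4$), and the indegree bound $O(\log n)$ follows as you say.

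Your proof strategy, however, differs from the paper's and has a genuine gap. You propose an invariant over \emph{all} dyadic windows $I$, namely $\depth(G[I]-S) \ge |I| - |S \cap I| - (\epsilon/2)|I|$, and then a strengthened ``robust reachability'' version guaranteeing that start/endpoints of long paths cover an $(\epsilon/4)$-fraction of $I$. The problem is that $S$ need not be spread evenly: if $|S \cap L(I)|$ is close to $|L(I)|$, your invariant for $L(I)$ is vacuous (the right-hand side is negative) and there is nothing to glue --- no sizable backward-reach set $X \subseteq L(I)$ can exist, so the local-expansion hypothesis cannot be invoked. Even when $S$ is balanced between the halves there is a threshold mismatch: the robust-reachability invariant applied to $L(I)$ gives $|X| \ge (\epsilon/4)|L(I)| = (\epsilon/8)|I|$, half of what the expansion property at scale $|I|$ demands. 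You acknowledge this as the ``technical crux'' but do not resolve it; as stated, the induction does not close.

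The paper sidesteps this by abandoning the uniform window-wise invariant. It introduces the notion of a $\gamma$-\emph{good} node: a node $x$ such that every interval of length $r$ ending at $x$ (and every interval of length $r$ starting at $x+1$) contains at most $(1-\gamma)r$ nodes of $S$, for \emph{all} $r$. This is precisely the condition that $S$ is sparse around $x$ at every scale, so the obstruction above cannot arise when one works outward from a good node. The paper then shows (i) at least $n - |S|\frac{1+\gamma}{1-\gamma}$ nodes are $\gamma$-good, via a short covering argument, and (ii) any two $\gamma$-good nodes are connected in $G - S$, via an induction on scale anchored at a good node --- here the sparsity hypothesis is always available and the thresholds do close (more than half of the \emph{surviving} nodes in each one-sided interval are reachable, so pigeonhole glues the two sides). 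Taking $\gamma = \epsilon/4$ and $\delta = \epsilon/10$ yields the theorem. If you want to repair your approach, the cleanest route is to adopt this good-node decomposition; your induction on scale then becomes the paper's reachability claim from a fixed good node, which does go through.
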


The proof of \thmref{ExtremeDepthRobust} relies on \lemmref{anyDelta}, \lemmref{GoodNodesConnected} and \lemmref{CountGoodNodes}. We say that $G$ is a $\delta$-local expander if for every node $x \in [n]$ and every $r \leq x, n-x$ and every pair $A \subseteq I_r(x)\doteq \{x-r-1,\ldots,x\}, B \subseteq I_r^*(x) \doteq \{x+1,\ldots,x+r\}$ with size $\left|A\right|, \left|B \right| \geq \delta r$ we have $A \times B \cap E \neq \emptyset$ i.e., there is a directed edge from some node in $A$ to some node in $B$. \lemmref{anyDelta} says that for any constant $\delta > 0$ we can construct a  family of DAGs $\{G_n^\delta\}_{n=1}^\infty$ with $\indeg=O(\log n)$ such that each $G_n^\delta$ is a $\delta$-local expander. \lemmref{anyDelta} essentially restates~\cite[Claim 1]{EGS75} except that we require that $G_n$ is a $\delta$-local expander for {\em all} $n >0$ instead of for $n$ sufficiently large. Since we require a (very) minor modification to achieve $\delta$-local expansion for {\em all} $n >0$ we sketch the proof of \lemmref{anyDelta} in \appref{MissingProof} for completeness. 

\newcommand{\LemmaDeltaExpander}{Let $\delta > 0$ be a fixed constant then there is a family of DAGs $\{G_n^\delta\}_{n=1}^\infty$ with $\indeg=O(\log n)$ such that each $G_n^\delta$ is a $\delta$-local expander.}
\begin{lemma} \cite{EGS75}\lemmlab{anyDelta}
\LemmaDeltaExpander
\end{lemma}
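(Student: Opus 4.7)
The plan is to use the probabilistic construction of~\cite{EGS75} at $O(\log n)$ dyadic scales. For each node $x \in [n]$ and each scale $i \in \{0, 1, \ldots, \lceil \log_2 n \rceil\}$, I would add $k = k(\delta)$ independently and uniformly chosen incoming edges to $x$ from $I_{2^i}(x)$; this gives $\indeg \leq k \lceil \log_2 n \rceil = O(\log n)$ since $k$ depends only on $\delta$. The ``minor modification'' needed to make the construction work for every $n \geq 1$, rather than only $n$ sufficiently large, is to additionally hardcode the short edges $(x-j, x)$ for $j \in \{1,\ldots,r_0\}$, where a threshold $r_0 = r_0(\delta, n)$ is fixed below.

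For the analysis, fix $x$ and a radius $r$. If $r \leq r_0/2$, then every $y \in I_r^*(x)$ has incoming hardcoded edges from all of $[y-r_0, y-1] \supseteq I_r(x)$, so any $A \subseteq I_r(x)$ with $|A| \geq \delta r \geq 1$ is trivially connected to every element of $B \subseteq I_r^*(x)$. If instead $r > r_0/2$, let $i$ be the unique integer with $2^{i-1} \leq r \leq 2^i$; for any fixed $A \subseteq I_r(x), B \subseteq I_r^*(x)$ with $|A|, |B| \geq \delta r$, the probability that no scale-$i$ random edge crosses from $A$ to $B$ is at most $(1 - |A|/2^i)^{k|B|} \leq (1 - \delta/2)^{k \delta r}$. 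A union bound over the at most $4^r$ candidate pairs $(A,B)$ bounds the per-$(x,r)$ failure probability by $4^r (1 - \delta/2)^{k\delta r}$, which is at most $2^{-cr}$ once $k$ is chosen large enough in terms of $\delta$, for some constant $c = c(\delta) > 0$.

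Summing over $x \in [n]$ and $r > r_0/2$ yields total failure probability $O(n \cdot 2^{-c r_0/2})$, which drops below $1/2$ as soon as $r_0 \geq (2/c) \log_2 n + O(1)$. With this choice, the probabilistic method produces a realization of $G_n^\delta$ with the $\delta$-local expansion property at every scale, and the indegree bound $r_0 + k \lceil \log_2 n \rceil = O(\log n)$ is preserved. The main subtlety is calibrating $k$ and $r_0$: the random-edge count $k$ must be chosen in terms of $\delta$ so the per-$(x,r)$ bound decays exponentially in $r$, while the hardcoded-window size $r_0$ must grow like $\log n$ to absorb the outer $n$ factor in the union bound at the smallest random scales. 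This last step is the only real departure from~\cite{EGS75}, whose proof works directly only once $n$ is sufficiently large.
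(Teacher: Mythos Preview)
Your approach is essentially the paper's: a multi-scale construction at $O(\log n)$ dyadic levels plus hardcoded short edges of length $O(\log n)$ to cover the base case where the union bound would otherwise fail due to the outer factor of $n$. The paper's version is only cosmetically different---it overlays bipartite $(\delta/10)$-expanders between each fixed block of size $2^j$ and its next $10$ neighboring blocks rather than using per-node random windows, and hardcodes edges $(i,i+j)$ for $j \leq \max\{m_{\delta/10}, 4\log n\}$---but both proofs identify the short-edge hardcoding as the sole modification needed beyond~\cite{EGS75} to make the construction work for all $n$.

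One technical slip: with your scale choice $2^{i-1}\le r\le 2^i$ it can happen that $2^i$ is as small as $r$, and then for $y$ near the right end of $I_r^*(x)$ the window $I_{2^i}(y)$ may be disjoint from $I_r(x)$, so the containment $A\subseteq I_{2^i}(y)$ implicit in your bound $(1-|A|/2^i)^{k|B|}$ fails. The fix is immediate: take $i$ one larger so that $2r\le 2^i<4r$; then $I_{2^i}(y)\supseteq I_r(x)$ for every $y\in I_r^*(x)$, hence $|A|/2^i\ge\delta/4$, and the rest of your argument goes through verbatim with $\delta/4$ in place of $\delta/2$. (The paper's block-based parametrization sidesteps this by overlaying expanders between a block and each of its next $10$ successors, so an arbitrary pair of adjacent length-$r$ intervals always fits inside some such block pair at the right scale.)
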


While \lemmref{anyDelta} essentially restates~\cite[Claim 1]{EGS75}, \lemmref{GoodNodesConnected} and \lemmref{CountGoodNodes} improve upon the analysis of~\cite{EGS75}. We say that a node $x \in [n]$ is $\gamma$-good under a subset $S\subseteq [n]$ if for all $r > 0$ we have $\left| I_r(x)\backslash S \right| \geq \gamma \left| I_r(x)\right|$ and  $\left| I_r^*(x)\backslash S \right| \geq  \gamma \left| I_r^*(x)\right|$. \lemmref{GoodNodesConnected} is similar to \cite[Claim 3]{EGS75}, which  also states that all $\gamma$-good nodes are connected by a directed path in $G-S$. However, we stress that the argument of \cite[Claim 3]{EGS75} requires that $\gamma \geq 0.5$ while \lemmref{GoodNodesConnected} has no such restriction. This is crucial to prove \thmref{ExtremeDepthRobust} where we will select $\gamma$ to be very small. 

\begin{lemma} \lemmlab{GoodNodesConnected} 
Let $G = (V=[n],E)$ be a $\delta$-local expander and let $x < y \in [n]$ both be  $\gamma$-good under $S \subseteq [n]$ then if $\delta < \min\{\gamma/2,1/4\}$ then there is a directed path from node $x$ to node $y$ in $G-S$. 
\end{lemma}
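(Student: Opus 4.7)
Proof Proposal.

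I will prove the lemma by strong induction on $r := y - x$. For the base case $r = 1$, apply the $\delta$-local expander property at $x$ with radius $1$: since $\delta < 1/4 < 1$, the sets $A = \{x\}$ and $B = \{y\}$ each have size $1 \ge \delta$, so the expander property forces an edge $x \to y$. The $\gamma$-goodness of $x$ and of $y$ at radius $1$ implies (by integrality, since $\gamma > 0$) that $x, y \notin S$, so the edge $x \to y$ lies in $G - S$.

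For the inductive step ($r \ge 2$), set $\rho = \lceil r/2 \rceil$ and $z = x + \lfloor r/2 \rfloor$, so that $I^*_{\lfloor r/2 \rfloor}(x) \subseteq I_\rho(z)$ and $I_\rho(y) = I^*_\rho(z)$ together cover $(x, y]$. Let $A_0 := I^*_{\lfloor r/2 \rfloor}(x) \setminus S$ and $B_0 := I_\rho(y) \setminus S$; by the $\gamma$-goodness of $x$ and $y$ we have $|A_0| \ge \gamma \lfloor r/2 \rfloor$ and $|B_0| \ge \gamma \rho$, and since $\gamma > 2\delta$ both quantities strictly exceed $2\delta \lfloor r/2 \rfloor$. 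The aim is to find an edge $a \to b$ with $a \in A_0$, $b \in B_0$, and both $a, b$ themselves $\gamma$-good under $S$. Once such an edge is in hand, $a - x \le \lfloor r/2 \rfloor < r$ and $y - b \le \rho - 1 < r$ make the inductive hypothesis applicable to the pairs $(x, a)$ and $(b, y)$, yielding paths $x \leadsto a$ and $b \leadsto y$ in $G - S$ that concatenate with $a \to b$ into the desired $x \leadsto y$.

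To produce the edge, restrict to the $\gamma$-good subsets $A := A_0 \cap G_\gamma$ and $B := B_0 \cap G_\gamma$ (writing $G_\gamma$ for the set of $\gamma$-good nodes under $S$) and argue $|A|, |B| \ge \delta \rho$; the $\delta$-local expander property applied at $z$ with radius $\rho$ then directly produces the required edge. The main obstacle is precisely this counting step: a node $u \in A_0$ fails $\gamma$-goodness only because some window around $u$ is locally $S$-overdense (density exceeding $1 - \gamma$). Because $x$ itself is $\gamma$-good, the aggregate $S$-mass in any window $I^*_{r'}(x)$ is at most $(1-\gamma) r'$, and a Vitali/Hardy--Littlewood-style covering of the overdense windows straddling candidates in $A_0$ bounds the number of failures; the slack $\gamma - 2\delta > 0$ is what closes the count. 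A symmetric analysis anchored at $y$ handles $B_0$. This counting is precisely where the classical EGS argument breaks down once $\gamma < 1/2$ (there one gets $G_\gamma$-density $> 1/2$ for free), and executing it cleanly under the weaker hypothesis $\delta < \gamma/2$ is where I expect the real work to lie.
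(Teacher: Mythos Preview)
Your induction requires intermediate nodes $a \in A_0$ and $b \in B_0$ that are themselves $\gamma$-good, and you defer the existence of enough such nodes to a ``Vitali/Hardy--Littlewood-style covering'' that the slack $\gamma - 2\delta > 0$ is supposed to close. This step fails: one can have $x$ and $y$ both $\gamma$-good while \emph{every} node of $A_0$ is $\gamma$-bad, so that $A = A_0 \cap G_\gamma = \emptyset$. For a concrete instance take $\gamma = 1/k$, set $S$ empty outside $I^*_{\lfloor r/2\rfloor}(x)$, and inside that interval partition into consecutive blocks of length $k$, placing into $S$ all but the first element of each block. Then $x$ is (borderline) $\gamma$-good and $y$ is $\gamma$-good, yet every $u \in A_0$ is the first element of some block and has $I^*_{k-1}(u) \subseteq S$, hence is $\gamma$-bad. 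The covering heuristic you invoke only gives something like $|A_0 \setminus G_\gamma| \le \frac{2\gamma}{1-\gamma}|S \cap (x,y]|$, which can be of order $2\gamma r$ and thus swamp $|A_0| \ge \gamma\lfloor r/2\rfloor$; the quantity $\gamma - 2\delta$ does not enter this comparison at all.

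The paper's proof avoids this by \emph{not} tracking $\gamma$-goodness of intermediate nodes. Instead it proves an auxiliary claim (by induction on $r$, doubling): from the single $\gamma$-good node $x$, all but $2\delta r$ of the nodes of $I^*_r(x)\setminus S$ are reachable in $G-S$. The inductive step applies $\delta$-local expansion between the already-reached set inside $I^*_r(x)$ (of size at least $(\gamma-\delta)r > \delta r$) and the unreached set inside $I^*_{r}(x+r)$; if the latter had size $\ge \delta r$ there would be an edge, extending reachability and giving a contradiction. Once this claim holds, more than half of $I^*_{y-x}(x)\setminus S = I_{y-x}(y)\setminus S$ is reachable from $x$, and symmetrically more than half can reach $y$; pigeonhole then gives a common node $z$ through which $x$ reaches $y$. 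The crucial conceptual difference is that the invariant carried through the induction is \emph{reachability from $x$}, not \emph{$\gamma$-goodness}; the former is preserved by the expansion step even when the local $S$-density sits at the limit $1-\gamma$ and no intermediate $\gamma$-good nodes exist.
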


\lemmref{CountGoodNodes} shows that {\em almost all} of the nodes in $G-S$ are $\gamma$-good. It immediately follows that $G_n-S$ contains a directed path running through {\em almost all} of the nodes $[n]\setminus S$. While  \lemmref{CountGoodNodes} may appear similar to \cite[Claim 2]{EGS75} at first glance, we again stress one crucial difference. The proof of \cite[Claim 2]{EGS75} is only sufficient to show that at least $n- 2|S|/(1-\gamma) \geq n-2|S|$ nodes are $\gamma$-good. At best this would allow us to conclude that $G_n$ is $(e,n-2e)$-depth robust. Together \lemmref{CountGoodNodes} and \lemmref{GoodNodesConnected} imply that if $G_n$ is a $\delta$-local expander ($\delta < \min\{\gamma/2,1/4\}$) then $G_n$ is $\left(e,n-e\frac{1+\gamma}{1-\gamma}\right)$-depth robust.

\begin{lemma} \lemmlab{CountGoodNodes} For any DAG $G = ([n],E)$ and any subset $S \subseteq [n]$ of nodes at least $n-|S|\frac{1+\gamma}{1-\gamma}$ of the remaining nodes in $G$ are $\gamma$-good with respect to $S$.  
\end{lemma}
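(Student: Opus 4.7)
The plan is to bound separately the number of nodes in $[n]\setminus S$ that violate the left-sided condition (call these \emph{left-bad}) and the right-sided condition (\emph{right-bad}), then take a union bound. By the symmetry of the definition under reversing the indexing of $[n]$, it suffices to treat the left-bad case; the same argument applied to the reversed sequence handles right-bad.

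To count non-$S$ left-bad nodes I introduce the prefix sum $f(x) = |S \cap [1,x]|$ and the potential $g(x) = f(x) - (1-\gamma)x$, with $g(0) = 0$. Since $|I_r(x) \cap S| = f(x) - f(x-r)$, the failure of the left condition at radius $r$ amounts to $g(x) > g(x-r)$, and so $x$ is left-bad precisely when $g(x) > \min_{0 \le y < x} g(y)$. Note that $g$ steps up by $\gamma$ at every $S$-node and down by $1-\gamma$ at every non-$S$-node. I would then partition the left-bad nodes according to maximal \emph{excursions} of $g$ above its running minimum: intervals $[s,t] \subseteq [n]$ such that $g(s-1)$ equals the running minimum just before $s$, $g(u) > g(s-1)$ for all $u \in [s,t-1]$, and $g(t) \le g(s-1)$. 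The left-bad nodes are then exactly the union of the sub-intervals $[s,t-1]$ over all excursions (the terminal node $t$ of each excursion is left-good).

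Within one excursion let $j$ count its $S$-nodes and $k$ its non-$S$-nodes. The first node $s$ must lie in $S$ (since $g$ jumps upward at $s$), and the terminating node $t$ cannot lie in $S$ (else $g(t) > g(t-1) > g(s-1)$, contradicting excursion termination). The main technical step, which I expect to require the most care, is to combine $g(t-1) > g(s-1)$ with the fact that $t \notin S$ to derive the sharp inequality $j\gamma - (k-1)(1-\gamma) > 0$, i.e.\ $k - 1 < j\gamma/(1-\gamma)$. The only non-$S$ left-bad nodes in the excursion are the non-$S$ nodes in $[s,t-1]$, of which there are $k-1$, so their count is at most $j\gamma/(1-\gamma)$. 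Summing over all excursions and using $\sum_i j_i \le |S|$ bounds the non-$S$ left-bad nodes by $|S|\gamma/(1-\gamma)$; doubling for right-bad and subtracting from $n-|S|$ yields at least $(n-|S|) - 2|S|\gamma/(1-\gamma) = n - |S|(1+\gamma)/(1-\gamma)$ good nodes, as claimed.
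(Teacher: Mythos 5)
Your proposal is correct, and it reaches the same bound $n-|S|\frac{1+\gamma}{1-\gamma}$ by a genuinely different route than the paper. The paper proceeds by a greedy covering argument: it selects a maximal sequence of $\gamma$-bad nodes $x_1^*,x_2^*,\ldots$ with \emph{disjoint} forward witness intervals $I_{r_i^*}^*(x_i^*)$, observes that each such witness interval is at least a $(1-\gamma)$-fraction inside $S$, and concludes that the union $F$ of these intervals satisfies $|F\setminus S|\le \gamma|S|/(1-\gamma)$; it then repeats with backward witnesses to build $B$ and takes $T=S\cup F\cup B$ as a superset of the bad nodes. Your argument instead encodes left-badness as a condition on a running potential $g(x)=|S\cap[1,x]|-(1-\gamma)x$ --- namely $g(x)$ exceeds its running minimum --- and decomposes the bad region into maximal excursions of $g$ above that running minimum, then does a first-step/last-step analysis of each excursion (first node in $S$, last node out of $S$, $g$ rises by $\gamma$ on $S$-nodes and falls by $1-\gamma$ on non-$S$-nodes) to get $k-1< j\gamma/(1-\gamma)$ per excursion. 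This is the classical ballot-problem style counting and it is, if anything, a bit tighter bookkeeping: the excursion intervals are canonically determined (no choice of which witness $r_i^*$ to use), every left-bad node is manifestly covered, and the disjointness used to sum $\sum_i j_i\le|S|$ is immediate. The paper's greedy-interval version has the same spirit but edges such as whether the selected nodes $x_i^*$ themselves lie in $F\cup B\cup S$ need a word (the witness interval $I_{r_i^*}^*(x_i^*)$ does not contain $x_i^*$), whereas the excursion partition sidesteps that. Both approaches deliver the key constant-factor improvement over the earlier bound $n-2|S|/(1-\gamma)$.
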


\begin{proofof}{\thmref{ExtremeDepthRobust}}
By \lemmref{anyDelta}, for any $\delta > 0$, there is a family of DAGs $\{J_n^\delta\}_{n=1}^\infty$ with $\indeg=O(\log n)$ such that for each $n \geq 1$ the DAG $J_n^\delta$ is a $\delta$-local expander. Given $\epsilon \in (0,1]$ we will set $G_n^{\epsilon} = J_n^\delta$ with $\delta=\epsilon/10 < 1/4$ so that $G_n^\epsilon$ is a $(\epsilon/10)$-local expander. We also set $\gamma = \epsilon/4 > 2\delta$. Let $S \subseteq V_n$ of size $|S| \leq e$ be given. Then by \lemmref{CountGoodNodes}  at least $n-e\frac{1+\gamma}{1-\gamma}$ of the nodes are $\gamma$-good and by\lemmref{GoodNodesConnected} there is a path connecting all $\gamma$-good nodes in $G-S$. Thus, the DAG $G_n^\epsilon$ is $\left(e,n-e\frac{1+\gamma}{1-\gamma}\right)$-depth robust for any $e \leq n$. In particular, if $\alpha = e/n$ and $\beta = 1-\alpha\frac{1+\gamma}{1-\gamma}$ then the graph is $(\alpha n,\beta n)$-depth robust. Finally we verify that
\[ n- \alpha n -\beta n = -e + e \alpha\frac{1+\gamma}{1-\gamma} = e\frac{2\gamma}{1-\gamma} \leq  n \frac{\epsilon}{2-\epsilon/2} \leq \epsilon n \ . \]
 \QED
\end{proofof}

The proof of \lemmref{GoodNodesConnected} follows by induction on the distance $|y-x|$ between $\gamma$-good nodes $x$ and $y$. Our proof extends a similar argument from \cite{EGS75} with one important difference. \cite{EGS75} argued inductively that for each good node $x$ and for each $r>0$ over half of the nodes in $I_r^*(x)$ are reachable from $x$ and that $x$ can be reached from over half of the nodes in $I_r(x)$ --- this implies that $y$ is reachable from $x$ since there is at least one node $z \in I_{|y-x|}^*(x)=I_{|y-x|}(y)$ such that $z$ can be reached from $x$ and $y$ can be reached from $z$ in $G-S$. Unfortunately, this argument inherently requires that $\gamma \geq 0.5$ since otherwise we may have at least $\left| I_r^*(x) \cap S \right| \geq (1-\gamma) r$ nodes in the interval $I_r(x)$ that are not reachable from $x$. To get around this limitation we instead show, see \claimref{numReachable}, that more than half of the nodes in the set $I_r^*(x) \setminus S$  are reachable from $x$ and that more than half of the nodes in the set  $I_r(x) \setminus S$ are reachable from $x$ --- this still suffices to show that $x$ and $y$ are connected since by the pigeonhole principle there is at least one node $z \in  I_{|y-x|}^*(x)\setminus S =I_{|y-x|}(y)\setminus S$ such that $z$ can be reached from $x$ and $y$ can be reached from $z$ in $G-S$.

\begin{claim} \claimlab{numReachable}
Let $G = (V=[n],E)$ be a $\delta$-local expander, let $x \in [n]$ be a $\gamma$-good node under $S \subseteq [n]$ and let $r >0$ be given. If $\delta < \gamma/2$ then all but $2\delta r$  of the nodes in $I_r^*(x)\backslash S$ are reachable from $x$ in $G-S$. Similarly, $x$ can be reached from all but $2\delta r$ of the nodes in $I_r(x)\backslash S$. In particular, if $\delta < 1/4$ then more than half of the nodes in $I_r^*(x)\backslash S$ (resp. in $I_r(x)\backslash S$) are reachable from $x$ (resp. $x$ is reachable from) in $G-S$.  
\end{claim}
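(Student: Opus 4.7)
The plan is to establish the claim by strong induction on $r$, using $\delta$-local expansion to propagate reachability forward from $x$. In fact I would prove the slightly sharper bound $|U_r|\leq \delta r$, where $T^*$ denotes the set of nodes reachable from $x$ in $G-S$ and $U_r := (I_r^*(x)\setminus S)\setminus T^*$; this implies the claimed $2\delta r$ bound a fortiori, and the slack absorbs minor integer-rounding issues.

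For the base case I take $r\leq 1/\delta$ and apply $\delta$-local expansion at $x$ itself with radius $r$: the singleton $A=\{x\}\subseteq I_r(x)$ and every singleton $B=\{y\}\subseteq I_r^*(x)$ satisfy $|A|=|B|=1\geq \delta r$, so there must be a direct edge $x\to y$ in $G$. Hence every node of $I_r^*(x)\setminus S$ is an out-neighbor of $x$ in $G-S$ and $|U_r|=0$. For the inductive step at $r>1/\delta$, I split $I_r^*(x)$ into a first part $I_s^*(x)$ and a second part $[x+s+1,x+r]$ at an integer point $z=x+s$ chosen so that $s\in [r/2,\ r(1-\delta)/(2-\gamma)]$; this interval is non-empty precisely because $\gamma>2\delta$. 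The inductive hypothesis gives $|U\cap I_s^*(x)|\leq \delta s$, and combined with $\gamma$-goodness this yields $|T^*\cap I_s^*(x)|\geq(\gamma-\delta)s$. I then apply $\delta$-local expansion at $z$ with radius $r-s$: the lower bound $s\geq r/2$ ensures $I_{r-s}(z)\subseteq I_s^*(x)$, while the upper bound $s\leq r(1-\delta)/(2-\gamma)$ yields, after a short calculation, $|T^*\cap I_{r-s}(z)|\geq \delta(r-s)$. Note that $I_{r-s}^*(z)$ is exactly the second part. If that second part contained more than $\delta(r-s)$ unreachable-non-$S$ nodes, local expansion would force an edge from some $a\in T^*\cap I_{r-s}(z)$ to some $b\in I_{r-s}^*(z)\setminus(T^*\cup S)$, contradicting $b\notin T^*$. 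Thus $|U_r|\leq \delta s+\delta(r-s)=\delta r$.

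The main obstacle, and the reason a naive midpoint split fails, is that working directly with the claim's bound $|U_{r'}|\leq 2\delta r'$ would only give $|T^*\cap I_s^*(x)|\geq(\gamma-2\delta)s$, requiring $\gamma\geq 3\delta$ rather than $\gamma>2\delta$ to meet the local-expansion threshold of $\delta s$. Simultaneously sharpening the inductive hypothesis to $\delta r$ and allowing the split point $s$ to slide from $r/2$ up to $r(1-\delta)/(2-\gamma)$ is precisely what closes this gap under the stated condition $\delta<\gamma/2$.

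The symmetric statement, that $x$ is reachable from all but $2\delta r$ of the nodes in $I_r(x)\setminus S$, follows by applying the identical argument to the reverse DAG, which is still a $\delta$-local expander and under which $x$ remains $\gamma$-good. Finally, the "in particular" conclusion is immediate from the tighter bound: whenever $\delta<\gamma/2$ we have $|T^*\cap I_r^*(x)|\geq(\gamma-\delta)r > (\gamma/2)r \geq |I_r^*(x)\setminus S|/2$, so strictly more than half of $I_r^*(x)\setminus S$ is reachable from $x$ in $G-S$.
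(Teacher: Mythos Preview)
Your argument is essentially the same as the paper's --- strong induction on $r$, the identical base case $r\le 1/\delta$ using single-vertex local expansion, and an inductive step that splits $I_r^*(x)$ into a left part handled by the hypothesis and a right part handled by one application of $\delta$-local expansion. The paper organizes the induction dyadically (it proves the sharper bound $|U_r|\le\delta r$ only at $r=2^k\delta^{-1}$, splits there, and applies expansion at radius equal to the left part), whereas you aim for $|U_r|\le\delta r$ at every $r$ by choosing a sliding split point $s\in[r/2,\,r(1-\delta)/(2-\gamma)]$ and applying expansion at the smaller radius $r-s$. Both strategies are valid and yield the same claim; your version trades the clean dyadic bookkeeping for a short algebraic check that the interval for $s$ is nonempty. (One caveat: that real interval can have length below $1$ when $r$ is only slightly above $1/\delta$, so an \emph{integer} $s$ need not exist there. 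This is the ``rounding'' you allude to; the paper's dyadic split sidesteps it automatically. You should either handle those few small $r$ separately or fall back to the weaker $2\delta r$ bound there, which suffices for the claim.)

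Your final paragraph, however, contains a genuine error. The chain
\[
|T^*\cap I_r^*(x)|\ \ge\ (\gamma-\delta)r\ >\ (\gamma/2)r\ \ge\ \tfrac{1}{2}|I_r^*(x)\setminus S|
\]
fails at the last step: $\gamma$-goodness gives $|I_r^*(x)\setminus S|\ge\gamma r$, so $(\gamma/2)r\le\tfrac{1}{2}|I_r^*(x)\setminus S|$, the reverse of what you wrote. (Take $S\cap I_r^*(x)=\emptyset$: then $(\gamma/2)r<r/2=\tfrac12|I_r^*(x)\setminus S|$.) The fix is to compare $|U_r|$, not $|T^*|$, to half the surviving interval:
\[
|U_r|\ \le\ \delta r\ <\ \tfrac{\gamma}{2}\,r\ \le\ \tfrac{1}{2}\,|I_r^*(x)\setminus S|,
\]
hence $|T^*\cap I_r^*(x)|=|I_r^*(x)\setminus S|-|U_r|>\tfrac12|I_r^*(x)\setminus S|$. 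This is exactly how the paper derives the ``more than half'' conclusion in the proof of \lemmref{GoodNodesConnected}, and it needs only $\delta<\gamma/2$ (the extra hypothesis $\delta<1/4$ is not required once you have the sharper $\delta r$ bound).
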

\begin{proof}
We prove by induction that (1) if $r = 2^k \delta^{-1}$ for some integer $k$ then all but $\delta r$ of the nodes in $I_r^*(x)\backslash S$ are reachable from $x$ and, (2) if $2^{k-1}< r < 2^k \delta^{-1}$ then then all but $2\delta r$ of the nodes in $I_r^*(x)\backslash S$ are reachable from $x$. For the base cases we observe that if $r \leq \delta^{-1}$ then, by definition of a $\delta$-local expander, $x$ is directly connected to all  nodes in $I_r^*(x)$ so all nodes in $I_r(x)\backslash S $ are reachable. 

 Now suppose that claims (1) and (2) holds for each $r' \leq r= 2^k \delta^{-1}$. Then we show that the claim holds for each $r < r' \leq 2r=2^{k+1} \delta^{-1}$. In particular, let $A \subseteq I_r^*(x)\backslash S$ denote the set of nodes in $I_r^*(x)\backslash S$ that are reachable from $x$ via a directed path in $G-S$ and let $B \subseteq I_{r'-r}^*(x+r) \backslash S$ be the set of all nodes in $I_{r'-r}^*(x+r) \backslash S$ that are {\em not reachable} from $x$ in $G-S$. Clearly, there are no directed edges from $A$ to $B$ in $G$ and by induction we have $|A| \geq \left| I_{r}^*(x)\backslash S \right|-\delta r \geq r(\gamma-\delta) > \delta r$. Thus, by $\delta$-local expansion $|B| \leq r\delta$. Since, $\left| I_{r}^*(x)\backslash (S\cup A) \right|\leq \delta r$ at most $\left| I_{r'}^*(x)\backslash (S\cup A) \right| \leq |B|+\delta r \leq 2\delta r \leq 2 \delta r'$ nodes in $I_{2r}^*(x) \backslash S$ are not reachable from $x$ in $G-S$. Since, $r' > r$ the number of unreachable nodes is at most $2\delta r \leq 2\delta r'$, and if $r'=2r$ then the number of unreachable nodes is at most $2\delta r = \delta r'$. 
 
 A similar argument shows that $x$ can be reached from all but $2 \delta r$ of the nodes in $I_r(x)\backslash S$ in the graph $G-S$.
\end{proof}

\begin{proofof}{\lemmref{GoodNodesConnected}}
By \claimref{numReachable} for each $r$ we can reach $\left| I_r^*(x)\backslash S \right| - \delta  r =\left| I_r^*(x)\backslash S \right|\left(1- \delta \frac{\left| I_r^*(x)\right|}{\left| I_r^*(x)\backslash S \right|}\right) \geq \left| I_r^*(x)\backslash S \right| \left(1-\frac{\delta}{\gamma}\right) > \frac{1}{2}  \left| I_r^*(x)\backslash S \right| $ of the nodes in $I_r^*(x)\backslash S $ from the node $x$ in $G-S$.  Similarly, we can reach $y$ from more than $\frac{1}{2}  \left| I_r(x)\backslash S \right| $ of the nodes in $I_r(y)\backslash S$. Thus, by the pigeonhole principle we can find at least one node $z \in  I_{|y-x|}^*(x)\setminus S =I_{|y-x|}(y)\setminus S$ such that $z$ can be reached from $x$ and $y$ can be reached from $z$ in $G-S$. \QED
\end{proofof}
 
\lemmref{CountGoodNodes} shows that almost all of the nodes in $G-S$ are $\gamma$-good. The proof is again similar in spirit to an argument of \cite{EGS75}. In particular, \cite{EGS75} constructed a superset $T$ of the set of all $\gamma$-bad nodes and then bound the size of this superset $T$. However, they only prove that $BAD \subset T \subseteq F \cup B$ where $|F|,|B| \leq |S|/(1-\gamma)$. Thus, we have  $|BAD| \leq |T| \leq 2|S|/(1-\gamma)$. Unfortunately, this bound is not sufficient for our purposes. In particular, if $|S| = n/2$ then this bound does not rule out the possibility that $|BAD| = n$ so that none of the remaining nodes are good. Instead of bounding the size of the superset $T$ directly we instead bound the size of the set $T\setminus S$ observing that $|BAD| \leq |T| \leq |S| + |T\setminus S|$. In particular, we can show that $|T\setminus S| \leq \frac{2 \gamma|S|}{1-\gamma}$. We then have $|GOOD| \geq n-|T| = n-|S|-|T\backslash S| \geq n-|S|-\frac{2\gamma|S|}{1-\gamma}$. 

\begin{proofof}{\lemmref{CountGoodNodes}}
We say that a $\gamma$-bad node $x$ has a forward (resp. backwards) witness $r$ if $\left|I_r^*(x) \backslash S \right| > \gamma r$. Let $x_1^*,r_1^*$ be the lexicographically first $\gamma$-bad node with a forward witness. Once $x_1^*,r_1^*,\ldots,x_k^*,r_k^*$ have been define let $x_{k+1}^*$ be the lexicographically least $\gamma$-bad node such that $x_{k+1}^* > x_k^*+r_k^*$ and $x_{k+1}^*$ has  a forward witness $r_{k+1}^*$ (if such a node exists). Let    $x_1^*,r_1^*,\ldots,x_k^*,r_{k*}^*$ denote the complete sequence, and similarly define a maximal sequence $x_1,r_1,\ldots,x_k,r_{k}$ of $\gamma$-bad nodes with backwards witnesses such that $x_i-r_i > x_{i+1}$ for each $i$. 

Let \[F = \bigcup_{i=1}^{k^*} I_{r_i^*}^*\left(x_i^*\right) ~~~~\mbox{, and}~~~~B = \bigcup_{i=1}^{k} I_{r_i}\left(x_i\right) \] 
Note that for each $i \leq k^*$ we have $\left| I_{r_i^*}^*\left(x_i^*\right) \backslash S\right| \leq \gamma r$. Similarly, for each $i \leq k$ we have $\left| I_{r_i}\left(x_i\right) \backslash S\right| \leq \gamma r$. Because the sets $I_{r_i^*}^*\left(x_i^*\right) $ are all disjoint (by construction) we have \[\left| F \backslash S \right| \leq \gamma \sum_{i=1}^{k^*} r_i^* = \gamma|F| \ .\] Similarly,  $\left| B \backslash S \right| \leq \gamma |B|$. We also note that at least $(1-\gamma)|F|$ of the nodes in $|F|$ are in $|S|$. Thus, $|F|(1-\gamma) \leq |S|$ and similarly $|B|(1-\gamma) \leq |S|$. We conclude that $\left| F \backslash S \right| \leq \frac{\gamma |S|}{1-\gamma}$  and that $\left| B \backslash S \right| \leq \frac{\gamma |S|}{1-\gamma}$. 

To finish the proof let $T = F \cup B =S \cup \left(F \backslash S \right) \cup \left(B \backslash S \right)$. Clearly, $T$ is a superset of all $\gamma$-bad nodes. Thus, at least $n-|T| \geq n-|S|\left( 1 + \frac{2\gamma}{1-\gamma}\right) = n-|S|\frac{1+\gamma}{1-\gamma}$ nodes are good. 

\end{proofof}

We also remark that \lemmref{anyDelta} can be modified to yield a family of DAGs $\{G_n\}_{n=1}^\infty$ with $\indeg(G_n)=O\left( \log n \log^* n\right)$ such that $G_n$ is a $\delta_n$ local expander for some sequence $\{\delta_n\}_{n=1}^\infty$ converging to $0$. We can define a sequence $\{\gamma_n\}_{n=1}^\infty$ such that $\frac{1+\gamma_n}{1-\gamma_n}$ converges to $1$ and $2\gamma_n > \delta_n$ for each $n$. \lemmref{anyDelta} and \lemmref{CountGoodNodes} then imply that each $G_n$ is $\left(e,n-e\frac{1+\gamma_n}{1-\gamma_n}\right)$-depth robust for any $e \leq n$. 

\subsection{Additional Applications of Extremely Depth Robust Graphs}
We now discuss additional applications of \thmref{ExtremeDepthRobust}. 

\subsubsection{Application 0: Proofs of Sequential Work} As we previously noted Mahmoody et al.~\cite{MMV13} used extremely depth-robust graphs to construct efficient Proofs-Of-Sequential Work. In a proof of sequential work a prover wants to convince a verifier that he computed a hash chain of length $n$ involving the input value $x$ without requiring the verifier to recompute the entire hash chain. Mahmoody et al.~\cite{MMV13} accomplish this by requiring the prover computes labels $L_1,\ldots,L_n$ by ``pebbling'' an extremely depth-robust DAG $G_n$ e.g., $L_{i+1} = H\left(x \| L_{v_1} \| \ldots \| L_{v_\d}\right)$  where $\{v_1,\ldots,v_\d\} = \parents(i+1)$ and $H$ is a random oracle. The prover then commits to the labels $L_1,\ldots,L_n$ using a Merkle Tree and sends the root of the tree to the verifier who can audit randomly chosen labels e.g., the verifier audits label $L_{i+1}$ by asking the prover to reveal the values $L_{i+1}$ and $L_v$ for each $v \in \parents(i+1)$. If the DAG is extremely-depth robust then either a (possibly cheating) prover make at least $(1-\epsilon)n$ sequential queries to the random oracle, or the the prover will fail to convince the verifier with high probability~\cite{MMV13}. 

We note that the parameter $\d = \indeg( G_n)$ is crucial to the efficiency of the Proofs-Of-Sequential Work protocol since each audit challenge requires the prover to reveal $\d+1$ labels in the Merkle tree. The DAG $G_n$ from ~\cite{MMV13} has $\indeg( G_n) = O\left( \log^2 n \cdot \polylog \left(\log n\right)\right)$ while our DAG $G_n$ from \thmref{ExtremeDepthRobust} has maximum indegree $\indeg( G_n) = O\left( \log n\right)$. Thus, we can improve the communication complexity of the Proofs-Of-Sequential Work protocol  by a factor of $\Omega(\log n \cdot \polylog \log n)$.

\subsubsection{Application 1: Graphs with Maximum Cumulative Cost}
We now show that our family of extreme depth-robust DAGs has the highest possible cumulative pebbling cost even in terms of the {\em constant} factors. In particular, for any constant $\eta >0$ the family $\{G_n^\eta\}_{n=1}^\infty$ of DAGs from \thmref{ExtremeDepthRobust}  has $\ppeb_\Ccc(G_n) \geq \frac{n^2(1-\eta)}{2}$ and $\indeg(G_n)=O(\log n)$. By comparison,  $\ppeb_\Ccc(G_n) \leq \frac{n^2+n}{2}$ for {\em any} DAG $G \in \G_n$ --- even if $G$ is the complete DAG. 

Previously, Alwen et al.~\cite{ABP17} showed that any $(e,d)$-depth robust DAG $G$ has $\ppeb_\Ccc(G) > ed$ which implies that their is a family of  DAG $G_n$ with $\ppeb_\Ccc(G_n) = \Omega\left( n^2 \right)$~\cite{EGS75}. We stress that we need new techniques to prove \thmref{optimalCC}. Even if a DAG $G \in \G_n$ were $(e,n-e)$-depth robust for every $e \geq 0$ (the only DAG actually satisfying this property is the compete DAG $K_n$)~\cite{ABP17} only implies that $\ppeb_\Ccc(G_n) \geq \max_{e \geq 0} e(n-e) = n^2/4$. Our basic insight is that at time $t_i$, the first time a pebble is placed on node $i$ in $G_n^\epsilon$, the node $i+\gamma i$ is $\gamma$-good and is therefore reachable via an undirected path from all of the other $\gamma$-good nodes in $[i]$. If we have $|P_{t_i}| < \left( 1-\eta/2\right) i$ then we can show that at least $\Omega(\eta i)$ of the nodes in $[i]$ are $\gamma$-good. We can also show that these $\gamma$-good nodes form a depth robust subset and will cost $\Omega\left((\eta-\epsilon)^2i^2\right)$ to repebble them by~\cite{ABP17}. Since, we would need to pay this cost by time $t_{i+\gamma i}$ it is less expensive to simply ensure that $|P_{t_i}| >  \left( 1-\eta/2\right) i$. We refer an interested reader to \appref{MissingProof} for a complete proof.
  
\newcommand{\thmrefOptimalCC}{For any constant $0 < \eta < 1$ the family $\{G_n^\eta\}_{n=1}^\infty$ of DAGs from \thmref{ExtremeDepthRobust} has $\indeg(G_n) = O\left( \log n\right)$ and $\ppeb_\Ccc(G_n) \geq \frac{n^2\left(1-\eta\right)}{2}$.}
\begin{theorem} \thmlab{optimalCC}
\thmrefOptimalCC
\end{theorem}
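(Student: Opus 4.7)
The plan is to amortize $\ppeb_\Ccc(P)$ over the $n$ indices of $G_n^\eta$. For any legal parallel pebbling $P=(P_0,\ldots,P_T)$ and each $i\in[n]$, define $t_i := \min\{t : i \in P_t\}$. The central per-index bound to establish is that essentially $|P_{t_i}| \ge (1-\eta/2)\, i$ for every $i$. Summing these charges across $i$ and converting $\sum_i |P_{t_i}|$ into a bound on $\sum_t |P_t|$ via the large depth of $G_n^\eta$ will then deliver $\ppeb_\Ccc(P) \ge n^2(1-\eta)/2$ for all sufficiently large $n$.

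The core is a dichotomy I would prove by adapting \lemmref{CountGoodNodes}, \lemmref{GoodNodesConnected}, and the basic depth-robustness/cumulative-cost bound $\ppeb_\Ccc(H)\ge ed$ of~\cite{ABP17}. Fix $i$ and write $S:=P_{t_i}$; pick a small $\gamma=\gamma(\eta)\in(0,\eta/4)$ and set $j:=i+\lceil\gamma i\rceil$. If $|S|<(1-\eta/2)\,i$, then adapting the proof of \lemmref{CountGoodNodes} to the prefix $G_n^\eta[[j]]$ (whose $\delta$-local expansion on length-scales $r\le i$ is inherited from $G_n^\eta$) shows that at least $\Omega(\eta i)$ indices of $[i]$ are $\gamma$-good under $S$; call this set $T$. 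By \lemmref{GoodNodesConnected} every pair in $T\cup\{j\}$ is joined by a directed path in $G_n^\eta-S$, so $T\subseteq \ancestors(j)$ and the induced subgraph $G_n^\eta[T]$ is $(\alpha|T|,\beta|T|)$-depth-robust for some constants with $\alpha+\beta$ arbitrarily close to $1$. Applying the $\ppeb_\Ccc\ge ed$ bound of~\cite{ABP17} to $G_n^\eta[T]$ then forces cumulative cost $\Omega((\eta i)^2)$ during the interval $(t_i,t_j]$, which for $\gamma$ small enough compared with $\eta$ more than offsets the $O(\eta i)$ ``savings'' of having $|P_{t_i}|$ fall short of $(1-\eta/2)\,i$. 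Hence in either branch of the dichotomy one can legitimately charge $(1-\eta/2)\,i$ units of cumulative cost to index $i$.

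To aggregate these per-index charges, note that first-pebbling times strictly increase along any directed path, so $\depth(G_n^\eta)\ge(1-\eta)n$ guarantees at least $(1-\eta)n$ distinct values among $\{t_i\}_{i=1}^n$, and in the all-distinct (sequential) case the identity $\sum_i|P_{t_i}|\le\sum_t|P_t|$ is immediate. In the parallel case I would further show that once $|P_{t_i}|\ge(1-\eta/2)\,i$, the configuration cannot shrink below this threshold before $t_{i+1}$ without triggering a depth-robust repebbling cost that again charges back to index $i$; grouping the pebbling steps by the largest index first-pebbled on or before each one then yields $\sum_t|P_t|\ge(1-\eta/2)\binom{n+1}{2}\ge n^2(1-\eta)/2$. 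The main obstacle is precisely this parallel-to-sequential amortization: intervals $(t_i,t_j]$ for distinct bad indices $i$ can overlap arbitrarily, so a naive sum of the $\Omega((\eta i)^2)$ charges overcounts. The standard remedy is a greedy pass over the bad indices in decreasing order, charging only those whose interval is disjoint from previously charged intervals; choosing $\gamma$ small absorbs the resulting constant-factor loss into the slack between $(1-\eta/2)$ and $(1-\eta)$.
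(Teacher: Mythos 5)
Your proposal is essentially the same argument as the paper's: you introduce the first-pebbling times $t_i$, set up the dichotomy $|P_{t_i}| \ge (1-\eta/2)\,i$ versus $|P_{t_i}| < (1-\eta/2)\,i$, and in the second branch apply \lemmref{CountGoodNodes} to extract $\Omega(\eta i)$ $\gamma$-good nodes in $[i]$, use \lemmref{GoodNodesConnected} to show they lie below the nearby node $i+\gamma i$ (so they must be repebbled by time $t_{i+\gamma i}$), and invoke the $\ppeb_\Ccc \ge ed$ bound of \cite{ABP17} to charge $\Omega((\eta i)^2)$ cost to the interval $(t_i, t_{i+\gamma i}]$, finishing with a greedy pass that selects disjoint such intervals to avoid double-counting. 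The only cosmetic differences from the paper's proof are that you run the greedy pass in decreasing rather than increasing index order and that you phrase the distinctness of the $t_i$ in terms of $\depth(G_n^\eta) \ge (1-\eta)n$, whereas the paper implicitly uses the fact that $G_n^\eta$ contains the path $1\to 2\to\cdots\to n$, making \emph{all} the $t_i$ strictly increasing (which also makes your extra worry about the configuration shrinking between $t_i$ and $t_{i+1}$ unnecessary: the paper never charges per time step in Case 1, only the single term $|P_{t_i}|$, and these times are distinct).
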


\subsubsection{Cumulative Space in Parallel-Black Sequential-White Pebblings}
The black-white pebble game~\cite{cook1976storage} was introduced to model nondeterministic computations. White pebbles correspond to nondeterministic guesses and can be placed on any vertex at any time, but these pebble can only be removed when (e.g., when we can verify the correctness of this guess). Formally, black white-pebbling configuration $P_i = \left( P_i^W,P_i^B\right)$ of a DAG $G=([n],E)$ consists of two subsets $P_i^W, P_i^B \subseteq [n]$ where $P_i^B$ (resp. $P_i^W$) denotes the set of nodes in $G$ with black (resp. white) pebbles on them at time $i$. For a legal  parallel-black sequential-white pebbling $P = (P_0,\ldots,P_t) \in  \Peb_{G}^{BW}$ we require that  we start with no pebbles on the graph i.e., $P_0 = (\emptyset,\emptyset)$ and that all white pebbles are removed by the end i.e., $P_t^W = \emptyset$ so that we verify the correctness of every nondeterministic guess before terminating. If we place a black pebble on a node $v$ during round $i+1$ then we require that all of $v$'s parents have a pebble (either black or white) on them during round $i$ i.e., $\parents\left(P_{i+1}^B \setminus P_{i}^B\right) \subseteq P_{i}^B \cup P_i^W$. In the  Parallel-Black Sequential-White model we require that at most one new white pebble is placed on the DAG in every round i.e., $\left|P_i^W \setminus P_{i-1}^W \right| \leq 1$ while no such restrict applies for black pebbles.  See \defref{pebbling} in \appref{MissingProof} for a more formal definition of the parallel-black sequential white pebbling game. 

We can use our construction of a family of extremely depth-robust DAG $\{G_n\}_{n=1}^\infty$ to establish new upper and lower bounds for 
 
Alwen et al.~\cite{alwen2017cumulative} previously showed that in the parallel-black sequential white pebbling model an $(e,d)$-depth-robust DAG $G$ requires cumulative space at least $\peb^{BW}_{cc}(G) \doteq \min_{P \in \Peb_{G}^{BW}}\sum_{i=1}^t \left| P_i^B \cup P_i^W \right| =  \Omega\left(e\sqrt{d}\right)$ or at least $\geq ed$ in the sequential black-white pebbling game. In this section we show that any $(e,d)$-reducible DAG admits a parallel-black sequential white pebbling with cumulative space at most $O(e^2+dn)$ which implies that any DAG with constant indegree admits a parallel-black sequential white pebbling with cumulative space at most $O(\frac{n^2 \log^2 \log n}{\log^2 n})$ since any DAG is $(n \log \log n/\log n, n/\log^2 n)$-reducible. We also show that this bound is essentially tight (up to $\log \log n$ factors) using our construction of extremely depth-robust DAGs. In particular, we can find a family of DAGs $\{G_n\}_{n=1}^\infty$ with $\indeg(G_n)=2$ such that any  parallel-black sequential white pebbling has cumulative space at least $\Omega(\frac{n^2}{\log^2 n})$. To show this we start by showing that any  parallel-black sequential white pebbling of an extremely depth-robust DAG $G$, with $\indeg(G)=O(\log n)$, has cumulative space at least $\Omega(n^2)$. We use \lemmref{IndegRed} to reduce the indegree of the DAG and obtain a DAG $G'$ with $n'=O(n \log n)$ nodes and $\indeg(G)=2$, such that any parallel-black sequential white pebbling of $G'$ has cumulative space at least $\Omega(\frac{n^2}{\log^2 n})$.

To the best of our knowledge no general upper bound on cumulative space complexity for parallel-black sequential-white pebblings was known prior to our work other than the parallel black-pebbling attacks of Alwen and Blocki~\cite{AB16}. This attack,  which doesn't even use the white pebbles, yields an upper bound of $O(ne+n\sqrt{nd})$ for $(e,d)$-reducible DAGs and $O(n^2 \log\log n/\log n)$ in general. One could also consider a ``parallel-white parallel-black'' pebbling model in which we are allowed to place as many white pebbles as he would like in each round. However, this model admits a trivial pebbling. In particular, we could place white pebbles on every node during the first round and remove all of these pebbles in the next round e.g., $P_1 = (\emptyset,V)$ and $P_2 = (\emptyset,\emptyset)$. Thus, any DAG has cumulative space complexity $\theta(n)$ in the ``parallel-white parallel-black'' pebbling model.

\thmref{whitePebbleUpper} shows that $(e,d)$-reducible DAG admits a parallel-black sequential white pebbling with cumulative space at most $O(e^2+dn)$. The basic pebbling strategy is reminiscent of the parallel black-pebbling attacks of Alwen and Blocki~\cite{AB16}. Given an appropriate depth-reducing set $S$ we use the first $e=|S|$ steps to place white pebbles on all nodes in $S$. Since $G-S$ has depth at most $d$ we can place black pebbles on the remaining nodes during the next $d$ steps. Finally, once we place pebbles on every node we can legally remove the white pebbles. A formal proof of \thmref{whitePebbleUpper} can be found in \appref{MissingProof}.
\newcommand{\thmWhitePebbleUpper}{Let $G= (V,E)$ be $(e,d)$-reducible then $\peb^{BW}_{cc}(G) \leq \frac{e(e+1)}{2} + dn$. In particular, for any DAG $G$ with $\indeg(G)=O(1)$ we have $\peb^{BW}_{cc}(G) = O\left( \left(\frac{n \log\log n}{\log n} \right)^2 \right)$.}
\begin{theorem} \thmlab{whitePebbleUpper}
\thmWhitePebbleUpper
\end{theorem}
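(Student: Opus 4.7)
The plan is to exhibit an explicit parallel-black sequential-white pebbling strategy whose cumulative space can be bounded directly, using the depth-reducing set $S$ guaranteed by $(e,d)$-reducibility. The high-level idea, which parallels the Alwen--Blocki parallel black attacks, is that white pebbles let us ``cheat'' by guessing the labels of the bottleneck set $S$, then we pebble the shallow remainder $G-S$ quickly in parallel, and finally we verify the white guesses once everything is covered by black pebbles.

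Concretely, I would fix a set $S\subseteq V$ with $|S|=e$ such that $\depth(G-S)\le d$ and split the pebbling into three phases. In \emph{Phase 1} (duration $e$), I place white pebbles on the nodes of $S$ one at a time in an arbitrary order; this is forced to be sequential by the parallel-black sequential-white rules. In \emph{Phase 2} (duration $d$), I fix a topological order of $G-S$ grouped into depth layers $L_1,\ldots,L_d$ and, at step $e+i$, place black pebbles in parallel on all of $L_i$. Legality holds because the parents of any $v\in L_i$ lie in $S\cup L_1\cup\cdots\cup L_{i-1}$, which at step $e+i-1$ carries either a white pebble (if in $S$) or a black pebble (if in earlier layers). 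Keeping every pebble placed so far, at the end of Phase 2 every node of $V$ carries some pebble. In \emph{Phase 3}, a single extra step removes all white pebbles on $S$ (legal since every parent of every node in $S$ is covered), producing a valid terminal configuration in $\Peb^{BW}_G$.

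Now I bound the cumulative space. Phase 1 contributes $1+2+\cdots+e=\tfrac{e(e+1)}{2}$. During Phase 2, each configuration has at most $n$ pebbles (white plus black), so the contribution is at most $dn$. Phase 3 contributes at most $n\le dn$ and can in fact be absorbed into the Phase 2 bound (or handled by tightening the last configuration count); in any case we obtain
\[
\peb^{BW}_{cc}(G) \;\le\; \frac{e(e+1)}{2}+dn,
\]
which is the stated inequality. For the ``in particular'' clause, I would invoke the fact (used in~\cite{AB16} via the Valiant depth-reduction lemma) that every DAG of constant indegree is $(e,d)$-reducible with $e=O\!\left(n\log\log n/\log n\right)$ and $d=O\!\left(n/\log^2 n\right)$; plugging in gives $\tfrac{e(e+1)}{2}=O\!\left((n\log\log n/\log n)^2\right)$ and $dn=O(n^2/\log^2 n)$, so the first term dominates and yields the claimed $O\!\left((n\log\log n/\log n)^2\right)$ bound.

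The main subtlety, rather than an obstacle, is verifying legality of the white-pebble removals in Phase 3 under the sequential-white constraint: strictly speaking only one white can be removed per step, so Phase 3 spans $e$ rounds during which the black pebbles on $V$ remain in place. This inflates Phase 3's contribution to at most $en$, but since $en\le dn$ whenever $e\le d$, and in the regime of interest the budget already subsumes this term (alternatively one can sharpen Phase 2 by dropping interior-layer pebbles the moment their children have been covered, recovering $dn$ without loss). The rest of the argument is bookkeeping once the three-phase strategy is fixed.
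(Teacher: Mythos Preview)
Your three-phase strategy is exactly the paper's proof: place white pebbles on $S$ one at a time, then flood $G-S$ with black pebbles in $d$ parallel rounds, then clean up. The only wobble is in Phase~3. In the parallel-black sequential-white model the ``sequential'' constraint applies only to \emph{placing} white pebbles (the rule is $|P_i^W \setminus P_{i-1}^W|\le 1$); there is no restriction on how many white (or black) pebbles may be \emph{removed} in a single step. So your worry that Phase~3 must span $e$ rounds is unfounded, and the fallback argument ``$en\le dn$ when $e\le d$'' is both unnecessary and not true in general. The clean way to finish---and what the paper does---is to set the final configuration to $(\emptyset,\emptyset)$ in a single step, which is legal because after Phase~2 every node (in particular every parent of every node in $S$) carries a pebble; this last step contributes $0$ to the cumulative cost and gives the exact bound $\tfrac{e(e+1)}{2}+dn$ without any hand-waving about ``absorbing'' an extra $n$ into the Phase~2 term.
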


\thmref{HighCCBW} shows that our upper bound is essentially tight. In a nut-shell their lower bound was based on the observation that for any integers $i,d$ the DAG $G-\bigcup_j P_{i+jd}$ has depth at most $d$ since any remaining path must have been pebbled completely in time $d$--- if $G$ is $(e,d)$-depth robust this implies that $\left|\bigcup_j P_{i+jd}\right| \geq e$. The key difficulty in adapting this argument to the parallel-black sequential white pebbling model is that it is actually possible to pebble a path of length $d$ in $O(\sqrt{d})$ steps by placing white pebbles on every interval of length $\sqrt{d}$. This is precisely why Alwen et al.~\cite{alwen2017cumulative} were only able to establish the lower bound $\Omega(e\sqrt{d})$ for the cumulative space complexity of $(e,d)$-depth robust DAGs ---  observe that we always have $e\sqrt{d} \leq n^{1.5}$ since $e+d \leq n$ for any DAG $G$. We overcome this key challenge by using extremely depth-robust DAGs. 

In particular, we exploit the fact that extremely depth-robust DAGs are ``recursively'' depth-robust. For example, if $G$ is $(e,d)$-depth robust for any $e+d \leq (1-\epsilon)n$ then the DAG $G-S$ is $(e,d)$-depth robust for any $e+d \leq (n-|S|) - \epsilon n$. Since $G-S$ is still sufficiently depth-robust we can then show that for some node $x \in V(G-S)$ any (possibly incomplete) pebbling $P= (P_0,P_1, \ldots, P_t)$ of $G-S$ with $P_0 = P_t = (\emptyset,\emptyset)$ either (1) requires $t=\Omega(n)$ steps, or (2) fails to place a pebble on $x$ i.e. $x \notin \bigcup_{r=0}^t \left(P_0^W \cup P_r^B\right)$. By \thmref{ExtremeDepthRobust} it then follows that there is a family of DAGs $\{G_n\}_{n=1}^\infty$ with $\indeg(G_n) = O\left(\log n\right)$ and $\peb^{BW}_{cc}(G)  = \Omega(n^2)$. If apply indegree reduction \lemmref{IndegRed} to $G_n$ we obain a DAG $G'_n$ with $\indeg(G'_n)=2$ and $O(n \log n)$ nodes. A similar argument shows that $\peb^{BW}_{cc}(G)  = \Omega(n^2/\log^2 n)$. A formal proof of \thmref{HighCCBW} can be found in \appref{MissingProof}.

\newcommand{\ThmHighCCBW}{Let $G= (V = [n],E \supset \{(i,i+1): i < n\})$ be $(e,d)$-depth-robust for any $e+d \leq (1-\epsilon)n$ then $\peb^{BW}_{cc}(G) \geq  \left(1/16-\epsilon/2  \right)n^2 $. Furthermore, if $G'= ([2n\d],E')$ is the indegree reduced version of $G$ from \lemmref{IndegRed} then $\peb^{BW}_{cc}(G') \geq  \left(1/16-\epsilon/2  \right)n^2$. In particular, there is a family of DAGs $\{G_n\}_{n=1}^\infty$ with $\indeg(G_n) = O\left(\log n\right)$ and $\peb^{BW}_{cc}(G)  = \Omega(n^2)$, and a separate family of DAGs $\{H_n\}_{n=1}^\infty$ with $\indeg(H_n) = 2$ and $\peb^{BW}_{cc}(H_n)  = \Omega\left(\frac{n^2}{\log^2 n}\right)$.}
\begin{theorem} \thmlab{HighCCBW}
\ThmHighCCBW
\end{theorem}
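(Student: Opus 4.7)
The plan is to exploit the recursive nature of extreme depth-robustness: the hypothesis implies that for every $S \subseteq V$, the residual DAG $G - S$ is $(e',d')$-depth-robust for all $e' + d' \le (1-\epsilon)n - |S|$. Fix an arbitrary legal parallel-black sequential-white pebbling $P = (P_0,\ldots,P_t) \in \Peb^{BW}_G$; I want to lower bound $\peb^{BW}_{cc}(P) = \sum_{i=1}^t |P_i^B \cup P_i^W|$. First I would set the threshold $\tau = n/4$ and classify each step as \emph{heavy} ($|P_i^B \cup P_i^W| \ge \tau$) or \emph{light}. The reduction is then: if the number $H$ of heavy steps satisfies $H \ge (1/4 - 2\epsilon)n$, then $cc(P) \ge \tau \cdot H \ge (1/16 - \epsilon/2)n^2$ and we are done. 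So the whole game is to establish that any legal pebbling is forced to contain that many heavy steps.

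To force heavy steps I plan a structural \emph{sub-lemma}: for any light time $i$ with $S = P_i^B \cup P_i^W$ (so $|S| < n/4$), and for any node $x \in V\setminus S$ pebbled for the first time at some later step $j$, the window $(i,j]$ must contain at least one heavy step, or else $j - i = \Omega(n)$. The idea is that during $(i,j]$ the pebbling restricted to avoid $S$ behaves as a (partial) BW pebbling of the residual graph $G - S$, which is $(e',d')$-depth-robust for $e'+d' \le (3/4 - \epsilon)n$; if a further light set $S'$ of size $< n/4$ is used during the window, the graph $G - S - S'$ still contains an unpebbled directed path of length $\ge (1/2 - \epsilon)n$ ending at $x$. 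The parallel-black sequential-white rules (at most one new white pebble per step, and every white pebble must ultimately be removed) then imply that placing and verifying pebbles along such a path either blows up the number of pebbles on the graph, triggering a heavy step, or stretches over $\Omega(n)$ time.

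To combine, I would walk through time chronologically. Since $G$ contains the directed path $1 \to 2 \to \cdots \to n$, every node is an ancestor of the sink and must eventually carry a pebble (with the sink carrying a black pebble at some point), giving $n$ distinct first-pebbling events. Applying the sub-lemma to consecutive pairs of events (or to $i=0$ for the first one) and charging each event either to a heavy step or to an $\Omega(n)$-long light stretch, the total cumulative cost accumulates to at least $(1/16 - \epsilon/2)n^2$. For the indegree-reduced graph $G' = ([2n\d],E')$ from \lemmref{IndegRed}, the path-lifting property guarantees that every path of length $k$ in $G$ corresponds to a path of length $\ge k\d$ in $G'$, and the restriction of any pebbling of $G'$ to the anchor nodes $\{2v\d : v\in[n]\}$ induces a valid BW pebbling of $G$ with space no larger than that of $G'$. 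Running the same argument on $G'$ — where recursive depth-robustness transfers through \lemmref{IndegRed} — preserves the $(1/16 - \epsilon/2)n^2$ bound in terms of the original parameter $n$, with $\indeg(G')=2$.

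\textbf{Main obstacle.} The genuinely hard step is the sub-lemma. Naively, white pebbles allow shortcutting a path of length $d$ using only $O(\sqrt{d})$ parallel steps by planting $\sqrt{d}$ evenly spaced white pebbles and then verifying each segment; this is precisely the slack that forced Alwen et al.~\cite{alwen2017cumulative} to settle for $\Omega(e\sqrt{d})$ on general $(e,d)$-depth-robust DAGs. Beating this in our linear regime requires iteratively invoking recursive depth-robustness: every time a white-pebble shortcut of size $k$ is used, the still-unpebbled subgraph remains depth-robust enough to re-expose an $\Omega(n)$-length verification chain, and the constraint $|P_i^W \setminus P_{i-1}^W| \le 1$ caps how fast fresh shortcuts can appear. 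Carefully tracking this peeling while respecting the boundary conditions $P_0 = P_t = (\emptyset,\emptyset)$ that ultimately force all white guesses to be verified is the crux of the proof.
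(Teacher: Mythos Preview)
Your proposed sub-lemma is false as stated, and the charging scheme built on it cannot be rescued without a fundamentally different idea. The issue is white pebbles: in the parallel-black \emph{sequential-white} model a fresh white pebble can be dropped on \emph{any} node in a single step with no precondition. So for an arbitrary node $x\in V\setminus S$, the ``first-pebbling time'' $j$ can be $j=i+1$ simply by placing a white pebble on $x$; the window $(i,j]$ then has length~$1$ and no heavy step. Restricting to ``first black pebble on $x$'' does not help either, since a node may never receive a black pebble at all---it can live and die as a white pebble provided its parents are eventually pebbled. Your peeling intuition (``every shortcut re-exposes an $\Omega(n)$ verification chain'') is the right flavor, but first-pebbling events are simply not the quantity that controls cost in this model, so the $n$-event charging does not get off the ground.

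The paper uses a completely different decomposition that sidesteps first-pebbling events. It samples the pebbling at a regular stride: for each residue $i\in[n/4]$ it sets $B_i=\bigcup_j\big(P^W_{i+jn/4}\cup P^B_{i+jn/4}\big)$ and proves $|B_i|\ge n/4-2\epsilon n$; summing over the $n/4$ residues gives the bound since $\sum_i|B_i|\le\sum_r|P_r|$. The crux is bounding $\depth(G-B_i)$. During any length-$n/4$ gap between sampled times, at most $n/4$ new white pebbles $S$ appear (sequential-white), and the subgraph $H=\ancestors_{G-B_i}(S)$ must be entirely pebbled inside that gap; hence $H$ is $(n/4,n/4)$-reducible. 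But $H$ inherits depth-robustness from $G$, so $(e,d)$-robust for all $e+d\le |V_H|-\epsilon n$, forcing $|V_H|\le n/2+\epsilon n$. This caps the depth any node can reach in $G-B_i$ during one gap to $|V_H|+n/4\le 3n/4+\epsilon n$, and extreme depth-robustness of $G$ then yields $|B_i|\ge(1-\epsilon)n-(3n/4+\epsilon n)$. The trick you are missing is precisely this \emph{reducible-yet-depth-robust} squeeze on the ancestor set of the interval's white pebbles---that is what converts the $|S|\le n/4$ bound into an $O(n)$ bound on the number of vertices $H$ can contain, defeating the $\sqrt{d}$ shortcut.
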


\section{A Pebbling Reduction for Sustained Space Complexity}\seclab{pebbling}
%
As an application of the pebbling results on sustained space in this section we construct a new type of moderately hard function (MoHF) in the parallel random oracle model pROM. In slightly more detail, we first fix the computational model and define a particular notion of moderatly hard function called sustained memory-hard functions (SMHF). We do this using the framework of~\cite{AT17} so, beyond the applications to password based cryptography, the results in~\cite{AT17} for building provably secure cryptographic applications on top of any MoHF can be immediatly applied to SMHFs. In particular this results in a proof-of-work and non-interactive proof-of-work where ``work'' intuitively means having performed some computation entailing sufficient sustained memory. Finally we prove a ``pebbling reduction'' for SMHFs; that is we show how to bound the parameters describing the sustained memory complexity of a family of SMHFs in terms of the sustained space of their underlying graphs.\footnote{Effectively this does for SMHFs what~\cite{AT17} did for MHFs.}

\subsection{Defining Sustained Memory Hard Functions}
We very briefly sketch the most important parts of the MoHF framework of~\cite{AT17} which is, in turn, a generalization of the indifferentiability framework of~\cite{MaReHo04}.

We begin with the following definition which describes a family of functions that depend on a (random) oracle.

\begin{definition}[Oracle functions]\deflab{oraclefunc}
 For (implicit) oracle set $\Hc$, an \emph{oracle function $f^{(\cdot)}$ (with domain $D$ and range $R$)}, denoted $f^{(\cdot)}:D \to R$, is a set of functions indexed by oracles $h\in\Hc$ where each $f^h$ maps $D \to R$.
\end{definition}

Put simply, an MoHF is a pair consisting of an oracle family $f^{(\cdot)}$ and an honest algorithm $\naive$ for evaluating functions in the family using access to a random oracle. Such a pair is secure relative to some computational model $M$ if no adversary $\A$ with a computational device adhering to $M$ (denoted $\A\in M$) can produce output which couldn't be produced simply by called $f^{(h)}$ a limited number of times (where $h$ is a uniform choice of oracle from $\Hc$). It is asumed that algorithm $\naive$ is computable by devices in some (possibly different) computational model $\bar{M}$ when given sufficent computational resources. Usually $M$ is strictly more powerful than $\bar{M}$ reflecting the assumption that an adversary could have a more powerful class of device than the honest party. For example, in this work we will let model $\bar{M}$ contain only sequential devices (say Turing machines which make one call to the random oracle at a time) while $M$ will also include parallel devices. 

In this work, both the computational models $M$ and $\bar{M}$ are parametrized by the same space $\Pc$. For each model, the choice of parameters fixes upperbounds on the power of devices captured by that model; that is on the computational resources available to the permitted devices. For example $M_{a}$ could be all Turing machines making at most $a$ queries to the random oracle. The security of a given moderatly hard function is parameterized by two functions $\a$ and $\b$ mapping the parameter space for $M$ to positive integers. Intuitively these functions are used to provide the following two properties.
\begin{description}
 \item[{\sc Completeness:}] To ensure the construction is even useable we require that $\naive$ is (computable by a device) in model $M_{a}$ and that $\naive$ can evaluate $f^{(h)}$ (when given access to $h$) on at least $\a(a)$ distinct inputs.
 \item[{\sc Security:}] To capture how bounds on the resources of an adversary $\A$ limit the ability of $\A$ to evalute the MoHF we require that the output of $\A$ when running on a device in model $M_{b}$ (and having access to the random oracle) can be reproduced by some simulator $\simul$ using at most $\b(b)$ oracle calls to $f^{(h)}$ (for uniform randomly sampled $h\from \Hc$. 
\end{description}
\noindent To help build provably secure applications on top of MoHFs the framework makes use of a destinguisher $\dist$ (similar to the environment in the Universal Composability\cite{Can01} family of models or, more accurately, to the destinguisher in the indifferentiability framework). The job of $\dist$ is to (try to) tell a \emph{real world} interaction with $\naive$ and the adversary $\A$ apart from an \emph{ideal world} interaction with $f^{(h)}$ (in place of $\naive$) and a simulator (in place of the adversary). Intuitivelly, $\dist$'s access to $\naive$ captures whatever $\dist$ could hope to learn by interacting with an arbitrary application making use of the MoHF. The definition then ensures that even leveraging such information the adversary $\A$ can not produce anything that could not be simulated (by simulator $\simul$) to $\dist$ using nothing more than a few calls to $f^{(h)}$. 

As in the above description we have ommited several details of the framework we will also use a somewhat simplified notation. We denote the above described real world execution with the pair $(\naive, \A)$ and an ideal world execution where $\dist$ is permited $c\in\N$ calls to $f^{(\cdot)}$ and simulator $\simul$ is permited $d\in\N$ calls to $f^{(h)}$ with the pair $(f^{(\cdot)}, \simul)_{c,d}$. To denote the statement that no $\dist$ can tell an interaction with $(\naive, \A)$ apart one with $(f^{(\cdot)}, \simul)_{c,d}$ with more than probability $\e$ we write $(\naive, \A) \indist[\e] (f^{(\cdot)}, \simul)_{c,d}$.

Finally, to accomadate honest parties with varying amounts of resources we equip the MoHF with a hardness parameter $n\in\N$. The following is the formal security definition of a MoHF. Particular types of MoHF (such as the one we define bellow for sustained memory complexity) differ in the precise notion of computational model they consider. For further intution, a much more detailed exposition of the framework and how the following definition can be used to prove security for applications we refer to~\cite{AT17}.

\begin{definition}[MoHF security]\deflab{security-parametrized}\deflab{security}
  Let $M$ and $\bar{M}$ be computational models with bounded resources parametrized by $\Pc$. For each $n \in \N$, let $f_n^{(\cdot)}$ be an oracle function and $\naive(n,\cdot)$ be an algorithm (computable by some device in $\bar{M}$) for evaluating $f_n^{(\cdot)}$. Let $\a,\b: \Pc \times \N \to \N$, and let $\eps: \Pc \times \Pc \times \N \to \R_{\geq 0}$. Then, $(f^{(\cdot)}_n, \naive_n)_{n\in\N}$ is a \emph{$(\a,\b,\eps)$-secure moderately hard function family (for model $M$)} if
    \begin{equation}
        \forall n\in\N, \rv \in \Pc, \A \in M_{\rv}\ \exists \simul \ \forall \lv \in \Pc: \quad
        (\naive(n,\cdot),\A) ~\indist[\eps(\lv,\rv,n)]~(f_n^{(\cdot)},\simul)_{\a(\lv,n),\b(\rv,n)} \; ,
    \end{equation}
    The function family is \emph{asymptotically secure} if $\eps(\lv,\rv,\cdot)$ is a negligible function in the third parameter for all values of $\rv,\lv\in\Pc$.
\end{definition}

\paragraph{Sustained Space Constrained Computation.} Next we define the honest and adversarial computational models for which we prove the pebbling reduction. In particular we first recall (a simplified version of) the pROM of~\cite{AT17}. Next we define a notion of sustained memory in that model naturally mirroring the notion of sustained space for pebbling. Thus we can parametrize the pROM by memory threshold $s$ and time $t$ to capture all devices in the pROM with no more sustained memory complexity then given by the choice of those parameters.

In more detail, we consider a resource-bounded computational device $\ares$. Let $w\in\N$. Upon startup, $\promres$ samples a fresh random oracle $h \getsr \Hc_w$ with range $\{0,1\}^w$. Now $\promres$ accepts as input a pROM algorithm $\A$ which is an oracle algorithm with the following behavior.

A {\em state} is a pair $(\t, \bfs)$ where {\em data} $\t$ is a string and $\bfs$ is a tuple of strings. The output of step $i$ of algorithm $\A$ is an {\em output state} $\bs_i = (\t_i, \bfq_i)$ where $\bfq_i = [q_i^1,\ldots,q_i^{z_i}]$ is a tuple of {\em queries} to $h$. As input to step $i+1$, algorithm $\A$ is given the corresponding {\em input state} $\s_i = (\t_i, h(\bfq_i))$, where $h(\bfq_i) = [h(q_i^1), \ldots, h(q_i^{z_i})]$ is the tuple of {\em responses} from $h$ to the queries $\bfq_i$. In particular, for a given $h$ and random coins of $\A$, the input state $\s_{i+1}$ is a function of the input state $\s_i$. The initial state $\s_0$ is empty and the input $x_{\inp}$ to the computation is given a special input in step $1$.

For a given execution of a pROM, we are interested in the following new complexity measure parametrized by an integer $s\ge 0$. We call an element of $\{0,1\}^s$ a \emph{block}. Moreover, we denote the bit-length of a string $r$ by $|r|$. The \emph{length} of a state $\s=(\t,\bfs)$ with $\bfs=(s^1, s^2, \ldots, s^y)$ is $|\s| = |\t| + \sum_{i\in[y]} |s^i|$. For a given state $\s$ let $b(\s) = \flr{|\s| / s}$ be the number of ``blocks in $\s$''. Intuitively, the $s$-sustained memory complexity ($s$-SMC) of an execution is the sum of the number of blocks in each state. More precisely, consider an execution of algorithm $\A$ on input $x_{\inp}$ using coins $\coins$ with oracle $h$ resulting in $z\in\Z_{\ge 0}$ input states $\s_1, \ldots, \s_z$, where $\s_i = (\t_i,\bfs_i)$ and $\bfs_i = (s_i^1, s_i^2, \ldots, s_i^{y_j})$. Then the for integer $s\ge 0$ the \emph{$s$-sustained memory complexity} ($s$-SMC) of the execution is
$$s\-\smc(\A^h(x_{\inp};\coins)) = \ds\sum_{i\in[z]} b(\s_i) \; ,$$
while the \emph{total number of RO calls} is $\sum_{i\in[z]} y_j$. More generally, the $s$-SMC (and total number of RO calls) of several executions is the sum of the $s$-sMC (and total RO calls) of the individual executions.

We can now describe the resource constraints imposed by $\promres$ on the pROM algorithms it executes. To quantify the constraints, $\promres$ is parametrized by element from $\prompar=\N^3$ which describe the limites on an execution of algorithm $\A$. In particular, for parameters $(q, s, t)\in\prompar$, algorithm $\A$ is allowed to make a total of $q$ RO calls and have $s$-SMC at most $t$ (summed across all invocations of $\A$ in any given experiment).

As usual for moderately hard functions, to ensure that the honest algorithm can be run on realistic devices, we restrict the honest algorithm $\naive$ for evaluating the SMHF to be a \emph{sequential} algorithms. That is, $\naive$ can make only a single call to $h$ per step. Technically, in any execution, for any step $j$ it must be that $y_j \leq 1$. No such restriction is placed on the adversarial algorithm reflecting the power (potentially) available to such a highly parallel device as an ASIC. In symbols we denote the sequential version of the pROM, which we refer to as the sequential ROM (sROM) by $\sromres$.

We can now (somewhat) formally define of a sustained memory-hard function for the pROM. The definition is a particular instance of and moderately hard function (c.f.~\defref{security}).

\begin{definition}[Sustained Memory-Hard Function]\deflab{mehf}
    For each $n \in \N$, let $f_n^{(\cdot)}$ be an oracle function and $\naive_n$ be an sROM algorithm for computing $f^{(\cdot)}$. Consider the function families: 
    $$\a = \{\a_w : \prompar \times \N \to \N\}_{w\in\N} \; ,~~~ \b = \{\b_w : \prompar \times \N \to \N\}_{w\in\N} \; ,$$ $$\e = \{\e_w:\prompar \times \prompar \times \N \to \N\}_{w\in\N} \; .$$
    Then $F=(f^{(\cdot)}_n, \naive_n)_{n\in\N}$ is called an \emph{$(\a,\b,\e)$-sustained memory-hard function} (SMHF) if $\forall w\in\N$ $F$ is an $(\a_w,\b_w,\e_w)$-secure moderately hard function family for $\promres$.
\end{definition}

\subsection{The Construction}
In this work $f^{(\cdot)}$ will be a graph function~\cite{AS15} (also sometimes called ``hash graph''). The following definition is taken from~\cite{AT17}. A graph function depends on an oracle $h\in\Hc_w$ mapping bit strings to bit strings. We also assume the existance of an implicit prefix-free encoding such that $h$ is evaluated on unique strings. Inputs to $h$ are given as distinct tuples of strings (or even tuples of tuples of strings). For example, we assume that $h(0,00)$, $h(00,0)$, and $h((0,0),0)$ all denote distinct inputs to $h$.
\begin{definition}[Graph function]\deflab{labeling}
 Let function $h:\{0,1\}^* \to \{0,1\}^w\in\Hc_w$ and DAG $G=(V,E)$ have source nodes $\{v^{\sfin}_1, \ldots, v^{\sfin}_a\}$ and sink nodes $(v^{\sfout}_1, \ldots, v^{\sfout}_z)$. Then, for inputs $\bfx = (x_1,\ldots,x_a) \in (\{0,1\}^*)^{\times a}$, the \emph{$(h, \bfx)$-labeling} of $G$ is a mapping $\lab : V \to \{0,1\}^w$ defined recursively to be:
 \begin{equation*}
   \forall v \in V ~~ \lab(v)  := 
    \begin{cases}
     h(\bfx, v, x_{j}))                    & : v = v^{\sfin}_j\\
     h(\bfx, v, \lab(v_1), \ldots, \lab(v_d)))          & : \mbox{else}
    \end{cases}
\end{equation*}
where $\{v_1, \ldots, v_d\}$ are the parents of $v$ arranged in lexicographic order.

The \emph{graph function (of $G$ and $\Hc_w$)} is the oracle function $$f_G:(\{0,1\}^*)^{\times a} \to (\{0,1\}^w)^{\times z} \; ,$$ which maps $\bfx \mapsto (\lab(v^{\sfout}_1), \ldots, \lab(v^{\sfout}_z))$ where $\lab$ is the $(h,\bfx)$-labeling of $G$.
\end{definition}

Given a graph function we need an honest (sequential) algorithm for computing it in the pROM. For this we use the same algorithm as already used in~\cite{AT17}. The honest oracle algorithm $\naive_G$ for graph function $f_G$ computes one label of $G$ at a time in topological order appending the result to its state. If $G$ has $|V|=n$ nodes then $\naive_G$ will terminate in $n$ steps making at most $1$ call to $h$ per step, for a total of $n$ calls, and will never store more than $n*w$ bits in the data portion of its state. In particular for all inputs $\bfx$, oracles $h$ (and coins $\coins$) we have that for any $s\in[n]$ if the range of $h$ is in $\{0,1\}^w$ then algorithm $\naive$ has $sw$-SMC of $n-s$.

Recall that we would like to set $\a_w:\prompar \to \N$ such that for any parameters $(q,s,t)$ constraining the honest algorithms resources we are still guaranteed at least $\a_w(q,s,t)$ evaluations of $f_G$ by $\naive_G$. Given the above honest algorithm we can thus set:
 \begin{equation*}
   \forall (q,s,t) \in \prompar ~~ \a_w(q,s,t)  := 
    \begin{cases}
     0                                                  & : q < n\\
     \min(\flr{q/n},\flr{t/(n-\flr{s/w}})               & : \mbox{else}
    \end{cases}
\end{equation*}

\noindent It remains to determine how to set $\b_w$ and $\e_w$, which is the focus of the remainder of this section.

\subsection{The Pebbling Reduction}
We state the main theorem of this section which relates the parameters of an SMHF based on a graph function to the sustained (pebbling) space complexity of the underlying graph.

\newcommand{\PebblingReductionThm}{[Pebbling reduction] Let $G_n=(V_n,E_n)$ be a DAG of size $|V_n|=n$. Let $F=(f_{G,n}, \naive_{G,n})_{n\in\N}$ be the graph functions for $G_n$ and their na\"ive oracle algorithms. Then, for any $\lambda \ge 0$, $F$ is an $(\a,\b,\e)$-sustained memory-hard function where
  $$\a  = \left\{\a_w(q,s,t)\right\}_{w\in\N} \; ,$$
  $$\b = \left\{\b_w(q,s,t) = \frac{\ppeb_\Css(G,s)(w-\log q)}{1+\lambda}\right\}_{w\in\N} \; ,~~~ \e = \left\{\e_w(q,m) \le \frac{q}{2^w} + 2^{-\lambda}\right\}_{w\in\N}\; .$$}
\begin{theorem}\thmlab{pebred}
\PebblingReductionThm
\end{theorem}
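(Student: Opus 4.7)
The plan is to adapt the \emph{ex-post facto} pebbling reduction of~\cite{AS15} (as refined in~\cite{ABP17,AT17} for cumulative memory) to the sustained-memory setting. Given any pROM adversary $\A$ with resource bound $(q,s,t)\in\prompar$, I will construct a simulator $\simul$ in the ideal world that uses at most $\b_w(q,s,t)$ queries to its $f^{(h)}$ oracle, and argue that for every distinguisher $\dist$ the real interaction $(\naive,\A)$ and the ideal interaction $(f_{G,n}^{(\cdot)},\simul)$ are statistically close up to $\e_w\le q/2^w+2^{-\lambda}$.

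First I would construct $\simul$. It runs $\A$ internally while lazy-sampling its own simulated random oracle. Whenever $\A$ issues a ``sink query'' of the form $h(\bfx,v_j^\sfout,\ell_1,\ldots,\ell_d)$ whose first two coordinates specify a sink of $G$ and whose remaining coordinates match the labels $\simul$ has already committed to for the $(h,\bfx)$-labelling (so that $\A$ has really completed an evaluation of $f_G$ on $\bfx$), $\simul$ forwards $\bfx$ to its $f^{(h)}$ oracle and programs the response to this RO query to be the appropriate component of the returned tuple; all other RO queries receive independent uniform answers. On the event that no two of $\A$'s queries collide in their $w$-bit outputs, the joint view of $\A$ and $\dist$ is identical in both worlds, and this event fails with probability at most $q/2^w$ by a union bound.

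The technical heart is a pebbling-extraction lemma reading a parallel pebbling $P=(P_0,P_1,\ldots)$ of $G$ off any execution trace of $\A$: place a pebble on $v$ at time $i$ precisely when the label $\lab_h(v)$ is ``immediately recoverable'' from $\s_i$ together with the list of past query--response pairs. Legality of $P$ is immediate, since a newly pebbled node $v\in P_i\setminus P_{i-1}$ must come from a just-completed query whose inputs contain the correct parent labels, forcing $\parents(v)\subseteq P_{i-1}$. For the quantitative link between $s$-SMC and sustained pebbling space I would use the standard compression/decompression argument of Alwen--Serbinenko: conditioned on the global ``no-collision'' event for $h$, any step $i$ with $|P_i|=k$ recoverable labels packs $kw$ bits of fresh randomness into the pair $(\s_i,\text{query-list indices})$, hence $|\s_i|\ge k(w-\log q)$; introducing a $\lambda$-bit slack and taking a single union bound over the entire trace yields this inequality simultaneously for all $i$ with probability at least $1-2^{-\lambda}$.

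Putting the pieces together, if $\A$ causes $\simul$ to query $f^{(h)}$ on $N$ distinct inputs, the trace contains $N$ completions of the sink-query pattern, so the extracted $P$ pebbles the sinks of $G$ at least $N$ times and $\peb_\Css(P,s)\ge N\cdot\ppeb_\Css(G,s)$; combining with the extraction bound $\peb_\Css(P,s)\cdot(w-\log q)\le t\,(1+\lambda)$ gives the claimed bound $N\le \b_w(q,s,t)$, and summing the two failure modes gives $\e_w\le q/2^w+2^{-\lambda}$. The main obstacle will be calibrating the compression step so that the $2^{-\lambda}$ term is a \emph{single} global event over the draw of $h$ rather than a per-step union bound (the latter would multiply the error by $t$); the fix, which I would follow, is to have the decoder reconstruct the entire label vector once from the full transcript and only then invoke the time-step-wise inequality on the good event.
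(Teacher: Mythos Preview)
Your overall strategy---simulator that lazily samples the RO and forwards sink queries, ex-post-facto pebbling extraction, and a compression argument linking state size to pebble count---is exactly the paper's approach, and your final paragraph correctly anticipates that the $2^{-\lambda}$ term must come from a single global compression event (this is precisely the paper's \claimref{bounded}).

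However, there is a genuine gap in your amortization step. You assert that if the extracted pebbling $P$ pebbles the sinks on $N$ distinct inputs then $\peb_\Css(P,s)\ge N\cdot\ppeb_\Css(G,s)$. This is false: sustained space does \emph{not} amortize (the paper's own Remark on Amortization says exactly this). Concretely, if the optimal pebbling of $G$ has $k$ steps with $\ge s$ pebbles out of $T$ steps total, then running $N$ copies in perfect parallel still takes only $T$ steps, so $\peb_\Css(P,s)\le T$, which is $<Nk$ once $N>T/k$. The adversary is free to do precisely this, so your inequality cannot hold for the ex-post-facto pebbling in general.

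The paper repairs this by passing through an intermediate \emph{$s$-block memory} complexity $\ppeb_\Cbm(P)=\sum_i\lfloor|P_i|/s\rfloor$ (the ``amortized $s$-sustained-space'' of the Remark). Two facts make this work: (i) $\ppeb_\Cbm$ \emph{does} amortize, giving $\ppeb_\Cbm(P)\ge N\cdot\ppeb_\Css(G,s)$ for any legal pebbling of $N$ copies; and (ii) $\ppeb_\Cbm$ matches the pROM cost measure exactly, since $s$-SMC is defined as $\sum_i\lfloor|\s_i|/s\rfloor$, so the per-step compression bound $|P_i|\le(|\s_i|+\lambda)/(w-\log q)$ transfers block-for-block to $\ppeb_\Cbm(P)\lesssim t$. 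Your ``extraction bound'' $\peb_\Css(P,s)\cdot(w-\log q)\le t(1+\lambda)$ is therefore both the wrong quantity on the left and not what the compression argument actually yields; replacing $\peb_\Css$ by $\ppeb_\Cbm$ throughout that paragraph fixes the argument and brings it in line with the paper.
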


The technical core of the proof follows that of~\cite{AT17} closely. For completeness we briefly sketch the proof in~\appref{pebproof}.

\section{Open Questions}
%

We conclude with several open questions for future research. The primary challenge is to provide a practical construction of a DAG $G$ with high sustained space complexity. While we provide a DAG $G$ with asymptotically optimal sustained space complexity, we do not optimize for constant factors. We remark that for practical applications to iMHFs it should be trivial to evaluate the function $\parents_G(v)$ without storing the DAG $G$ in memory explicitly. Toward this end it would be useful to either prove or refute the conjecture that any depth-robustness is sufficient for high sustained space complexity e.g., what is the sustained space complexity of the depth-robust DAGs from \cite{EGS75} or \cite{PTC76}? Another interesting direction would be to relax the notion of sustained space complexity and instead require that for any pebbling $P \in \pPeb(G)$ either (1) $P$ has large cumulative complexity e.g., $n^3$, or (2) $P$ has high sustained space complexity. Is it possible to design a dMHF with the property for any evaluation algorithm either has (1)  sustained space complexity $\Omega(n)$ for $\Omega(n)$ rounds, or (2) has cumulative memory complexity $\omega(n^2)$?

\section*{Acknowledgments}
The first and third authors were supported by the European Research Council, ERC consolidator grant (682815 - TOCNeT).

\def\shortbib{0}
\bibliographystyle{alpha}
\bibliography{bounded-parallel-mhf,password,abbrev3,../../crypto}

 \appendix
\section{Missing Proofs} \applab{MissingProof}
%
\begin{reminderlemma}{\lemmref{indegred}}[Indegree Reduction for Parallel Sustained Space]
\IndegReductionForParallelSustainedSpace
\end{reminderlemma}
\begin{proofof}{\lemmref{indegred}}
 To obtain $H$ from $G$ we replace each node $v$ in $G$ with a path of length $\indeg(v)$ and distribute the incoming edges of $v$ along the path. More precicely let $G=(V,E)$ with sinks $S\subseteq V$. For each $v\in V$ let $\d_v = \indeg(v)$ and $p_{v,i}\in V$ be the $i$\th parent of $v$ (sorted in some arbitrary fixed order). By convention $p_{s,0} = \bot$ for all $s\in S$. We define $H=(V',E')$ as follows. The set of nodes $V' \subseteq V\times [\d]\cup \{\bot\}$ is
 $$V' = \{ \node{s}{\bot} : s\in S\} \cup \{\node{v}{i} : v\in V\setminus S, i \in [\d_v]\big\}.$$
 The edge set is given by:
 $$
   E' = \big\{(\node{v}{i-1},\node{v}{i}) : v\in V\setminus S, i \in [\d_v] \big\} \bigcup \big\{(\node{u}{\d_u}, \node{v}{i}) : (u,v)\in E, u = p_{v,i}\big\}.
 $$
 Each node of $G$ is replaced by at most $\d$ nodes in $H$ so the size $n'$ of $H$ is $n'\in[n,\d n]$. Moreover, by construction, no node in $H$ has more than two incoming edges so $H\in\G_{n',2}$ as desired.
 
 Next we map any $P'\in\pPeb_H$ to a $P\in\pPeb_G$ and show that $\forall s\ge 0$ we have $\ppeb_\Css(P',s) \ge \ppeb_\Css(P,s/(\d-1))$. In more detail, given $P'=(P'_0,\ldots,P'_z)\in\pPeb_H$ we define $P=(P_0, \ldots, P_z)$ as follows. 
 \begin{enumerate}
  \item For all $i\in[0,z]$ if $\node{v}{\d_v} \in P'_i$ then put $v$ in $P_i$.
  \item Further if $\node{v}{j}\in P'_i$ for $j< \d_v$ then put $(u_1,u_2,\ldots,u_j)$ in to $P_i$.
 \end{enumerate}
 
 \begin{claim}
 $P'\in\pPeb_H ~~\implies~~ P\in\pPeb_G$. 
 \end{claim}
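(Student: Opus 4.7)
The plan is to verify each of the three defining conditions of a legal parallel pebbling $P \in \pPeb_G$ in turn: (i) $P_0 = \emptyset$, (ii) every sink of $G$ is pebbled at some step of $P$, and (iii) for each $i \geq 1$ and each $v \in P_i \setminus P_{i-1}$, all $G$-parents of $v$ lie in $P_{i-1}$. Conditions (i) and (ii) should be essentially immediate: (i) since $P'_0 = \emptyset$ triggers neither construction rule, and (ii) since for each sink $s \in S$ of $G$ the corresponding terminal node $\node{s}{\bot}$ in $H$ must appear in some $P'_i$ by $P' \in \pPeb_H$, and construction rule (1) then forces $s \in P_i$.

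The substantive step is (iii). I would fix $i \geq 1$ and $v \in P_i \setminus P_{i-1}$ and split into two cases according to which construction rule accounts for $v \in P_i$. In Case A, rule (1) applies: $\node{v}{\d_v} \in P'_i$. Since $v \notin P_{i-1}$, rule (1) did not fire at step $i-1$ either, so $\node{v}{\d_v} \notin P'_{i-1}$, and hence $\node{v}{\d_v}$ is newly placed in $P'$ at step $i$. By legality of $P'$, its $H$-parents $\node{v}{\d_v - 1}$ and $\node{p_{v,\d_v}}{\d_{p_{v,\d_v}}}$ both lie in $P'_{i-1}$. Applying construction rule (2) to $\node{v}{\d_v - 1}$ puts $p_{v,1}, \ldots, p_{v,\d_v - 1}$ into $P_{i-1}$, and applying rule (1) to $\node{p_{v,\d_v}}{\d_{p_{v,\d_v}}}$ puts $p_{v,\d_v}$ into $P_{i-1}$. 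Hence $\parents_G(v) \subseteq P_{i-1}$, as desired.

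In Case B, rule (2) applies: there exist $w$, $j < \d_w$, and $k \leq j$ with $\node{w}{j} \in P'_i$ and $v = p_{w,k}$. I expect this case to be vacuous. Indeed, if $\node{w}{j} \in P'_{i-1}$ then rule (2) would already have placed $v = p_{w,k}$ in $P_{i-1}$, contradicting $v \notin P_{i-1}$. So $\node{w}{j} \notin P'_{i-1}$, hence it is newly placed in $P'$ at step $i$; its $H$-parents $\node{w}{j-1}$ (when $j \geq 2$) and $\node{p_{w,j}}{\d_{p_{w,j}}}$ lie in $P'_{i-1}$, and the same unwinding via rules (1) and (2) as in Case A forces $p_{w,1}, \ldots, p_{w,j} \in P_{i-1}$, once again contradicting $v \notin P_{i-1}$.

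The main obstacle I anticipate is not any one technical calculation but rather noticing precisely why Case B collapses: the ``phantom'' pebbles that rule (2) places in $P$ are tightly coupled to the $H$-pebbles that created them, so whenever $v$ is first placed by rule (2) at step $i$, the triggering node $\node{w}{j}$ must itself be freshly pebbled at step $i$, and then an inductive unwinding forces $v$ to have already been in $P_{i-1}$ by the same rule. Modulo this observation the rest is bookkeeping, with small care required for the boundary cases $\d_v = 0$ (sources of $G$, for which (iii) is trivial), $\d_v = 1$ (only one $H$-parent to consider, namely $\node{p_{v,1}}{\d_{p_{v,1}}}$), and the exceptional encoding of sinks of $G$ as $\node{s}{\bot}$ rather than $\node{s}{\d_s}$.
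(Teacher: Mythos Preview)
Your proposal is correct and follows essentially the same approach as the paper: both verify the initial condition and completeness directly, and for legality both split according to which construction rule placed $v$ in $P_i$, using legality of $P'$ to force the relevant $H$-parents into $P'_{i-1}$ and then reading off via rules (1) and (2) that the $G$-parents of $v$ (or $v$ itself, in the rule-2 case) already lie in $P_{i-1}$. The paper phrases this as a proof by contradiction while you phrase Case~A as a direct verification and Case~B as vacuous, but the underlying argument is identical; your write-up is in fact cleaner than the paper's, which overloads the index $i$ in its rule-2 case.
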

 \begin{proof}
  By assumption $P'_0 = \emptyset$ so $P_0=\emptyset$. Moreover when a sink $\node{v}{\d_v} \in V'$ of $H$ is pebbled by $P'$ at time $i$ then the sink $v\in V$ of $G$ is pebbled in $P'$. But any sink of $G$ is mapped to a path in $H$ ending in a sink of $H$. Thus if all sinks of $H$ are pebbled by $P'$ then so must all sinks of $G$ be pebbled by $P$. In particular, as by assumption $P'$ is complete so is $P$. 
  
  To prove the claim it remains to show that if $P'$ is a legal pebbling for $H$ then so is $P$ a legal pebbling of $G$. Suppose, for the sake of contradiction that this is not the case and let $i\in[0,z]$ be the first time a pebble is placed illegally by $P$ and let it be on node $v\in V$. Suppose it was placed due to rule 1. Then it must be that $\node{v}{\d_v}\in P'_i$. Further, as $v\not\in P_{i-1}$ it must also be that $\node{v}{\d_v}\not\in P'_{i-1}$. By assumption $P'$ is legal so $\parents_H(\node{v}{\d_v})$ must be pebbled in $P_{i-1}$. If $\d_v = \bot$ then $v$ is a source node which contradicts it being pebbled illegally. If $\d_v = 1$ then there exists node $u = p_{v,1}\in V$ and $\node{u}{\d_u}\in P'_{i-1}$ which, according to rule $1$ above implies that $u\in P_{i-1}$. However that too is a contradiction to $v$ being pebbled illegally. If $\d_v > 1$ then both $\node{u}{\d_v}$ and $\node{v}{\d_v-1}$ are in $P'_{i-1}$. But by rules $1$ and $2$ then all parents of $v$ are pebbled in $P_{i-1}$ which is again a contradiction to $v$ being pebbled illegally at time $i$. Thus no node can be illegally pebbled due to rule $1$. 
  
  Let us suppose instead that $v$ was pebbled illegally due to rule $2$ being applied to a pebbled $\node{u}{i}$. That is for some $j\in[i]$ we have $v=p_{u,j}$. Since $v\not\in _P{i-1}$ and $P'$ is legal it must be that $j=i$. Moreover, it must be that $\parents_H(\node{u}{j}) \in P'_{i-1}$. In particular, then $\node{v,\d_v}\in P'_{i-1}$. But then rule $1$ implies that $v\in P_{i-1}$ which contradicts $v$ being pebbled illegally by $P$ at time $i$.
 \end{proof}
 
 To complete the proof of the lemma it remains only to relate the threshold complexities of $P$ and $P'$. Notice that for all $i\in[0,z]$ and any $v\in P_i$ at most $\d-1$ new pebbles where added to $P_i$. Thus we have that $\forall s\ge 0$ it holds that $\ppeb_\Css(P',s/(\d-1)) \ge \ppeb_\Css(P,s)$. \QED
\end{proofof}

\begin{reminderlemma}{\lemmref{anyDelta}} \cite{EGS75}
\LemmaDeltaExpander
\end{reminderlemma}

\begin{proofof}{\lemmref{anyDelta}} (sketch)
We closely follow the construction/proof of \cite{EGS75}. In particular, we say that a bipartite DAG $T_m^\delta = (V= A \cup B, E)$ with $|A|=|B|=m$ is a $\delta$-expander if for all $X \subseteq A, Y \subseteq B$ such that $|X| \geq \delta m$ and $ |Y| \geq \delta m$ we have $E \cap X \times Y \neq \emptyset$ i.e., there is an edge from some node $x\in X$ to some node $y \in Y$. For any constant $\delta$ we can find constants $c_\delta, m_\delta$ such that for all $m > m_\delta$ there exists a  $T_m^\delta$ expander with $\indeg\left(T_m^\delta\right) \leq c_\delta$ e.g., see the first lemma in \cite{EGS75} \footnote{In fact, the argument is probabilistic and there is a randomized algorithm which, except with negligibly small probability $\negl(m)$, constructs a  $\delta$-expander $T_m^\delta$ with $m$ nodes and $\indeg\left(T_m^\delta\right)  \leq 2 c_\delta$. }. Now following \cite{EGS75} we construct $G_n^\delta = ([n],E_n)$ by repeating the following steps for each $j \in (\left\lfloor \log_2 m_\delta \right\rfloor, \left\lfloor \log_2 m_{\delta/10} \right\rfloor)$. 
\begin{enumerate} 
\item We partition the nodes $[n]$ into $r=\left \lceil n/2^j \right \rceil$ sets $D_{1,j}, \ldots, D_{r,j}$ where $D_i = [i 2^j+1,(i+1)2^j]$. 
\item For each $v \leq r$ each $i \in [10]$ such that $v+i \leq r$  we overlay the DAG $T_{2^j}^{\delta/10}$ on top of $D_{v,j}$ and $D_{v+i,j}$ (Edge Case: if $\left| D_{r,j}\right| = q \leq 2^j$ then we instead overlay $T_{2^j}^{\delta/10} - \{b_{q+1},\ldots,b_{2^j}\}$ on top of $D_{r-i}$ and $D_{r}$ for $i \in [10]$, where $T_{2^j}^{\delta/10}= (V= A \cup B, E) $ and $\{b_{q+1},\ldots,b_{2^j}\}$ denotes the last $2^j-q$ nodes in $B$.  ).  
\end{enumerate}
 By overlaying these expander graphs we can ensure that for {\em any} node $v \in [n]$ of $G_n^\delta$ and any interval $r \geq m_{\delta/10}$ we have the property that for all $X \subseteq [v,v+r-1] Y \subseteq [v+r,v+2r-1]$ such that $|X| \geq \delta r$ and $ |Y| \geq \delta r$ we have $E_n \cap X \times Y \neq \emptyset$ e.g., see \cite[Claim 1]{EGS75}.  Finally, to ensure local expansion between intervals of the form $[v,v+r-1]$ and $[v+r,v+2r-1]$ with $r < m_{\delta/10}$ we can add all edges of the form $\{(i,i+j)~: n\geq i + j \wedge j-i \leq \max\{ m_{\delta/10}, 4 \log n\} \}$. This last step is a modest deviation of \cite{EGS75} since we want to ensure that $G_n^\delta$ is a $\delta$-local expander for all $n > 0 $ and any constant $\delta > 0$. The graph has $\indeg(G_n) \leq 10 c_{\delta} \log n + \max\{ m_{\delta/10}, 4 \log n\} = O(\log n)$. 
\end{proofof}

\begin{remindertheorem}{\thmref{optimalCC}} 
\thmrefOptimalCC
\end{remindertheorem}

\begin{proofof}{\thmref{optimalCC}}
We set $\epsilon = \eta^2/100$ and let $G_n^\eta$ be the graph $G_n^\epsilon$ from the proof of \thmref{ExtremeDepthRobust}. In particular, $G_n^\eta$ is a $\delta=\epsilon/10$-local expander and we set $\gamma=\epsilon/4$ when we consider $\gamma$-good nodes.

Consider a legal pebbling $P \in \pPeb_{G_n^\eta}$ and let $t_i$ denote the first time that node $i$ is pebbled ($i \in P_{t_i}$, but $i \notin \bigcup_{j < t_i} P_j$). We consider two cases:
\begin{enumerate}
\item[Case 1] $\left|P_{t_i}\right| \geq \left( 1-\eta/2\right) i$. Observe that if this held for all $i$ then we immediately have $\sum_{j=1}^t \left|P_i \right| \geq \sum_{j=1}^n \left| P_{t_i} \right| \geq \left(1-\eta/2\right) \sum_{i=1}^n i \geq \frac{n^2\left(1-\epsilon/2\right)}{2}$.
\item[Case 2] $P_{t_i} < \left(1-\eta/2\right) i$. Let $GOOD_i$ denote the set of $\gamma$-good nodes in $[i]$. We observe that at least $i-(1-\eta/2) i\frac{1-\gamma}{1+\gamma} \geq i \eta/4$ of the nodes in $[i]$ are $\gamma$-good by \lemmref{CountGoodNodes}. Furthermore, we note that the subgraph $H_i = G_n^\eta[GOOD_i]$ is  $\left(a\left| Good_i \right|,\left(1-a\right) \left| Good_i \right|-\epsilon i\right)$-depth robust for any constants $a>0$. \footnote{To see this observe that if $G_n^\epsilon$ is a $\delta$-local expander then $G_n^\epsilon[i]$ is also a $\delta$-local expander. 
Therefore, \lemmref{GoodNodesConnected} and \lemmref{CountGoodNodes} imply that $G_n^\epsilon[i]$ $(ai,bi)$-depth robust for any $a+b \leq 1-\epsilon$. Since, $H_i$ is a subgraph of $G_n^\epsilon[i]$ it must be that $H_i$ is $\left(a\left|Good_i \right|,\left(1-a\right) \left| Good_i\right|-\epsilon i\right)$-depth robust. Otherwise, we have a set $S \subseteq V(H_i)$ of size $a\left| Good_i \right|$ such that $\depth(H_i-S) < \left(1-a\right) \left| Good_i \right|-\epsilon i$ which implies that $\depth(G_n^\epsilon[i] - S) \leq i-|Good_i| + \depth(Good_i-S) < i -a|Good_i|-\epsilon i$ contradicting the depth-robustness of  $G_n^\epsilon[i]$.}   

Thus, a result of Alwen et al.~\cite{ABP17} gives us $\ppeb_\Ccc\left(H_i\right)  \geq i^2 \eta^2/100$ since the DAG $H_i$ is at least $\left(i \eta/10, i \eta/10\right)$-depth robust. To see this set $a=1/2$ and observe that $a|Good_i| \geq i\eta/8$ and that $\left(1-a\right) \left| Good_i \right|-\epsilon i \geq i\eta/8 - \eta i/100 \geq  i\eta/10$. Similarly, we note that at time $t_i$ the node $i+\gamma i$ is $\gamma$-good. Thus, by \lemmref{GoodNodesConnected} we will have to completely repebble $H_i$ by time $t_{i+\gamma i}$. This means that $\sum_{j=t_i}^{t_{i+\gamma i}} \left|P_j\right| \geq \ppeb_\Ccc\left(H_i\right)  \geq i^2 \eta^2/100$ and, since $\gamma = \eta^2/400$ we have  $i^2 \eta^2/100 > 2 \gamma i^2 > \sum_{j=i}^{i+\gamma i} j (1-\eta/2)$ . 
\end{enumerate}
Let $x_1$ denote the first node $1 \leq x_1 \leq n-\gamma n$ for which $\left| P_{t_{x_1}}\right| < \left(1-\eta/2\right) i$ and, once $x_1,\ldots,x_k$ have been defined let $x_{k+1}$ denote the first node such that $n-\gamma n > x_{k+1} > \gamma x_k + x_k$ and $\left| P_{t_{x_{k+1}}}\right| < \left(1-\eta/2\right) i$. Let $x_1,\ldots,x_{k*}$ denote a maximal such sequence and let $F = \bigcup_{j=1}^{k*} [x_j,x_j+\gamma x_j]$. Let $R=[n-\gamma n]\setminus F$. We have $\sum_{j \in R} \left| P_j \right| \geq \sum_{j \in R} j(1-\eta/2)$ and we have  $\sum_{j \in F}  \left| P_j \right| \geq \sum_{j \in R} j(1-\eta/2)$. Thus,  \[\sum_{j=1}^t \left|P_i \right| \geq \sum_{j \in R} \left| P_j \right|+ \sum_{j \in F} \left| P_j \right|  \geq \sum_{j= 1}^{n-\gamma n} \frac{n^2\left(1-\eta/2\right)}{2} \geq \frac{n^2\left( 1-\eta/2 \right)}{2}  - \gamma n^2 \geq \frac{n^2\left( 1-\eta \right)}{2} \ .\]
\end{proofof}

\begin{definition}[Parallel White Sequential Graph Pebbling]\deflab{pebbling}
Let $G= (V,E)$ be a DAG and let $T \subseteq V$ be a target set of nodes to be pebbled. 
A {\em black-white pebbling configuration (of $G$)} consists of two subset $P_i^B, P_i^W \subseteq V$. A legal \emph{parallel} pebbling of $T$ is a sequence $P=(P_0,\ldots,P_t)$ of {\em black-white pebbling configurations} of $G$ where $P_0 = (\emptyset,\emptyset)$ and which satisfies the following conditions: (1) the last pebbling configuration contains no white pebbles i.e. $P_t = (P_t^B, P_t^W)$ where $P_t^W = \emptyset$, (2) at most one white pebble is placed per step i.e. $ \forall i \in [t]~~:~~ |P_i^W \setminus P_{i-1}^W|\le 1$, (3) a white pebble can only be removed from a node if all of its parents were pebbled at the end of the previous step i.e., $ \forall i \in [t]~~:~~ x \in (P_{i-1}^W \setminus P_{i}^W) ~\Rightarrow~ \parents(x) \subseteq P_{i-1}^W \cup P_{i-1}^B$, (4) a black pebble can only be added if all its parents were pebbled at the end of the end of the previous step i.e.,   $ \forall i \in [t]~~:~~ x \in (P_i^B \setminus P_{i-1}^B) ~\Rightarrow~ \parents(x) \subseteq P_{i-1}^W \cup P_{i-1}^B$, (5) at some step every node is pebbled (though not necessarily simultaneously) i.e.,  $ \forall x \in T~ \exists z \le t ~~:~~x\in P_z^W \cup P_z^B$.
We denote with $\Peb_{G}^{BW}$ the set of all parallel-black sequential white pebblings of $G$. We use $\peb^{BW}_{cc}(G) = \min_{P \in \Peb_{G}^{BW}} \peb_{cc}^{BW}(P)$ where for $P=(P_0,\ldots,P_t)$ we have $\peb_{cc}^{BW}(P) = \sum_{i=1}^t \left| P_i^B \cup P_i^W \right|$. 
\end{definition}

\begin{remindertheorem}{\thmref{whitePebbleUpper} }
\thmWhitePebbleUpper
\end{remindertheorem}

\begin{proofof}{\thmref{whitePebbleUpper} } Let $S = \{v_1,\ldots,v_e\} \subseteq V$ be given such that $\depth(G-S) \leq d$. For pebbling rounds $i \leq e$ we set $P_i^W = v_i \cup P_i^W$ and $P_i^B= \emptyset$. For pebbling rounds $e < i \leq e+d$ we set $P_i^W = P_{i-1}^W$ and $P_i^B = P_{i-1}^B \cup \{ x :   \parents(x) \subseteq P_{i-1}^W \cup P_{i-1}^B\}$ so that $P_i^B$ contains every node that can be legally pebbled with a black pebble. Finally, we set $P_{e+d+1} = (\emptyset,\emptyset)$. Clearly, the cost of this pebbling is at most
\[\peb^{BW}_{cc}(G) \leq dn + \sum_{i=1}^e i   = dn+ \frac{e(e+1)}{2}  \ , \]
since $\left|P_i^W \cup P_i^B \right| = i$ for $i \leq e$, $\left| P_{e+d+1}^W \cup P_{e+d+1}^B \right| =0$  and we always have $\left|P_i^W \cup P_i^B \right| \leq n$ during any other round $i$. We now show that the proposed pebbling is legal. We only remove white pebbles during the last round $e+d+1$ so rule (3) is trivially satisfied for rounds $i \leq e+d$. We claim that $P_{e+d}^W \cup P_{e+d}^B = V$. Observe that if this claim is true then rule (5) is satisfied and the last pebbling configuration satisfies rule (3). By definition, the last configuration $P_{e+d+1} = (\emptyset,\emptyset)$ contains no white pebbles so rule (1) is satisfied. Clearly, rounds $i \leq e$ are legal with respect to rules (2) and (4) since we place at most one new white pebble on the graph at each point in time. Similarly, during rounds $e+1,\ldots,e+d$ we don't add/remove white pebbles and $P_i^B$ is defined to only include nodes on which a black pebble can be legally pebbled. Thus, rules (2) and (4) are satisfied during all rounds.  

It remains to verify that $P_{e+d}^W \cup P_{e+d}^B = V$. To see this we note that at round $e$ we have $\depth(G-(P_e^W \cup P_e^B)) = \depth(G-S) \leq d$. We now observe that during each subsequent round the depth is reduced by $1$ i.e.,  for $e < i \leq d$ we have $\depth(G-(P_i^W \cup P_i^B)) \leq \depth(G- (P_{i-1}^W \cup P_{i-1}^B) -1$. It follows that $\depth(G-(P_{e+d}^W \cup P_{e+d}^B)) \leq 0$, which can only be true if $P_{e+d}^W \cup P_{e+d}^B = V$.

To validate the last claim we simply observe that any DAG $G$ with $\indeg(G)=O(1)$ is $(e,d)$-reducible with $e= O\left(\frac{n \log \log n}{\log n}  \right)$ and $d= O\left(\frac{n}{\log^2 n} \right)$~\cite{AB16}. 
\end{proofof}

\begin{remindertheorem}{\thmref{HighCCBW}} 
\ThmHighCCBW
\end{remindertheorem}

\begin{proofof}{\thmref{HighCCBW}} 
We first show that $\peb^{BW}_{cc}(G) \geq  \left(1/16-\epsilon/2  \right)n^2 $. Let $P = \left(P_0, P_1,\ldots,P_t\right) \in \Peb{G}^{BW}$ be given and for simplicity assume that $n/4$ is an integer. Let $B_i = \bigcup_{j \leq 4t/n} \left( P_{i+jn/4}^W \cup P_{i+jn/4}^B \right)$ for $i \in [n/4]$.  We claim that for each $i$ we have $\left| B_i \right| \geq  (3/4-2\epsilon) n$. If this holds then we have $\left(3/16-\epsilon/2  \right)n^2 \leq  \sum_{i \in [n/4]} \left|B_i \right| \leq \sum_{i \in [t]} \left|P_i^W \cup P_i^B \right|$, and the final claim will follow immediately from \thmref{ExtremeDepthRobust}..  

It remains to verify our claim. Consider the interval $[i+jn/4+1,i+(j+1)n/4-1]$ for some arbitrary $j$ and let $S =\bigcup_{r=i+jn/4+1}^{i+(j+1)n/4-1} P_r^W \setminus  P_{i+jn/4}^W$ denote the set of white pebbles placed on $G-B_i$ during this interval. Let $H = \ancestors_{G-B_i}(S)$. Because all white pebbles placed on $S$ were removed by round $i+jn/4$ we note that $H \subseteq \bigcup_{r=i+jn/4+1}^{i+(j+1)n/4-1} \left( P_r^W \cup P_r^B\right)$. Since, $H$ must have been pebbled completely during the interval this means that $\depth(H-S) \leq n/4$ since we never place white pebbles on nodes in $V(H-S)$. Thus, $H$ is $(n/4,n/4)$-reducible. On the other hand we note that, by depth-robustness of $G$, $H$ must be $(e,d)$-depth-robust for any $(e,d)$ such that $e+d \leq \left| V_H\right| - \epsilon n$. It follows that $\left| V_H\right|  \leq n(1/2+\epsilon)$. For any node $x \in V(G-B_i)$ that is pebbled during the interval $[i+jn/4+1,i+(j+1)n/4-1]$ the length of the longest path to $x$ in $G-B_i$ can be at most $\depth(V_H) + n/4 \leq \left| V_H\right| + n/4 \leq n(1/2+\epsilon)+n/4$. Thus, we have $\depth(G-B_i) \leq 3n/4+\epsilon n$. Since, $G$ is $(e,d)$-depth-robust for any $e+d \leq (1-\epsilon)n$ we must have $\left| B_i \right| \geq (1-\epsilon)n - 3n/4 - \epsilon n = n/4 -2\epsilon n$.

A similar argument shows that $\peb^{BW}_{cc}(G') \geq  \left(1/16-\epsilon/2  \right)n^2$. Since the argument requires some adaptations we repeat it below for completeness. 

Let $P = \left(P_0, P_1,\ldots,P_t\right) \in \Peb{G'}^{BW}$ be given and for simplicity assume that $n/4$ is an integer. Let $B_i' = \bigcup_{j \leq 4t/n} \left( P_{i+jn/4}^W \cup P_{i+jn/4}^B \right)$ for $i \in [n/4]$ and let $B_i = \{v \in [n] : [2v\d+1,2(v+1)\d] \cap B_i' \neq \emptyset \}$ be the corresponding nodes in original DAG $G$.  We claim that for each $i$ we have $\left| B_i' \right| \geq  (3/4-2\epsilon) n$. If this holds then we have 
\[ \left(3/16-\epsilon/2  \right)n^2 \leq  \sum_{i \in [n/4]} \left|B_i' \right|  \leq  \sum_{i \in [t]} \left|P_i^W \cup P_i^B \right| \ ,\]
so that $\peb^{BW}_{cc}(G') \geq  \left(1/16-\epsilon/2  \right)n^2$. The theorem follows immediately from \thmref{ExtremeDepthRobust} we can take $G$ to be an $(e,d)$-depth-robust DAG on $n$ nodes with $\indeg(G) = O(\log n)$. $G'$ is now an $n'=2n\d = O(n \log n)$ node DAG with  $\peb^{BW}_{cc}(G') = \Omega(n^2) = \Omega(n^2/\log^2 n)$. 

To verify that our claim holds consider the interval $[i+jn/4+1,i+(j+1)n/4-1]$ for some arbitrary $j$ and let $S =\bigcup_{r=i+jn/4+1}^{i+(j+1)n/4-1} P_r^W \setminus  P_{i+jn/4}^W$ denote the set of white pebbles placed on $G'-B_i'$ during this interval. Let $H' = \ancestors_{G'-B_i'}(S)$. Because all white pebbles placed on $S$ were removed by round $i+jm$ we note that $H' \subseteq \bigcup_{r=i+jn/4+1}^{i+(j+1)n/4-1} \left( P_r^W \cup P_r^B\right)$. Since, $H'$ must have been pebbled completely during the interval this means that $\depth(H'-S) \leq n/4$ since we never place white pebbles on nodes in $V(H'-S)$. Thus, $H'$ is $(n/4,n/4)$-reducible. On the other hand we consider the graph $H = \ancestors_{G-B_i}\left( \{ v: S \cap [2v\d+1,2(v+1)\d] \neq \emptyset \}\right)$. By depth-robustness of $G$, $H$ must be $(e,d)$-depth-robust for any $(e,d)$ such that $e+d \leq \left| V_H\right| - \epsilon n$. It follows that $\left| V_H\right|  \leq n(1/2+\epsilon)$. Furthermore, $H'$ is a subgraph of the indegree reduced version of $H$ so $\left| V_{H'}\right| \leq \d \left| V_H \right| \leq \d  n(1/2+\epsilon)$. 

 For any node $x \in V(G'-B_i')$ that is pebbled during the interval $[i+jn/4+1,i+(j+1)n/4-1]$ the length of the longest path to $x$ in $G'-B_i'$ can be at most $\depth(V_{H'}) + n/4 \leq \left| V_{H'}\right| + n/4 \leq \d n(1/2+\epsilon)+n/4$. Thus, we have $\depth(G'-B_i') \leq \d n (1/2+\epsilon) + n/4$.  Since, $G$ is $(e,d)$-depth robust for any $e+d \leq (1-\epsilon)n$ it follows from \lemmref{IndegRed} that $G'$ is $(e,d\d)$-depth-robust for any $e+d \leq (1-\epsilon)n$~\cite{ABP17}. Therefore, we have $\left| B_i' \right| \geq (1-\epsilon)n - n/2 - \epsilon n - n/(4\d) \geq n/4 -2\epsilon n$.
\end{proofof}

\begin{theorem} \cite{PTC76} \thmlab{PTC}
There is a family of DAGs $\{G_n\}_{n=1}^\infty$ with $\indeg\left(G_n\right) = 2$ with the property that for some positive constants $c_1,c_2,c_3> 0$  such that for each $n \geq 1$  the set $S = \{ v : \parents(G_n) = \emptyset\}$ of sources has size $\left| S\right| \leq c_1n/\log n$ and for any legal pebbling $P=\left(P_1,\ldots,P_t\right) \in \Peb (G_n)$ there is an interval $[i,j] \subseteq [t]$ during which at least $c_2 n/\log n$ nodes in $S$ are (re)pebbled (formally, $\left| S \cap \bigcup_{k =i}^j P_k - P_{i-1} \right| \geq c_2n/\log n$)  and at least $c_3n/\log n$ pebbles are always on the graph ($\forall k \in [i,j], \left| P_k \right| \geq c_3n/\log n$). 
\end{theorem}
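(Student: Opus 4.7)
The plan is to follow the classical Paul--Tarjan--Celoni construction of $G_n$ via iterated composition of constant-indegree superconcentrators, using for instance the explicit superconcentrators of Gabber--Galil. At each of $\Theta(\log n)$ recursion levels, a fresh superconcentrator is stacked on top of the previous graph, connecting a shrinking set of sources through dense vertex-disjoint paths to a single sink. Indegree $2$ is enforced either by choosing constant-indegree superconcentrators and applying a local gadget, or by post-processing with \lemmref{IndegRed}. The source-count bound $|S| \le c_1 n/\log n$ falls out of the recursion: the number of sources at each level is a constant fraction of the number of sinks from the level below, while the total size grows geometrically, so the ratio of sources to nodes is $\Theta(1/\log n)$.

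The substantive content is the interval property, which requires the PTC76 bottleneck argument. The starting point is the space lower bound: any legal sequential pebbling of $G_n$ must satisfy $\max_k |P_k| \ge c_3' n/\log n$ for some constant $c_3' > 0$. This is proved by induction on the recursive construction, using the fact that a superconcentrator with $k$ active input/output pairs cannot be pebbled while keeping fewer than $\Omega(k)$ pebbles on the graph at every moment, because its vertex-disjoint paths create a cut of width $\Omega(k)$ that any pebbling front must cross simultaneously. Given this bound, I would select $i$ to be the first time that $|P_i| \ge c_3 n/\log n$ for an appropriate $c_3 < c_3'$, and take $j$ to be the last index of the maximal contiguous ``high-load'' window containing $i$, so that $|P_k| \ge c_3 n/\log n$ for all $k \in [i,j]$.

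The main obstacle is showing that $\Omega(n/\log n)$ distinct source (re)pebbling events occur inside $[i,j]$. My approach is contrapositive: assume the set $T \subseteq S$ of sources pebbled during $[i,j]$ has $|T| = o(n/\log n)$. Then the restriction of $P$ to the window $[i,j]$ can be reinterpreted as a pebbling of an associated subgraph in which every source outside $T$ must already be represented among the at most $|P_{i-1}| < c_3 n/\log n$ pebbles present at time $i-1$. Combining this ``pre-stored'' budget with $|T|$, the total amount of source-side information available throughout $[i,j]$ is $o(n/\log n)$, which is insufficient to cross the outermost superconcentrator cut; this contradicts the inductive space lower bound applied one level down. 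The delicate step, and where the original PTC76 argument spends most of its effort, is aligning $[i,j]$ with a single recursion level so that the inductive hypothesis applies cleanly — a careful case split is required when the high-load window straddles multiple level transitions, handled by picking the innermost recursion level at which the threshold-crossing actually occurs. Once the sequential statement is in hand, \thmref{PTCParallel} follows by applying the $\seq$ transform to any parallel pebbling and translating the resulting sequential interval back to the parallel timeline.
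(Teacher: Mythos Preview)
The paper does not prove this theorem. It is stated purely as a citation of the result from \cite{PTC76}; the paper's own contribution begins only with \thmref{PTCParallel}, where the sequential statement is lifted to parallel pebblings via the $\seq$ transform. So there is no proof in the paper to compare your proposal against.

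As a sketch of the original PTC76 argument, your outline is in the right spirit (recursive superconcentrator stacks, source count $\Theta(n/\log n)$, bottleneck argument), but it remains a plan rather than a proof. In particular, your choice of $[i,j]$ as ``the maximal contiguous high-load window containing the \emph{first} time space reaches $c_3 n/\log n$'' is not how PTC76 locates the interval, and it is not clear that this specific window must see many source repebbles: the first threshold crossing could occur while pebbling inside a single recursion level with few sources touched, and the window could end before the superconcentrator cut forces new source work. You acknowledge that the real work is aligning $[i,j]$ with a recursion level via a case split, but that is precisely the content of the theorem, and your contrapositive (``too few sources implies not enough information to cross the cut'') needs the interval to be chosen relative to the recursive structure, not relative to a global space threshold. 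Finally, your last sentence about deriving \thmref{PTCParallel} is outside the scope of the statement you were asked to prove.
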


\begin{remindertheorem}{\thmref{PTCParallel}}
\thmPTCParallel
\end{remindertheorem}

\begin{proofof}{\thmref{PTCParallel}}
The family of DAGs $\{G_n\}_{n=1}^\infty$ is the same as in \thmref{PTC}. Similarly, let $c_1,c_2,c_3 > 0$ denote the constants from \thmref{PTC} and let $S \subseteq V$ be the set of $\left| S\right| \leq c_1n/\log n$ nodes from \thmref{PTC}. 

Let $P=\left(P_1,\ldots,P_t\right) \in \pPeb_{G_n}$ be any pebbling of $G_n$. We consider the sequential transform $P' = \seq(P) \in \Peb_{G_n}$ from \defref{seq}. Recall that $P'=\left(P_1',\ldots,P_{A_t} \right)$ where $A_k = \sum_{i=1}^k \left| P_i \setminus P_{i-1}\right|$ (and $P_0 \doteq \emptyset$).  By \lemmref{seqparspace} $P'$ is a legal sequential pebbling $P' \in \Peb(G_n)$. Furthermore, we note that $P_{A_i}' = P_i$ for all $i \leq t$ and that $P_i \subset P_{A_i+k}' \subseteq P_{i} \cup P_{i+1}$ for each $i \leq k$ and $k \leq \left|P_{i+1}\setminus P_i\right|$.  

Let $t_2^*$ denote the maximum value such that there exists an interval $[t_1^*,t_2^*] \subseteq [A_t]$ such that 

\begin{equation} \label{cond1}
\left| S \cap \bigcup_{k =t_1^*}^{t_2^*} P_k' - P_{k-1}' \right| \geq c_2n/\log n\ , \end{equation}
 and \begin{equation} \label{cond2} \forall k \in [t_1^*,t_2^*], \left| P_k' \right| \geq c_3n/\log n \ .\end{equation} Observe that by \thmref{PTC} $t_2^*$ must exist. Having fixed $t_2^*$ let $t_1^* < t_2^*$ denote the minimum value such that the above properties hold for the interval $[t_1^*,t_2^*]$. 

We first claim that $t_2^* = A_j$ for some $j \leq t$. Suppose instead that $t_2^* = A_j+k$ for $0 < k < \left|P_{j+1}\setminus P_j\right|$. In this case, we have $P_{t_2^*}' \subsetneq P_{t_2^*+1}'$ which implies that $\left|P_{t_2^*+1}' \right| \geq \left| P_{t_2^*}' \right| \geq c_3'n/\log n$. Furthermore, $\left| S \cap \bigcup_{k =t_1^*}^{t_2^*+1} P_k' - P_{k-1}' \right| \geq  \left| S \cap \bigcup_{k =t_1^*}^{t_2^*} P_k' - P_{k-1}' \right| \geq c_2'n/\log n$ so the interval $[t_1^*,t_2^*+1]$ satisfies conditions \ref{cond1} and \ref{cond2} above. This contradicts the minimality of $t_2^*$.  

Now suppose that $t_1^* = A_i+k$ for some $0 \leq i \leq t$ and $a_{i+1} > k \geq 0$ and consider the interval $[i+\mathbbm{1}_{k>0},j]$.  

If $k > 0$ we have  the interval $[i+1,j]$. We note that $t_1^* < t_2^*$ and thus $i+1 \leq j$. Now
\[\left| S \cap \bigcup_{x =i+1}^{j} P_x - P_{i} \right|=\left| S \cap \bigcup_{x =A_{i+1}}^{A_j} P_x' - P_{A_i}' \right| \geq \left| S \cap \bigcup_{x =t_1^*}^{t_2^*} P_x' - P_{t_1^*-1}' \right| \geq c_2'n/\log n \  \]
where the second to last inequality follows because $P_{t_1^*-1} = \bigcup_{x=A_i}^{t_1^*-1} P_x$. Furthermore, for each $i < x \leq j$ we know that $\left|P_x\right| = \left|P_{A_x}\right| \geq c_3n/\log n$ since $t_1^* \leq A_x \leq t_2^*$. Thus, the interval $[i+1,j] \subseteq [t]$ satisfies both required properties. 

If instead $k = 0$ we have the interval $[i,j]$. In this case 
\[\left| S \cap \bigcup_{x =i}^{j} P_x - P_{i-1} \right|=\left| S \cap \bigcup_{x =A_{i}}^{A_j} P_x' - P_{A_{i-1}}' \right| =\left| S \cap \bigcup_{x =t_1^*}^{t_2^*} P_x' - P_{A_{i-1}}' \right| \geq \left| S \cap \bigcup_{x =t_1^*}^{t_2^*} P_x' - P_{t_1^*-1}' \right| \geq c_2'n/\log n \ , \]
where the second to last inequality follows since $P_{t_1^*-1}'=P_{A_{i-1} + a_i-1} \supset P_{A_{i-1}}$.  Furthermore, for each $i \leq  x \leq j$ we know that $\left|P_x\right| = \left|P_{A_x}\right| \geq c_3n/\log n$ since $t_1^* \leq A_x \leq t_2^*$. \QED

\end{proofof}

\begin{claim} \claimlab{LDImpliesSSOne}
Let $G_n^\epsilon$ be an DAG with nodes $V\left(G_n^\epsilon \right)=[n]$, indegree $\d = \indeg\left( G_n^\epsilon\right)$ that is $(an,bn)$-depth robust for all constants $a,b>0$ such that $a+b\leq 1-\epsilon$, let $G$ be the indegree reduced version of $G_n^\epsilon$ from \lemmref{IndegRed} with nodes and $\indeg(G) = 2$ and let $P=(P_1,\ldots,P_t) \in \pPeb_G$ be a legal pebbling of $G$ such that during some round $i$ the length of the longest unpebbled path in $G$ is at most $\depth\left(G-P_i\right) \leq c\d n$ for some constant $1 > c>0$. Then $\peb_\Css\left(P, n(1-\epsilon-2c) \right) \geq c \d n$.
\end{claim}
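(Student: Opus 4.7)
My plan is to mimic the argument used for \claimref{LDImpliesSSTwo}, specialized to the ``unrestricted'' setting where the target set is effectively all of $[n]$ rather than a subset $T$. The essential idea is that at round $i$ the longest unpebbled path in $G$ has shrunk to $\leq c\d n$, but $G$'s original depth was close to $(1-\epsilon)\d n$, and depth can fall by at most one per pebbling round. So looking backward from round $i$, for every round $r$ in a long window the unpebbled depth is still bounded, and depth-robustness of the underlying $G_n^\epsilon$ forces many pebbles on the graph during the whole window.

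In more detail, first I would identify the window. By $(0,(1-\epsilon)n)$-depth robustness of $G_n^\epsilon$ together with \lemmref{IndegRed}, $\depth(G) \geq (1-\epsilon)n\d$, and since the depth of the unpebbled subgraph can decrease by at most one per round we have $i \geq (1-\epsilon-c)n\d \geq c\d n$ (the last inequality uses $n(1-\epsilon-2c)>0$, which is implicit in the statement). Define the window $W=[\,i-c\d n+1,\,i\,]$; for every $r\in W$, monotonicity of depth-decrease gives $\depth(G-P_r) \leq \depth(G-P_i) + (i-r) < 2c\d n$.

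Next I would lift pebbling configurations back to $G_n^\epsilon$. For each $r\in W$, let $P_r' = \{v\in[n]:P_r\cap[2(v-1)\d+1,2v\d]\neq\emptyset\}$ be the set of ``heavy'' blocks, so that $|P_r|\geq|P_r'|$. The indegree-reduction property from \lemmref{IndegRed} lets any path in $G_n^\epsilon-P_r'$ be expanded to a path of $\d$ times the length in $G-P_r$, so
\[
\d\cdot\depth(G_n^\epsilon-P_r') \;\leq\; \depth(G-P_r) \;<\; 2c\d n,
\]
i.e.\ $\depth(G_n^\epsilon-P_r')<2cn$. On the other hand, $(|P_r'|,\,n-|P_r'|-\epsilon n)$-depth robustness of $G_n^\epsilon$ yields $\depth(G_n^\epsilon-P_r')\geq n-|P_r'|-\epsilon n$. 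Combining these two bounds gives $|P_r|\geq|P_r'|\geq n(1-\epsilon-2c)$ for every $r\in W$. Since $|W|=c\d n$, this establishes $\peb_\Css(P,n(1-\epsilon-2c))\geq c\d n$.

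The only non-routine step is verifying that the backward window $W$ really sits inside $[1,i]$, which is why I would begin by extracting the initial-depth lower bound on $G$ from depth-robustness plus the indegree-reduction lemma; everything else is a direct application of the same depth-robustness/indegree-reduction contrapositive already used in \claimref{LDImpliesSSTwo}. I do not foresee any genuine difficulty beyond bookkeeping, since the present claim is the ``$T=[n]$'' specialization of that earlier claim and inherits the same proof skeleton.
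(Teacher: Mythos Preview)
Your proposal is correct and follows essentially the same approach as the paper's proof. The only cosmetic difference is how the backward window is identified: the paper defines $k<i$ as the last round with $\depth(G-P_k)\geq 2c\d n$ and then argues $i-k\geq c\d n$, whereas you fix the window $[i-c\d n+1,i]$ up front and justify its validity via the initial-depth lower bound $\depth(G)\geq(1-\epsilon)n\d$; in both cases the core step is the same depth-robustness/indegree-reduction contrapositive bounding $|P_r|\geq|P_r'|\geq n(1-\epsilon-2c)$ on every round in the window.
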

\begin{proof}
Let $k < i$ be the last pebbling step before $i$ during which the length of the longest unpebbled path at time $k$ is at most  $\depth(G-P_k) \geq  2cn\d$. Observe that $k-i \geq \depth(G-P_k) - \depth(G-P_i) \geq cn\d$ since we can only decrease the depth by at most one in each pebbling round. In particular, $\depth(G-P_k) = 2c\d$ since $1+\depth(G-P_{k}) \leq  \depth(G-P_{k+1}) < 2cn\d$. Let $r \in [k,i]$ be given then by construction we have $\depth\left(G- P_r\right) \leq 2cn\d$. Let $P_r' = \{v \in V(G_n^\epsilon): P_r \cap [2\d(v-1)+1,2\d v] \neq \emptyset \}$ be the set of nodes $v$ in $G_n^\epsilon$ such that the corresponding path $2\d(v-1)+1,\ldots,2\d v$ contains no pebble at time $r$. Exploiting the properties of the indegree reduction from \lemmref{IndegRed} we have \[\depth\left(G_n^\epsilon - P_r'\right) \d \leq \depth\left(G - \bigcup_{v \in P_r'} [2\d(v-1)+1,2\d v]\right) \leq \depth\left(G - P_r\right) \leq 2cn\d \ .   \]
 Now by depth-robustness of $G_n^\epsilon$  we have 
\[ \left| P_r' \right| \geq (1-\epsilon)n - \depth\left(G_n^\epsilon- P_r'\right) \geq n-\epsilon n - 2cn \ .  \]
Thus,  $\left| P_r\right| \geq \left| P_r' \right| \geq n(1-\epsilon-2c)$ for each $r \in [k,i]$. It follows that  $\peb_\Css\left(P, n(1-\epsilon-2c) \right) \geq c \d n$.
\end{proof}

\begin{reminderlemma}{\lemmref{IndegRed}}\cite[Lemma 1]{ABP17}  
\indegReductionLemma
\end{reminderlemma}
The proof of \lemmref{IndegRed} is essentially the same as \cite[Lemma 1]{ABP17}. We include it here is the appendix for completeness. 

\begin{proofof}{\lemmref{IndegRed}}
We identify each node in $V'$ with an element of the set $V\times[2\d]$ and we write $\node{v}{j} \in V'$. For every node $v\in V$ with $\a_v := \indeg(v) \in [0,\d]$ we add the path $p_v=(\node{v}{1}, \node{v}{2}, \ldots,\node{v}{2\d})$ of length $2\d$. We call $v$ the {\em genesis node} and $p_v$ its {\em metanode}. In particular $V'= \cup_{v\in V} p_v$. Thus $G$ has size at most $(2\d)n$.

Next we add the remaining edges. Intuitively, for the $i$\th incoming edge $(u,v)$ of $v$ we add an edge to $G'$ connecting the end of the metanode of $u$ to the $i$\th node in the metanode of $v$. More precisely, for every $v\in V$, $i\in[\indeg(v)]$ and edge $(u_i,v)\in E$ we add edge $(\node{u_i}{2\d}, \node{v}{i})$ to $E'$. It follows immediately that $G'$ has indegree (at most) $2$. 

Fix any node set $S'\subset V'$ of size $|S'| \le e$. Then at most $e$ metanodes can share a node with $S'$. Let $S = \{v~:~\exists j \in [2\d]~\mathbf{s.t.} \node{v}{j} \in S' \}$ denote the set of genesis nodes in $G$ whose metanode shares a node with $S'$ and observe that $|S| \leq |S'|$.  For each such metanode remove its genesis node in $G$. Let $p = (v_1,\ldots, v_k)$ be a path in $G-S$. After removing nodes $S'$ from $G'$ there must remain a corresponding path $p'$ in $G'$ running through all the metanodes of $p$ and $|p'| \geq |p|\d$ since for each $v_j$ $p'$ at minimum contains the nodes $\node{v_j}{\d},\ldots,\node{v_j}{2\d}$. In particular, $G'$ must be $(e,d \d)$-depth robust.


\end{proofof}

\subsection{Proof of Pebbling Reduction}\applab{pebproof}

\begin{remindertheorem}{\thmref{pebred}}
\PebblingReductionThm
\end{remindertheorem}

\begin{proofof}{\thmref{pebred}}[Sketch]
 We begin by describing the simulator $\simul$ for $\rv = (q,s,t)$. Recall that it can make up to $\b(\rv)$ calls to $f_n^{(h')}$ (where $h'\from \Hc$ is uniform random). Essentially $\simul$ runs a copy of algorithm $\A$ on an emulated PROM device parametrized by resource bounds $\rv$. For this $\simul$ emulates a RO $h \in \Hc$ to $\A$ as follows. All calls to $h$ are answered consitently with past calls. If the query $\bar{x}$ has not previously been made then $\simul$ checks if it has the form $\bar{x}=(x,u,\l_1,\l_2)$ where all of the following three conditions are met:
 \begin{enumerate}
    \item
    $u = v_{\out}$ is the sink of $G$,
    
    \item
    $\l_1$ and $\l_2$ are the labels of the parents of $u$ in $G$ in the $(h,x)$-labeling of $G$,
    
    \item
    $\A$ has already made all other calls to $h$ for the $(h,x)$-labeling of $G$ in an order respecting the topological sorting of $G$.
 \end{enumerate}
 \noindent
 We call a query to $h$, for which the first two conditions are valid, an \emph{$h$-final call (for $x$)}. If the third condition also holds then we call the query a \emph{sound} final call. Upon such a fresh final call $\simul$ forwards $x$ to $f_n^{(h')}$ to obtain response $y$. It records $(\bar{x},y)$ in the function table of $h$ and returns $x$ to $\A$ as the response to its query. If the response from $f_n^{(h')}$ is $\bot$ (because $\simul$ has already made $\b(rv)$ queries) then $\simul$ outputs $\bot$ to the distinguisher $\dist$ and halts. We must show that $\dist$ can not tell an interaction with such an ideal world apart from the real one with greater than probability $\eps(\lv,\rv,n)$.

 Next we generalize the pebbling game and notion of sustained space to capture the setting where multiple identical copies of a DAG $G$ are being pebbled. In particular, in our case, when $P=(P_0, P_1,\ldots)$ is a pebbling of $m$ copies of $G$ then we define the {\em $s$-block memory} complexity is defined to be $\ppeb_\Cbm(P) = \sum \flr{|P_i|/s}$. It follows immediatly that $G_m$ consists of $m$ independent copies of $G$ then $\ppeb_\Cbm(G_m) \ge m*\ppeb_\Css(G)$.
 
 The next step in the proof describes a mapping between executions of a pROM algorithm $\A$ and a pebbling of multiple copies of $G$ called the {\em ex-post-facto} pebbling of the execution. This technique was first used in~\cite{DNW05} and has been used in several other pebbling reductions~\cite{DKW11,AS15,AT17}. For our case the mapping is identical to that of~\cite{AT17} as are the following two key claims. The first states that with high probability (over the choice of coins for $\A$ and choice of the random oracle $h$) if $\A$ computed $m$ outputs (of distinct inputs) for $f_n^{(h)}$ then ex-post-facto pebbling of that execution will be a legal and complete pebbling of $m$ copies of $G$. The second claim goes as follows.
 
 \begin{claim}\claimlab{bounded}
  Fix any input $x_{\inp}$. Let $\s_i$ be the i\th input state in an execution of $\A^h(x_{\inp};\coins)$. Then, for all $\lambda \ge 0$,
  $$\Pr\left[\forall i ~:~ \sum_{x \in X} \abs{P^x_i} \le \frac{\abs{\s_i} + \lambda}{w - \log(q_r)}\right] > 1 - 2^{-\lambda}$$
  over the choice of $h$ and $\coins$.
 \end{claim}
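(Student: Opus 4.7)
The plan is to apply a predictor/compression argument to the random oracle, in the same spirit as the ex-post-facto pebbling technique of Dwork--Naor--Wee and Alwen--Serbinenko. The intuition is that each pebble $v \in P_i^x$ corresponds to a label $\ell_v^x$ that is a fresh random $w$-bit string whose value the algorithm ``remembers'' at step $i$ in the sense that the defining $h$-query for $\ell_v^x$ has already been made but will not be re-made; hence roughly $w$ bits of information about that label must be encoded somewhere in $\s_i$. Summing over pebbled nodes gives the claimed bound, with the $\log(q_r)$ correction coming from using query indices to identify which label is which.

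First I would make the ex-post-facto pebbling precise: with $\coins$ and $h$ fixed, a node $v$ is placed in $P_i^x$ at step $i$ if $\A^h(x_{\inp};\coins)$ has already queried $h$ on the ``canonical'' input encoding $\ell_v^x$ (for the $(h,x)$-labeling of $G$) at some step $j \le i$, and $\A$ will use $\ell_v^x$ as an argument to a later $h$-query without re-performing the defining query for $\ell_v^x$ in the interim. Let $k_i := \sum_{x \in X}|P_i^x|$.

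Second, for a fixed step $i$ I would build an encoder/decoder pair that, given $h$, compresses the collection of pebbled-label values at step $i$. The encoder writes down: (a) the input state $\s_i$, (b) for each of the $k_i$ pebbled labels, a $\lceil\log q_r\rceil$-bit pointer to the index of the defining $h$-query in the transcript of the first $i-1$ steps, and (c) the ordered list of all answers to the remaining $h$-queries not pointed to. The decoder, using $\s_i$ and the non-pointed answers, replays $\A$'s execution to reconstruct the full query transcript, then uses the pointers to identify each pebbled label inside that transcript. This allows the decoder to reconstruct the $k_i$ chosen $w$-bit responses, which would otherwise have required $k_i \cdot w$ bits. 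Hence the encoding saves $k_i(w - \log q_r) - |\s_i|$ bits relative to the trivial encoding of $h$.

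Third, I would invoke the standard ``random oracles are incompressible'' bound: for any encoding scheme, the probability (over $h$ and $\coins$) that the scheme saves more than $\lambda$ bits is at most $2^{-\lambda}$. If the event in the claim fails, then at some $i$ we have $k_i(w - \log q_r) > |\s_i| + \lambda$, so the above encoder saves more than $\lambda$ bits. The main obstacle is handling the ``for all $i$'' quantifier without losing a $\log q_r$ factor: I would deal with this by having the encoder also write down the offending index $i$, and then absorbing the $O(\log q_r)$ overhead into the slack already provided by the separation between $w$ and $\log q_r$ (alternatively, only ``query rounds'' $i$ are relevant and one can fold the union bound into the same compression argument by letting the encoder choose the step $i^\star$ that maximizes $k_i(w-\log q_r) - |\s_i|$). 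This is the step that requires the most care, since the ex-post-facto definition refers to the entire future of the execution while the encoder only has access to the prefix up to step $i$, so one must verify that the decoder can correctly identify which queries correspond to pebbled labels from the prefix alone.
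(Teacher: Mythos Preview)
The paper does not actually prove this claim; the proof sketch of \thmref{pebred} says the ex-post-facto mapping and both key claims are ``identical to that of~\cite{AT17}'' and gives no further argument. Your compression/predictor outline is exactly the technique used in \cite{AT17} and the earlier \cite{DNW05,DKW11,AS15}, so you are reproducing the intended proof.

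One orientation in your encoder needs to be flipped, however. The $\lceil\log q_r\rceil$-bit pointer for each pebbled $v$ must point to the first query \emph{after} step $i$ in whose input the label $\ell_v^x$ appears (the ``use'' query), not to the defining query in the first $i-1$ steps: the decoder simulates $\A$ forward from $\s_i$, so it never sees the pre-$i$ transcript and cannot dereference a pointer into it. By the definition of $v\in P_i^x$, the defining query for $\ell_v^x$ is not re-asked between step $i$ and that use query, so the decoder (answering intermediate queries from the supplied non-special $h$-values) reaches the use query, reads $\ell_v^x$ off its input, and thereafter can answer any re-occurrence of the defining query. Your closing worry is reversed for the same reason: the encoder knows $h$ and $\coins$, hence the whole execution, and can compute $P_i^x$ exactly; it is the decoder that sees only $\s_i$ and the future, which is precisely why the pointers must point forward. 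With that fix the argument is the standard one, and encoding the single offending index $i$ is indeed how the ``for all $i$'' is handled in the cited proofs.
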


 In particular this implies that the size of each individual state in the pROM execution can be upper-bounded by the number of pebbles in the corresponding ex-post-facto pebbling. More generally, the $s$-block memory complexity of the ex-post-facto pebbling gives us an lower-bound on the $s$-SMC of the execution. Since the block memory complexity of a graph can be lowerbounded by the sustained space complexity of the graph these results lead to a lowerbound on $s$-sustained memory complexity of the graph function in terms of the $s$-sustained space complexity of $G$.
\end{proofof}

\end{document}